\apptocmd{\thebibliography}{\raggedright}{}{}
\definecolor{mycomment}{rgb}{0.05,0.35,0.05}
\definecolor{mykeyword}{HTML}{4c1b54}
\definecolor{lightgreen}{HTML}{d9ead3}
\definecolor{lightblue}{HTML}{cfe2f3}
\definecolor{lightpink}{HTML}{ead1dc}
\definecolor{lightyellow}{HTML}{fff2cc}
\bfseries\color{mykeyword},
\newcommand{\define}{\triangleq}
\newcommand{\SL}{{\textsc{SL}}}
\newcommand{\Nat}{\textnormal{\texttt{Nat}}}
\newcommand{\A}{\mathcal{A}}
\newcommand{\domain}{\mathcal{D}}
\newcommand{\interp}{\mathcal{I}}
\newcommand{\structure}[1]{\text{STRUCT}\brackets{#1}}
\newcommand{\sat}{\vDash}
\newcommand{\nsat}{\nvDash}
\newcommand{\To}{\Rightarrow}
\newcommand{\bigland}{\bigwedge}
\newcommand{\argmin}{\mathop{\arg\min}}
\newcommand{\dosub}[2]{#2 / #1}
\NewDocumentCommand{\sub}{O{x}O{\alpha}}{
	\brackets{ \dosub{#1}{#2} }%
}
\NewDocumentCommand{\subs}{O{}O{r}O{x}O{\alpha}O{}}{%
	\brackets{ #1 \dosub{#3_1}{#4_1}, \dots, \dosub{#3_{#2}}{#4_{#2}} #5 }
}
\NewDocumentCommand{\subsi}{O{}O{r}O{x}O{\alpha}O{}}{%
	\brackets{ #1 \dosub{#3_1}{#4_{i_1}}, \dots, \dosub{#3_{#2}}{#4_{i_{#2}}} #5 }
}
\newcommand{\ifelsedef}[3]
{
	\left\{
	\begin{array}{ll}
		#1 & \mbox{if } #2 \\
		#3 & \mbox{otherwise}
	\end{array}
	\right.
}
\newcommand{\sort}[1]{\boldsymbol{#1}}
\Crefname{formula}{Formula}{Formulas}
\Crefname{nclaim}{Claim}{Claims}
\Crefname{cong}{Congruence Condition}{Congruence Conditions}
\spnewtheorem{observation}{Observation}{\itshape}{\rmfamily}
\spnewtheorem{nclaim}{Claim}{\itshape}{\rmfamily}
\NewDocumentEnvironment{longproof}{m}
{%
	\setcounter{proofpart}{0}%
	\setcounter{proofstep}{0}%
	\setcounter{proofcase}{0}%
	\noindent\underline{\textbf{\textit{Proof of} {#1}}}%
	\par\nobreak\smallskip%
	\@afterheading
}
{\begin{center}\textit{Q.E.D.} {#1}.\end{center}}
\newcounter{proofpart}
\newcommand{\proofpart}[1]{%
	\stepcounter{proofpart}%
	\noindent\textbf{Part \theproofpart: #1}\par\nobreak\smallskip
	\setcounter{proofstep}{0}
	\setcounter{proofcase}{0}
	\@afterheading
}
\newcounter{proofstep}
\newcommand{\proofstep}[1]{%
	\par
	\stepcounter{proofstep}%
	\noindent\underline{Step 
		\ifnum0<\value{proofpart}%
		\theproofpart.%
		\fi
		\theproofstep}: #1\par\nobreak\smallskip
	\setcounter{proofcase}{0}
}
\newcounter{proofcase}
\newcommand{\proofcase}[1]{%
	\par
	\stepcounter{proofcase}%
	\noindent\textit{Case 
		\ifnum0<\value{proofpart}%
		\theproofpart.%
		\fi
		\ifnum0<\value{proofstep}%
		\theproofstep.%
		\fi
		\theproofcase}: #1\par\nobreak\smallskip
}
\newcommand{\atexttt}[1]{\textnormal{\texttt{#1}}}
\newcommand{\total}{\atexttt{total}}
\newcommand{\Coin}{\atexttt{Coin}}
\newcommand{\Addr}{\atexttt{Address}}
\newcommand{\ac}{\atexttt{active}}
\newcommand{\hc}{\atexttt{has-coin}}
\newcommand{\mF}{\mathcal{F}}
\newcommand{\bal}{\atexttt{bal}}
\newcommand{\parens}[1] { \left( #1 \right)}
\newcommand{\pair}[2]{\parens{#1,#2}}
\newcommand{\braces}[1]{ \left\{ #1 \right\} }
\newcommand{\brackets}[1]{ \left[ #1 \right] }
\newcommand{\abs}[1]{
	\left|#1\right|
}
\newcommand{\N}{\mathbb{N}}
\newcommand{\liff}{\leftrightarrow}
\newcommand{\ind}{\atexttt{idx}}
\newcommand{\inv}{\atexttt{inv}}
\newcommand{\old}{\atexttt{old-}}
\newcommand{\new}{\atexttt{new-}}
\newcommand{\world}{\atexttt{world}}
\newcommand{\oldworld}{\old\world}
\newcommand{\newworld}{\new\world}
\newcommand{\mint}{\atexttt{mint}}
\newcommand{\transfer}{\atexttt{transferFrom}}
\newcommand{\msum}{\atexttt{sum}}
\newcommand{\oldsum}{\old\msum}
\newcommand{\newsum}{\new\msum}
\newcommand{\oldbal}{\old\bal}
\newcommand{\newbal}{\new\bal}
\newcommand{\mcount}{\atexttt{count}}
\newcommand{\zbal}{\atexttt{z-bal}}
\newcommand{\zhc}{\atexttt{z-has-coin}}
\newcommand{\zact}{\atexttt{z-active}}
\newcommand{\zsum}{\atexttt{z-sum}}
\newcommand{\xhascoin}{\atexttt{x-has-coin}}
\newcommand{\xsum}{\atexttt{x-sum}}
\newcommand{\xbal}{\atexttt{x-bal}}
\newcommand{\xactive}{\atexttt{x-active}}
\newcommand{\ybal}{\atexttt{y-bal}}
\newcommand{\yhascoin}{\atexttt{y-has-coin}}
\newcommand{\yactive}{\atexttt{y-active}}
\newcommand{\ysum}{\atexttt{y-sum}}
\newcommand{\oldhc}{\old\hc}
\newcommand{\newhc}{\new\hc}
\newcommand{\oldact}{\old\ac}
\newcommand{\newact}{\new\ac}
\newcommand{\suma}{\sum_{a \in \Addr}}
\newcommand{\aco}{\ac_{\,\atexttt{1}}}
\newcommand{\act}{\ac_{\,\atexttt{2}}}
\newcommand{\hco}{\hc_{\,\atexttt{1}}}
\newcommand{\hct}{\hc_{\,\atexttt{2}}}
\newcommand{\ufenc}{\atexttt{uf}}
\newcommand{\intenc}{\atexttt{int}}
\newcommand{\natenc}{\atexttt{nat}}
\newcommand{\idenc}{\atexttt{id}}
\newcommand{\twocm}{two-counter machine}
\newcommand{\TwoCM}{Two-Counter Machine}
\newcommand{\twocms}{two-counter machines}
\newcommand{\pc}{\textsc{pc}}
\renewcommand{\orcidID}[1]{\href{http://orcid.org/#1}{\raisebox{-1.25pt}{\includegraphics{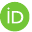}}}}
\begin{document}

%\title{Harnessing SMT Solvers for Reasoning About Summations in Smart
% Contracts}

\title{Summing Up Smart Transitions}

\author{%
Neta~Elad\inst{1}\orcidID{0000-0002-5503-5791}
\and
Sophie~Rain\inst{2}\orcidID{0000-0002-8940-4989}
\and
Neil~Immerman\inst{3}\orcidID{0000-0001-6609-5952}
\and
Laura Kov\'acs\inst{2}\orcidID{0000-0003-0845-5811}
\and
Mooly~Sagiv\inst{1}\orcidID{0000-0002-0723-1309}
}
\institute{%
	Tel~Aviv~University,~Israel
	\and
	TU Wien,~Austria
	\and
	UMass~Amherst,~USA
}

% make the title area
\maketitle

% As a general rule, do not put math, special symbols or citations
% in the abstract

\begin{abstract}
  %Context
Some of the most significant high-level properties of currencies are
the sums of certain account balances.
Properties of such sums can ensure the integrity of currencies and transactions.
For example, the sum of balances should not be changed by a transfer operation.
Currencies manipulated by code present a verification challenge to
mathematically prove their integrity by reasoning about computer
programs that
%manipulate
%LK: replaced 2nd occurence of "manipulate"
operate over them, e.g., in Solidity.
The ability to reason about sums is essential:
even the simplest ERC-20 token standard of the Ethereum community
provides a way to access the total supply of balances.

%Gap
Unfortunately, reasoning about code written against this interface is non-trivial:
the number of addresses is  unbounded,
and establishing global invariants like the preservation of the sum of the balances by operations like transfer requires higher-order reasoning.
In particular, automated reasoners do not provide ways to 
specify summations of arbitrary length.

%Innovation
In this paper, we present a generalization of first-order logic which can express 
the unbounded sum of balances. We prove the decidablity of one of our extensions and the
undecidability of a slightly richer one.  
We introduce first-order encodings to automate 
reasoning over software transitions with summations.
%enables applying existing SMT solvers to reason about summations.
We demonstrate the applicability of our results by using SMT solvers
and first-order provers for validating the correctness of common transitions in smart contracts.\\

This submission is an extended version of the CAV 2021 paper "Summing Up Smart Transitions", by N. Elad, S. Rain, N. Immerman, L. Kov\'acs and M. Sagiv. 

  %\keywords{}
\end{abstract}

%
	% ---- Bibliography ----
	%
	% BibTeX users should specify bibliography style 'splncs04'.
	% References will then be sorted and formatted in the correct style.

\section{Introduction}
%Deductive verification\cite{hoare-axiomatic-basis} is a technique for 
%mathematically proving that all executions of a given program satisfy 
%a correctness condition. %
%We are interested in the special case of formal verification where the 
%correctness condition is expressed by some logical formula 
%(``verification condition''). 
%The standard method for implementing deductive verification is by
%expressing the verification condition as a first-order formula and
%then harness SMT solvers for proving or disproving that formula.

A basic challenge in smart contract verification
is how to express the functional correctness of transactions,
such as currency minting 
or transferring between accounts.
Typically, the correctness of such a transaction can be verified by
proving that the transaction leaves 
the sum of certain account balances unchanged.

%\footnote{ERC stands for ``Ethereum Request for Comment'', and
%20 is the proposal identifier within the Ethereum community}

Consider for example the task of minting an unbounded number of tokens
in the simplified  ERC-20 token standard of the Ethereum
community~\cite{erc-20-standard},
as illustrated in \Cref{fig:mint-n-high-level}\footnote{
	The \old{} prefix denotes the value of a function before 
	the \mint{} transition,
	and the \new{} prefix denotes the value afterwards.
}.
This example deposits the minted amount ({\tt n}) 
into the receiver's address ({\tt a})
and we need to ensure that the \mint{} operation \emph{only} changed
the balance of the receiver.
To do so, in addition to 
(i) proving that the balance of the receiver has
been increased by {\tt n}, 
we also need to verify that 
(ii) the account balance 
of every user address {\tt a}$'$
different than {\tt a} has not been changed during the \mint{} operation 
and that
(iii) the \msum{} of all balances changed exactly by
the amount that was minted. 
The validity of these three requirements (i)-(iii), 
formulated as the post-conditions of \Cref{fig:mint-n-high-level},
imply its functional correctness.

\begin{figure}[tb]
	\begin{lstlisting}[gobble=8]
		a: Address
		n: Nat
		@\listingrule@
		mint(a,n)
		@\listingrule@
		# Post-conditions
		assert new-bal(a) = old-bal(a) + n @\hfill@ #(i)
		for each Address a@$'$@ @$\neq$@ a: @\hfill@ #(ii)
			assert new-bal(a@$'$@) = old-bal(a@$'$@)
		assert new-@sum@() = old-@sum@() + n @\hfill@ #(iii)
	\end{lstlisting}
	\caption{Minting $n$ Tokens in ERC-20.}
	\label{fig:mint-n-high-level}\vspace{-0.5cm}
\end{figure}

Surprisingly, proving formulas similar to the post-conditions of \Cref{fig:mint-n-high-level} is challenging for
state-of-the-art automated reasoners, such as SMT solvers~\cite{Z3,CVC4,Yices} and
first-order provers~\cite{Vampire,E,Spass}: it
requires reasoning that links local changes of the receiver ({\tt a})
with a global state capturing the \msum{} of all balances,
as well as constructing that global state as an aggregate of an unbounded but finite
number of \Addr{} balances. 
Moreover, our encoding of the problem uses
discrete coins that are minted and deposited, whose number is unbounded but finite as well.
  
  %\Cref{fig:mint-n-high-level} depicts an example procedure 
%that does exactly that.

%The ERC-20 Token Standard interface consists of three data-types: 
%\addr{}, \uint{} and \bool{}; 
%and six functions: ``totalSupply'', ``balanceOf'', ``allowance'', ``transfer'', 
%``approve'' and ``transferFrom'' (see \Cref{tbl:erc-20-api}).
%
%\begin{table}
%	\centering
%	\begin{tabular}{| l | l | c |}
%		\hline
%		\multicolumn{1}{|c|}{\textbf{Function}} & 
%		\multicolumn{1}{c|}{\textbf{Arguments}} & 
%		\textbf{Result} 
%		\\ \hline
%		totalSupply & - & \uint 
%		\\ \hline
%		balanceOf & \addr & \uint
%		\\ \hline
%		allowance & \addr, \addr & \uint
%		\\ \hline
%		transfer & \addr , \uint & \bool
%		\\ \hline
%		approve & \addr, \uint & \bool
%		\\ \hline
%		transferFrom & \addr, \addr, \uint & \bool
%		\\ \hline
%	\end{tabular}
%
%	\caption{ERC-20 Token Standard Interface}
%	\label{tbl:erc-20-api}
%\end{table}
%
%We focus on the ``totalSupply'', ``balanceOf'', ``transfer'' and ``transferFrom'' functions, 
%and on verification conditions that arise in programs that use this interface. 
%Specifically, we consider formulas that assert some state of the sum and the balances, 
%and the relation between the states before and after a transition.

%Both 
%crypto-currencies and currencies manipulated by code present a verif -
%cation challenge to mathematically prove the integrity of currencies by
%reasoning about computer programs that manipulate them, e.g., in Solidity.
%The ability to reason about sums is essential since even the simplest
%token standard ERC-201 provides a way to access the total supply.

In this paper we address  verification challenges of  
software transactions with aggregate properties, such  
%summations. We thus focus on verifying high-level, aggregate properties such 
as preservation of sums by transitions that manipulate low-level, individual
entities. 
Such properties are best expressed in higher-order logic, hindering 
the use of existing automated reasoners for proving them.
To overcome such a reasoning limitation,
we introduce \emph{Sum Logic} (\SL{}) as a generalization of
first-order logic, in particular of Presburger arithmetic.
Previous works~\cite{fmt,generalized-quantifiers,counting-quantifiers}
have also introduced extensions of first-order logic with
aggregates by counting quantifiers or generalized quantifiers. 
In Sum Logic (\SL{}) we only consider the special case of integer sums over
uninterpreted functions, allowing us to formalize \SL{} properties with and
about unbounded sums, in particular sums  of account
balances, without higher-order operations (Section~\ref{sec:SL}).
We prove the decidability of one of our \SL{} extensions  and the
undecidability of a slightly richer one
(Section~\ref{sec:decidability}).
Given previous results \cite{fmt}, our
	undecidability result is not surprising. In contrast, what may be unexpected is our decidability result and the
	fact that we can use our first-order fragment for a convenient and practical new way to verify the
	correctness of smart contracts.

We further introduce first-order encodings
which enable automated reasoning over software transactions with
summations in \SL{}
(Section~\ref{sec:encoding}). Unlike~\cite{extending-smt-to-hol},
where SMT-specific extensions supporting higher-order reasoning have
been introduced,  the logical encodings we propose allow one to use existing
reasoners without any modification. We are not restricted
to SMT reasoning, but can also leverage generic automated reasoners, such
as first-order theorem provers, 
supporting first-order logic. We believe our results ease applying
automated reasoning to smart contract verification even for
non-experts.

We demonstrate the practical applicability of our results by using SMT
solvers and first-order provers
for validating the correctness of common financial transitions appearing in
\emph{smart contracts} (\Cref{sec:exp}). 
We refer to these transitions as \emph{smart transitions}. 
We encode \SL{} into pure first-order logic by adding another sort that 
represents the tokens of the crypto-currency themselves (which we dub
``coins'').

Although the encodings of \Cref{sec:encoding} do not
        translate to our   
decidable \SL{} fragment from \Cref{sec:decidability},
our experimental results show that automated reasoning
        engines 
can handle them consistently and fast.
The decidability results of \Cref{sec:encoding} set
the boundaries for what one can expect to achieve,
while our experiments from  \Cref{sec:encoding}
        demonstrate that the unknown middle-ground
can still be automated.

While our work is mainly motivated by smart contract verification,
%programs implementing 
%financial transactions expressed in a language such as
%Solidity~\cite{CITE},
our results can be used for arbitrary software
transactions implementing sum/aggregate properties. 
%These programs are error-prone, causing already severe  financial
%losses, see e.g.~\cite{CITE}
Further, when compared to the smart contract verification framework
of~\cite{azureblockchain},
we note that we are not restricted to proving the correctness of
smart contracts as finite-state machines, but can deal with semantic
properties expressing financial transactions in smart contracts, such
as currency minting/transfers. 

While ghost variable approaches \cite{SRI} can reason about changes to the global state
(the sum),
our approach allows the verifier to specify only the local changes
and automatically prove the impact on the global state.

\paragraph{Contributions.}
In summary, this paper makes the following contributions:
\begin{itemize}
	\item 
		We present a generalization to Presburger arithmetic (\SL{}, in \Cref{sec:SL})
		that allows expressing properties about summations.
		We show how we can formalize verification problems of smart contracts in \SL{}.
	\item
		We discuss the decidability problem of checking
                validity of  \SL{} formulas
		 (\Cref{sec:decidability}):
		we prove that it is undecidable in the general case,
		but also that there exists a small decidable fragment.
		
	\item
		We show different encodings of \SL{} to first-order 
		logic (\Cref{sec:encoding}).
        To this end, we consider 
        theory-specific reasoning and variations of \SL{}, for
        example by replacing non-negative integer reasoning
        with term algebra properties.
                
	\item We evaluate our results with SMT solvers  and
	first-order theorem provers, by using 31 new benchmarks
	encoding smart transitions and their properties (\Cref{sec:exp}). 
	Our experiments demonstrate the applicability of our results
	within automated reasoning, in a fully automated manner,
	without any user guidance.
	
\end{itemize}

\section{Preliminaries}
%\subsection{Many-Sorted First-Order Logic {\color{orange} skip headline?}}

We consider many-sorted first-order logic (FOL) with equality, 
defined in the standard way. 
The equality symbol is denoted by $\approx$. 

We denote by $\structure{ \Sigma }$ the 
\emph{set of all structures} for the vocabulary $\Sigma$.
A structure $\A \in \structure{\Sigma}$ is
a pair $\pair{\domain}{\interp}$,
where for each sort $\sort{s}$, its domain in $\A$ is $\domain(\sort{s})$,
and for each symbol $S$, its interpretation in $\A$ is $\interp(S)$.
Note that \emph{models} of a formula $\varphi$ over a vocabulary
$\Sigma$ are structures $\A \in \structure{\Sigma}$. 

A \emph{first-order theory}  is a set of first-order formulas closed
under logical consequence.   We will consider, the
first-order theory of the natural numbers with addition. This is Presburger arithmetic (PA) which is
of course decidable~\cite{presburger-decidable}. \\
We write $\N$ to denote the set of natural numbers.
We consider $0 \in \N$ and write $\N^+$ to explicitly exclude $0$ from
$\N$. The vocabulary of PA is 
$
\Sigma_\text{Presburger} = \parens{ 0, 1, c_1, \dots, c_l, +^2 }
$, with all constants $0, 1, c_i$ of sort $\Nat$.
A structure $\A = (\domain, \interp) \in \structure{\Sigma_\text{Presburger}}$ is called 
a \emph{Standard Model of Arithmetic} when 
$\domain(\Nat) = \N$ and $+^2$ is interpreted as the standard
binary addition $+$ function over the naturals.  
The vocabulary $\Sigma_\text{Presburger}$  can be extended with a
total order relation, yielding 
$\Sigma^*_\text{Presburger} = \parens{ 0, 1, +^2, \leq^2 }$, 
where $\leq^2$ is interpreted as the binary relation $\leq$ in Standard Models of Arithmetic.

\section{Sum Logic (\SL{})}\label{sec:SL}
% Language for Expressing Sums}
We now define \emph{Sum Logic (\SL{})} as a generalization of Presburger
arithmetic, extending Presburger arithmetic with unbounded sums. 
\SL{} is motivated by applications of financial transactions over
cryptocurrencies in smart contracts. 
Smart contracts are decentralized computer programs executed on 
a blockchain-based system, as explained in \cite{smart}. 
Among other tasks, they automate financial transactions such as 
transferring and minting money. 
We refer to these transactions as \emph{smart transitions}.
The aim of this paper and \SL{} in particular is to express and reason about 
the post-conditions of smart transitions similar to \Cref{fig:mint-n-high-level}.

\SL{} expresses smart transition relations among sums of accounts of various kinds, 
e.g., at different banks, times, etc. 
Each such kind, $j$, 
is modeled by an uninterpreted function symbol, $b_j$, 
where $b_j(a)$ denotes the balance of $a$'s account of kind $j$,
and a constant symbol $s_j$,
which denotes the sum of all outputs of $b_j$. 
As such, our \SL{} generalizes Presburger arithmetic with 
(i) a sort $\Addr$ corresponding to the (unbounded) set of 
account \textit{addresses}; 
(ii) \textit{balance} functions $b_j$ mapping account addresses from
$\Addr$ to account values of sort $\Nat$; and
(iii) \textit{sum constants $s_j$} of sort $\Nat$ capturing 
the total sum of all account balances represented by $b_j$. 
Formally, the vocabulary of \SL{} is defined as follows.

\begin{definition}[\SL{} Vocabulary]
	\label{def:sum-vocabulary}
	Let %A vocabulary over sorts $\braces{\Addr, \Nat}$
	\[
	\Sigma^{l,m,d}_{+,\leq} = \parens{ 
		a_1, \dots, a_l, 
		b^1_1, \dots, b^1_m,
		c_1, \dots, c_d,
		s_1, \dots, s_m,
		0, 1,
		+^2, \leq^2
	}
	\]
	be a \emph{sorted first-order vocabulary of \SL{}} over sorts $\braces{\Addr, \Nat}$, where 
	
	\begin{itemize}
		\item (Addresses) The constants $a_1, \dots, a_l$ are of sort
		$\Addr$; 
		\item (Balance functions) 
		$b^1_1, \dots, b^1_m$ are unary function
		symbols
		from $\Addr$ to $\Nat$; 
		\item  (Constants and Sums) The constants $c_1, \dots, c_d,
		s_1, \dots, s_m$ and $0, 1$ are of sort $\Nat$; 
		\item $+^2$ is a binary function $\Nat \times \Nat \to
		\Nat$; 
		\item $\leq^2$ is a binary relation over $\Nat \times
		\Nat$. 
	\end{itemize}
\end{definition}

In what follows, when the cardinalities in an \SL{}
vocabulary are clear from context, we simply write $\Sigma$ instead of $\Sigma^{l,m,d}_{+,\leq}$.
Further, by $\Sigma^{l,m,d}_{\cancel{+},\cancel{\leq}}$ we denote the sub-vocabulary
where the crossed-out symbols are not available.
Note that even when addition is not available,
we still allow writing numerals larger than 1.

We restrict ourselves to \emph{universal sentences} over an \SL{} vocabulary, 
with quantification only over the \Addr{} sort.

We now extend the Tarskian semantics of first-order logic to
ensure that the sum constants of an \SL{} vocabulary
($s_1, \dots, s_m$)
are equal to the sum of outputs of their associated balance functions 
($b_j$ for each $s_j$)
over the respective entire domains of sort $\Addr$.

Let $\Sigma$ be an \SL{} vocabulary.
An \SL{} structure $\A = \pair{\domain}{\interp} \in \structure{\Sigma}$
representing a model for an \SL{} formula $\varphi$ is called an \emph{\SL{} model}
iff
\[\tag{Sum Property}\label{eq:sum-property}
\interp(s_j) = \sum_{a \in \domain(\Addr)} \brackets{\interp(b_j)}(a), \quad \text{for each }
1\leq j\leq m.
\]

We write 
$\A \sat_{\SL{}} \varphi$ 
to mean that $\A$ is an \SL{} model of $\varphi$.
When it is clear from context, we simply write 
$\A \sat \varphi$. 

\begin{table}[tb]
	\centering
	\begin{tabular}{| l | c | l |}
		\hline
		\multicolumn{1}{|c|}{\textbf{Function}} & 
		\multicolumn{1}{c|}{\textbf{Encoding in \SL{}}} &
		\multicolumn{1}{c|}{\textbf{$\,$Reference in ERC-20$\,$}}
		\\ \hline
		$\,$\msum{} & $s$ or $s'$ & $\,$totalSupply
		\\ \hline
		$\,$\bal$(a)$ & $b(a)$ or $b'(a)$ & $\,$balanceOf
		\\ \hline
		$\,$\mint$(a, v)$ & $b'(a) \approx b(a) + v$ & $\,$transfer
		\\ \hline
		$\,$\texttt{transferFrom}$(f, t, v)$$\,$ & $\,$$b'(t) \approx b(t) + v \land b(f) \approx b'(f) + v $$\,$ & $\,$transferFrom
		\\
		\hline
	\end{tabular}
	\vspace*{2mm}
	\caption{ERC-20 Token Standard}
	\label{tbl:sum-logic-erc-20} \vspace{-0.7cm}
\end{table}

%\subsection{Encoding ERC-20 in SL}
\begin{example}[Encoding ERC-20 in \SL{}]
As a use case of $\SL{}$, we showcase the encoding of the 
ERC-20 token standard of the Ethereum community~\cite{erc-20-standard} in \SL{}. 
To this end, we consider an \SL{} vocabulary $\Sigma^{l,2,d}$. 
We respectively denote the balance functions and their associated sums as 
$b, b',s, s'$ in the \SL{} structure over $\Sigma^{l,2,d}$. 
The resulting instance of \SL{} can then be used to encode ERC-20
operations/smart transitions as
\SL{} formulas, as shown in \Cref{tbl:sum-logic-erc-20}.
%ERC-20 standard can be modeled as 
%simply by formalizing 
% acount balance updates as  2-vocabulary formula.
% Let us denote the balance functions as $b,b'$ and their associated sums as $s,s'$. 
%See \Cref{tbl:sum-logic-erc-20} for the encoding of ERC-20.
Using this encoding, the post-condition of
\Cref{fig:mint-n-high-level} is expressed as the \SL{}
formula
\begin{equation}\label[formula]{eq:mint-n-sum-logic}
	%\begin{array}{l}
	b'(a) \approx b(a) + n \;\land \;\forall a' \not\approx a. b'(a') \approx b(a')
	\; \land \; s' \approx s + n
\end{equation}
formalizing the correctness of the smart transition of minting $n$
tokens in \Cref{fig:mint-n-high-level}.  
%Modeling low-level code with SMT formulas is a huge challenge
%for logic-based reasoning.
%
In the applied verification examples in \Cref{sec:exp}, 
rather than verifying the low-level implementation 
of built-in functions such as $\mint_n$,
we assume their correctness by including suitable axioms.

\end{example}

\section{Decidability of \SL{}}\label{sec:decidability}
We consider the decidability problem of verifying formulas in \SL{}.
We show that when there are several function symbols $b_j$ to sum over, the
  satisfiability problem for \SL{} becomes undecidable\footnote{Due to
    space restrictions, proofs of our results are given in our Appendix.}.  We first present, however, a useful
  decidable fragment of \SL{}.

\subsection{A Decidable Fragment of \SL{}}\label{sec:decidable-sum-logic}
We  prove decidability for a fragment of \SL{}, which we call the
$\parens{l,1,d}$-\textsc{FRAG} fragment of \SL{} (\Cref{thm:sum-decidability}). 
For doing so, we reduce the fragment to Presburger arithmetic, 
by using regular Presburger constructs to 
encode \SL{} extensions, 
that is the uninterpreted functions and sum
constants of \SL{}. 

The first step of our reduction proof is to consider distinct
models, which are models where the $\Addr$ constants $a_i$ represent distinct elements in the domain \linebreak $\domain(\Addr)$. 
While this restriction is somewhat unnatural, we show that for each vocabulary and formula that has a model, 
there exists an equisatisfiable formula over a different vocabulary 
that has a \emph{distinct} model (\Cref{thm:distinct-models}). 
The crux of our decidability  proof is then proving that
$\parens{l,1,d}$-\textsc{FRAG} has \emph{small \Addr{} space}: 
given a formula $\varphi$, if it is satisfiable, 
then there exists a model where $\abs{ \domain(\Addr) } \leq \kappa(\abs{\varphi})$, 
 $\abs{\varphi}$ is the length of $\varphi$, 
and $\kappa(.)$ is some computable function (\Cref{thm:presburger-reduction})\footnote{
	The function $\kappa(.)$ is defined per decidable fragment of SL,
	and not per formula.
      }. 

\vspace*{-5.5mm}
\subsubsection{Distinct Models}
An \SL{} structure $\A$ is considered \emph{distinct} when the $l$ \Addr{} constants represent 
$l$ distinct elements in $\domain(\Addr)$.
I.e.,
\vspace*{-2.5mm}
\[
\abs{ \braces{ \interp(a_1), \dots, \interp(a_l) } } = l\,.
\]
\vspace*{-5.5mm}

\noindent
Since each \SL{} model induces an equivalence relation over the $\Addr$ constants,
we consider partitions $P$ over $\braces{a_1, \dots, a_l}$. For each possible partition $P$ we define a transformation of terms and formulas
$\mathcal{T}_P$ that substitutes equivalent $\Addr$ constants
with a single $\Addr$ constant.
The resulting formulas are defined over a vocabulary that has
$\abs{P}$ $\Addr$ constants.  We show that given an \SL{} formula $\varphi$,
	if $\varphi$ has a model,
	we can always find a partition $P$ such that each of its classes corresponds to
	an equivalence class induced by that model.

\begin{restatable}[Distinct Models]{theorem}{distinctModelsTheorem}\label{thm:distinct-models}
	Let $\varphi$ be an \SL{} formula over $\Sigma$, then
	$\varphi$ has a model 
	iff
	there exists a partition $P$ of $\braces{a_1,\dots,a_l}$ 
	such that $\mathcal{T}_P(\varphi)$ has a \emph{distinct} model.\qed
\end{restatable}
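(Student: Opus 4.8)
\textbf{Proof plan for \Cref{thm:distinct-models}.}
The plan is to prove the two directions separately, with the bulk of the work being the translation machinery $\mathcal{T}_P$ and a bookkeeping argument that it preserves the \ref{eq:sum-property}. First I would make precise the transformation: given a partition $P = \{C_1, \dots, C_k\}$ of $\{a_1,\dots,a_l\}$, fix a representative $a_{r(i)}$ for each class $C_i$, let the target vocabulary $\Sigma_P$ be $\Sigma$ with the $\Addr$ constants replaced by $k = \abs{P}$ fresh constants $\hat a_1, \dots, \hat a_k$ (and the $b_j, s_j, c_j, 0, 1, +, \leq$ unchanged), and define $\mathcal{T}_P$ on terms by sending each $a_i$ to $\hat a_{i}$ where $i$ is the index of the class containing $a_i$, and homomorphically on all other term- and formula-forming operations. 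Crucially, I would also have $\mathcal{T}_P$ rewrite the quantified subformulas: a subformula $\forall a' \not\approx a_i.\ \psi$ still quantifies over $\Addr$, and since $\mathcal{T}_P$ only touches the constants, this commutes cleanly; the only subtlety is that $\mathcal{T}_P$ must not merge equalities $a_i \approx a_{i'}$ into a triviality unless $a_i, a_{i'}$ are genuinely in the same class — which is exactly how $P$ is defined, so $\mathcal{T}_P(a_i \approx a_{i'})$ is $\mathbf{true}$ when they are co-classed and $\hat a_i \approx \hat a_{i'}$ (a non-trivial atom) otherwise.

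For the $(\Rightarrow)$ direction, suppose $\A = \pair{\domain}{\interp}$ is an \SL{} model of $\varphi$. It induces the equivalence relation $a_i \sim a_{i'} \iff \interp(a_i) = \interp(a_{i'})$; let $P$ be the corresponding partition. I would build an \SL{} structure $\A'$ over $\Sigma_P$ with the same domain $\domain$, the same interpretation of $b_j, s_j, c_j, +, \leq$, and $\interp'(\hat a_i) \define \interp(a_{r(i)})$. Then $\A'$ is \emph{distinct} by construction (distinct classes of $P$ have distinct $\interp$-values, hence distinct representatives), and $\A'$ still satisfies the \ref{eq:sum-property} because neither the domain $\domain(\Addr)$ nor the $b_j$ changed. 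The remaining point is a straightforward structural induction showing $\A \sat t \approx \interp$-value of $\mathcal{T}_P$'s image, i.e.\ that $\A$ and $\A'$ agree on the truth of $\psi$ versus $\mathcal{T}_P(\psi)$ for every subformula $\psi$ of $\varphi$; the base case uses that $\interp(a_i) = \interp(a_{r(\text{class of }i)}) = \interp'(\hat a_i)$, and the quantifier case is immediate since the $\Addr$ domain is untouched. Hence $\A' \sat_{\SL{}} \mathcal{T}_P(\varphi)$ with $\A'$ distinct.

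For the $(\Leftarrow)$ direction, suppose $P$ is a partition and $\mathcal{B} = \pair{\domain}{\interp}$ is a distinct \SL{} model of $\mathcal{T}_P(\varphi)$ over $\Sigma_P$. I would read $\mathcal{B}$ back as a structure $\mathcal{B}'$ over the original $\Sigma$ by setting $\interp'(a_i) \define \interp(\hat a_{c(i)})$, where $c(i)$ is the index of the class of $a_i$, and keeping everything else. Again the \ref{eq:sum-property} is inherited verbatim. The same induction as above (in the other direction) gives $\mathcal{B} \sat \mathcal{T}_P(\psi) \iff \mathcal{B}' \sat \psi$ for every subformula $\psi$ of $\varphi$, so $\mathcal{B}' \sat_{\SL{}} \varphi$ and $\varphi$ has a model. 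This direction does not even need distinctness of $\mathcal{B}$; distinctness is only what we \emph{gain} in the forward direction.

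\textbf{Main obstacle.} The delicate part is verifying that $\mathcal{T}_P$ interacts correctly with equality atoms and with the restricted quantifiers $\forall a' \not\approx a_i$, so that the ``agreement'' induction actually goes through: one must check that collapsing two constants that $\A$ identifies never turns a false instance of $\varphi$ true or vice versa — this is exactly where the choice ``$P$ = the partition induced by $\A$'' (rather than an arbitrary $P$) is essential in the forward direction, while in the backward direction an \emph{arbitrary} $P$ is fine because $\mathcal{T}_P$ only ever \emph{weakens} equality information in a way that lifting back via $c(\cdot)$ undoes. I would also take care that ``model'' here means ordinary FOL model (no \ref{eq:sum-property} constraint on the vocabulary side), while ``distinct model'' of $\mathcal{T}_P(\varphi)$ is required to be an \emph{\SL{}} model; re-reading the statement, both occurrences should be \SL{} models, and the proof above respects that since the \ref{eq:sum-property} transfers in both directions untouched.
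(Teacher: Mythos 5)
Your proposal is correct and follows essentially the same route as the paper: in the forward direction you take $P$ to be the partition induced by the given model and reinterpret the collapsed constants over the unchanged domain (so the Sum Property transfers for free), in the backward direction you lift an arbitrary distinct model back via the class map, and in both cases the agreement of $\A$ on $\psi$ with the transformed structure on $\mathcal{T}_P(\psi)$ is established by the same structural induction on subformulas. The paper's proof is exactly this, spelled out as Claims over closed/open terms and formulas; your observation that distinctness is only needed as a conclusion of the forward direction, not a hypothesis of the backward one, also matches the paper's argument.
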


\vspace*{-7.5mm}
\subsubsection{Small \Addr{} Space}
In order to construct a reduction to Presburger arithmetic,
we bound the size of the $\Addr$ sort.
For a fragment of \SL{} to be decidable, 
we therefore need a way to bound its models upfront.
We formalize this requirement as follows.

\begin{restatable}[Small \Addr{} Space]{definition}{smallModelProperty}
	\label{def:small-model-property}
	Let \textsc{FRAG} be some fragment of SL over vocabulary 
	$\Sigma = \Sigma^{l,m,d}_{+,\leq}$. 
	\textsc{FRAG} is said to have \emph{small \Addr{} space} 
	if there exists a computable function $\kappa_\Sigma(.)$, 
	such that for any \SL{} formula $\varphi \in \textsc{FRAG}$, 
	$\varphi$ has a distinct model 
	 iff
	$\varphi$ has a distinct model $\A = (\domain, \interp)$ 
	with \emph{small \Addr{} space},  
	where $\abs{\domain(\Addr)} \leq
        \kappa_\Sigma(\abs{\varphi})$.
        
	We call $\kappa_\Sigma(.)$ the  \emph{bound function} of \textsc{FRAG};
	when the vocabulary is clear from context we simply write $\kappa(.)$.
\end{restatable}

One instance of a fragment 
(or rather, family of fragments) 
that satisfies this property
is the $\parens{l,1,d}$-\textsc{FRAG} fragment:
the simple case of a \emph{single} uninterpreted ``balance'' function
(and its associated sum constant),
further restricted by removing the binary function $+$ and
the binary relation $\leq$.
Therefore, we derive the following theorem:

\begin{restatable}[Small \Addr{} Space of $\parens{l,1,d}$-\textsc{FRAG}]{theorem}{smallModelsForSingle}
	\label{thm:small-models-for-single}
	
	For any $l$, $d$, it holds $\parens{l,1,d}$-\textsc{FRAG}, the fragment of 
	\SL{} formulas over the \SL{} vocabulary 
	\[
	\Sigma^{l,1,d}_{\cancel{+},\cancel{\leq}} 
		= 
	\parens{ a_1, \dots, a_l, b^1, c_1, \dots, c_d, s, 0, 1 } \;,
	\]
	has \emph{small \Addr{} space} with bound function
	$\kappa(x) = l + x + 1$.\qed
\end{restatable}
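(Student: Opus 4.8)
The plan is to prove the ``small \Addr{} space'' property for $\parens{l,1,d}$-\textsc{FRAG} by showing how to collapse any distinct model into a small one whose \Addr{} domain has size at most $l + \abs{\varphi} + 1$. First I would fix a distinct \SL{} model $\A = (\domain, \interp)$ of $\varphi$. Since we only have one balance function $b^1$ and no $+$ or $\leq$ on $\Nat$, the only way $\varphi$ can ``observe'' the domain $\domain(\Addr)$ is through (a) the values $b^1$ takes on the named constants $\interp(a_1), \dots, \interp(a_l)$, and (b) the sum constant $s$, which by the Sum Property equals $\sum_{a \in \domain(\Addr)} [\interp(b^1)](a)$. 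The key insight is that the ``anonymous'' part of the domain — addresses not equal to any $\interp(a_i)$ — contributes to $\varphi$ only via its total $b^1$-mass, call it $R \define \sum_{a \notin \{\interp(a_1),\dots,\interp(a_l)\}} [\interp(b^1)](a)$, since no term or atomic formula of \SL{} can refer to an unnamed address individually (quantification is universal over \Addr{}, but without $+/\leq$ the universally quantified statements are about $b^1$-values and constants only, so what matters for their truth is the \emph{set} of $b^1$-values realized, together with which constants realize them).

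The main step is therefore a replacement argument: delete all anonymous addresses and re-add just enough fresh anonymous addresses to reconstitute the residual mass $R$ while keeping the set of realized $b^1$-values controlled. Concretely, I would argue that $R$ can be written as a sum of at most $k$ values each appearing already among $\{[\interp(b^1)](\interp(a_i))\} \cup \{\text{numerals occurring in }\varphi\} \cup \{1\}$ — or more simply, add anonymous addresses all carrying $b^1$-value $1$ if $R$ is finite, but that blows up the domain by $R$ which is unbounded; so instead I need to be cleverer. The right move: if there is at least one anonymous address with $b^1$-value $v$ in $\A$, keep exactly one representative for each \emph{distinct} $b^1$-value realized by anonymous addresses, but that still doesn't recover the sum $R$. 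The cleanest correct approach is: observe that the truth of $\varphi$ depends only on the finite data $\bigl(\,[\interp(b^1)](\interp(a_i))\,\bigr)_{i=1}^l$ and on $\interp(s)$ together with the \emph{set} $V$ of all $b^1$-values realized anywhere (this set controls the universal \Addr-quantifiers). Now build a new model: domain $= \{\interp(a_1),\dots,\interp(a_l)\}$ (collapsed, so $\le l$ elements) plus at most $\abs{\varphi}+1$ additional fresh addresses. On the named addresses keep the old $b^1$-values. On the fresh addresses, I must (i) realize every value in $V \setminus \{b^1\text{-values on named addresses}\}$ — but $V$ restricted to what $\varphi$ can test is bounded by $\abs{\varphi}$ many ``small'' comparisons — and (ii) make the new sum equal $\interp(s)$ by putting the entire residual mass on a single fresh address. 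Since adding \emph{one} address carrying value exactly $R$ (an element of $\N$, legitimate regardless of whether $R$ is large) fixes the sum, and since $\varphi$ has at most $\abs{\varphi}$ numeral/variable positions that could force the existence of addresses with specific values, $l + \abs{\varphi} + 1$ fresh+named addresses suffice.

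The hard part, and where I would spend the most care, is the \emph{semantic preservation} lemma: that shrinking the anonymous part of the domain in the manner above does not change whether $\varphi$ holds. This requires a careful syntactic analysis of $\parens{l,1,d}$-\textsc{FRAG} formulas — in particular showing that a universal \Addr-sentence $\forall a'. \psi(a')$, where $\psi$ contains only $b^1$, the constants, equality, and numerals (no $+$, no $\leq$), has its truth determined by (1) the multiset-free \emph{set} of values $\{[\interp(b^1)](a) : a \in \domain(\Addr)\}$ and (2) the values at the named constants. I would prove this by noting that $\psi(a')$, after instantiation, becomes a Boolean combination of equalities between $\Nat$-terms, and each such equality either does not mention $a'$ (so its truth is model-independent of the anonymous part) or is of the form $b^1(a') \approx t$ for a closed term $t$ (numeral or built from named constants), whose satisfiability under $\forall a'$ depends precisely on whether $t$'s value is realized/avoided by $b^1$. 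Hence preserving that finite realization data and the sum $\interp(s)$ preserves truth. Once this lemma is in place, the counting $\abs{\domain(\Addr)} \le l + \abs{\varphi} + 1$ follows by bounding the number of distinct closed $\Nat$-terms occurring in $\varphi$ (at most $\abs{\varphi}$) plus the $l$ named addresses plus one extra for the residual-mass address, and the ``iff'' direction in \Cref{def:small-model-property} is immediate since a small distinct model is in particular a distinct model. I would also double-check the degenerate cases (no anonymous addresses needed; $\domain(\Addr)$ already small; $s$ forced to $0$) to make sure the bound function $\kappa(x) = l + x + 1$ is never exceeded.
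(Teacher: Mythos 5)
Your overall strategy---surgically replacing the anonymous part of the domain while preserving the named balances and the value of $s$---is reasonable in spirit, but the construction you describe has a genuine gap in the semantic preservation step, and it occurs at exactly the point you flag as ``the hard part.'' First, the single fresh address carrying the entire residual mass $R$ realizes the balance value $R$, and nothing prevents $R$ from being a value that the universal matrix of $\varphi$ forbids: take $\varphi = \forall x.\, \neg\parens{b(x) \approx 5}$ and a model whose anonymous addresses have balances $2$ and $3$; your new model contains an address with balance $5$ and no longer satisfies $\varphi$. Second, your analysis of the matrix only treats atoms of the form $b(a') \approx t$ for a \emph{closed} term $t$, but the fragment allows several universally quantified $\Addr$ variables, so atoms like $b(x) \approx b(y)$ and $x \approx y$ occur; the truth of, say, $\forall x,y.\, \parens{b(x) \approx b(y)} \to \parens{x \approx y}$ depends on the \emph{multiplicities} of realized balance values, not just on the set of realized values, and both collapsing duplicate-valued anonymous addresses to one representative and introducing the residual-mass address can change it. Repairing this would require distributing $R$ over fresh addresses with pairwise distinct values avoiding every numeral of $\varphi$, every $c_k$, every named balance and $s$, which is not always possible when $R$ is small, so the argument does not go through as stated.

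The paper's proof avoids both problems by working in the opposite direction: it takes a minimal model, assumes $\abs{A} \geq l + \abs{\varphi} + 1$, and deletes a \emph{single} anonymous address $\alpha^*$ of minimal balance $b^*$, decreasing $s$ by $b^*$ rather than preserving it. Because at least $\abs{\varphi}+1$ anonymous addresses with strictly positive balance remain, two quantitative lemmas hold --- every balance is strictly below the new sum, and the new sum strictly exceeds $\abs{\varphi}$ (hence every numeral in $\varphi$) --- which makes every atomic comparison involving $s$ or a numeral uniformly false or preserved. Since the sum changes, the interpretations of the constants $c_k$ must be shifted (those equal to $s$ follow $s$ down; those in the window $[s - b^*, s)$ are bumped up) so that all equalities among $c_k$, $s$, balances and numerals keep their truth values; your construction needs no such adjustment because it fixes $s$, but that is precisely what forces the problematic residual-mass address. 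A structural induction over the universal matrix, using downward persistence of universal sentences on the shrunken domain, then finishes the paper's argument. If you want to keep your ``preserve $s$ exactly'' approach, you would need to add the multiplicity/forbidden-value bookkeeping described above, which is essentially a different and more delicate combinatorial argument than the one-address-at-a-time shrinking the paper uses.
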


	An attempt to trivially extend
        \Cref{thm:small-models-for-single} 
	for a fragment of \SL{} with two balance functions falls apart
    in a few places,
	but most importantly when comparing balances to the sum of a different balance function.
	%
%	Moreover, the fact that the proof does not work when there are 
%	comparisons of balances and unrelated sums is not accidental. 
In \Cref{sec:undecidable} we show that  these comparisons are essential 
	for proving our undecidability result in \SL{}.
	
%	However, we believe that some restricted fragment
%	of \SL{} with dual balances can be decidable 
%	--- e.g. if we disallow this kind of comparisons --- 
%	%and still be able to express interesting properties and transitions that are 
%	beyond the scope of $\parens{l,1,d}$-\textsc{FRAG}.  
%	The question is not only how to prove decidability for a fragment, 
%	but also how to find a fragment that is surprisingly interesting 
%	(and applicable to real-world programs), 
%	albeit of restricted expressiveness.

%\vspace*{-3.5mm}
\subsubsection{Presburger Reduction}
For showing decidability of some \textsc{FRAG} fragment of \SL{}, 
we describe a Turing reduction to pure Presburger arithmetic. 
We introduce a transformation $\tau(.)$ of formulas in
\SL{}  
into formulas in Presburger arithmetic. It maps universal quantifiers to disjunctions,
and sums to explicit addition of all balances.
In addition, we define an auxiliary formula $\eta(\varphi)$, which ensures only valid addresses are considered,
and that invalid addresses have zero balances.
The formal definitions of $\tau(.)$ and $\eta(\varphi)$ can be found in \Cref{app:sum-logic-proofs}.

By relying on the properties of \emph{distinctness} and \emph{small \Addr{} space} 
we get the following results.

\begin{restatable}[Presburger Reduction]{theorem}{presburgerReductionTheorem}
	\label{thm:presburger-reduction}
	An \SL{} formula $\varphi$ has a \emph{distinct}, \SL{} model with small \Addr{} space
		iff
	$\tau(\varphi) \land \eta(\varphi)$ 
	has a Standard Model of Arithmetic.\qed
\end{restatable}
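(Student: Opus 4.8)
The plan is to prove both directions of the biconditional by carefully tracking how the transformation $\tau(.)$ and the auxiliary formula $\eta(\varphi)$ relate an $\SL{}$ model with small $\Addr{}$ space to a Standard Model of Arithmetic. The key idea is that once we know there is a \emph{distinct} model whose $\Addr{}$ domain has size at most $\kappa(\abs{\varphi})$, we can enumerate the addresses as $\kappa(\abs{\varphi})$ ``slots'', with the first $l$ slots pinned to the interpretations of $a_1,\dots,a_l$ (distinct, by hypothesis). The transformation $\tau(.)$ should (i) replace each universally quantified $\Addr{}$-formula $\forall a'.\,\psi$ by the finite conjunction over all slots, (ii) replace each sum constant $s_j$ by an explicit $+$-sum $b^j(\text{slot}_1)+\dots+b^j(\text{slot}_\kappa)$ over all slots, and (iii) introduce fresh $\Nat{}$-variables standing for $b^j(\text{slot}_i)$ and for the balances of the named constants; $\eta(\varphi)$ then records which slots are ``active'' (genuine addresses) and forces inactive slots to carry balance $0$ so they do not pollute the sums. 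Crucially, since we are reducing to pure Presburger arithmetic, the slots themselves disappear: what remains is a Presburger formula over the numeric balance-variables and the constants $c_1,\dots,c_d,s_j,0,1$.

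For the forward direction I would start from a distinct $\SL{}$ model $\A=(\domain,\interp)$ with $\abs{\domain(\Addr)}\le\kappa(\abs{\varphi})$, fix a bijection between $\domain(\Addr)$ (padded up to exactly $\kappa(\abs{\varphi})$ slots by dummy elements if it is smaller) and $\{1,\dots,\kappa(\abs{\varphi})\}$ sending $\interp(a_i)$ to slot $i$, and then read off the required Standard Model of Arithmetic: interpret each balance-variable for slot $i$ as $[\interp(b^j)](\text{slot}_i)$ when the slot is a real address and as $0$ otherwise, keep $\interp$ on all the $\Nat{}$-constants, and interpret $+,\leq,0,1$ standardly. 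One then checks that $\eta(\varphi)$ holds by construction (inactive slots have zero balance), that the explicit sums in $\tau(\varphi)$ equal $\interp(s_j)$ by the \ref{eq:sum-property} Sum Property together with the fact that inactive slots contribute $0$, and that $\tau(\varphi)$ holds because each finite conjunction faithfully simulates the original universal quantifier over $\domain(\Addr)$ (here distinctness of the $a_i$ is what lets the conjunct indexed by slot $i$ correctly stand in for ``$a'=a_i$''). The converse direction reverses this: from a Standard Model of $\tau(\varphi)\land\eta(\varphi)$ we build $\domain(\Addr)$ as the set of active slots (nonempty, and containing $l$ distinct elements for $a_1,\dots,a_l$ because $\eta$ forces those slots active and the vocabulary/transformation keeps them distinct), define $\interp(b^j)$ on that domain from the balance-variables, set $\interp(s_j)$ from the model (which $\tau$ forces to equal the explicit sum, hence the actual sum over active slots, hence the \ref{eq:sum-property} Sum Property holds), and verify $\A\sat_{\SL{}}\varphi$ by expanding each universal quantifier back into the conjunction that $\tau$ produced.

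The main obstacle I anticipate is the bookkeeping around \emph{inactive} slots and the interaction between $\eta(\varphi)$ and the Sum Property. Because Presburger arithmetic has no notion of ``the domain of $\Addr{}$'', the reduction has to hard-code a fixed slot count $\kappa(\abs{\varphi})$ and then use $\eta(\varphi)$ to carve the genuine addresses out of it; one must be sure that (a) the explicit $+$-sum over \emph{all} $\kappa$ slots coincides with the mathematical sum over only the active ones — this is exactly why $\eta$ must force balance $0$ on inactive slots — and (b) a Standard Model of $\tau(\varphi)\land\eta(\varphi)$ always designates at least the $l$ named slots as active and pairwise distinct, so the reconstructed $\SL{}$ structure is genuinely \emph{distinct} and has the named constants available. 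A secondary subtlety is that $\tau$ must correctly handle numerals larger than $1$ even though $+$ is present here (it is absent only in the decidable fragment), and must not conflate a named constant $a_i$ appearing inside a quantified subformula with the quantified variable; making the substitution discipline precise is where most of the routine-but-delicate work lives. Once these points are nailed down, both directions are a straightforward structural induction on $\varphi$ paired with the Sum Property computation.
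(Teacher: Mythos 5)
Your proposal matches the paper's proof in all essentials: the same slot-based reduction with $\tilde\kappa$ indicator and balance constants, $\eta$ forcing inactive slots to zero balance and the named constants' slots to be active, universal quantifiers unrolled into guarded conjunctions, sum constants expanded into explicit additions, address equalities resolved statically via distinctness, and both directions established by constructing the corresponding (in the paper's terminology, \emph{congruent}) structure and doing a structural induction. The obstacles you flag are exactly the ones the paper's $\eta_1$, $\eta_2$, and the placeholder substitutions $\sigma_1,\sigma_2$ are designed to handle, so this is essentially the same argument.
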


\begin{theorem}[\SL{} Decidability]\label{thm:sum-decidability}
	Let \textsc{FRAG} be a fragment of \SL{} that has \emph{small \Addr{} space},
	as defined in \Cref{def:small-model-property}.
	Then, \textsc{FRAG} is decidable.
\end{theorem}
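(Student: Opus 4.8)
The plan is to decide the satisfiability problem for $\textsc{FRAG}$ by composing the three reductions established above, so that the existence of a model for a given $\varphi \in \textsc{FRAG}$ is turned into finitely many Presburger-arithmetic satisfiability checks. By \Cref{thm:distinct-models}, $\varphi$ has a model iff $\mathcal{T}_P(\varphi)$ has a \emph{distinct} model for some partition $P$ of $\braces{a_1,\dots,a_l}$; there are only finitely many such partitions and each $\mathcal{T}_P$ is effectively computable, so it suffices to decide, for each $P$, whether $\mathcal{T}_P(\varphi)$ has a distinct model. Since $\textsc{FRAG}$ has small \Addr{} space --- and is closed under the substitutions $\mathcal{T}_P$, so that $\mathcal{T}_P(\varphi)$ again belongs to the fragment and $\abs{\mathcal{T}_P(\varphi)} \le \abs{\varphi}$ --- \Cref{def:small-model-property} lets us replace ``$\mathcal{T}_P(\varphi)$ has a distinct model'' by ``$\mathcal{T}_P(\varphi)$ has a distinct model $\A=(\domain,\interp)$ with $\abs{\domain(\Addr)} \le \kappa_\Sigma(\abs{\mathcal{T}_P(\varphi)})$''. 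Finally, \Cref{thm:presburger-reduction} states that the latter holds iff the Presburger sentence $\tau(\mathcal{T}_P(\varphi)) \land \eta(\mathcal{T}_P(\varphi))$ has a Standard Model of Arithmetic, and Presburger arithmetic is decidable~\cite{presburger-decidable}.

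Concretely, the decision procedure on input $\varphi$ does the following: enumerate all partitions $P$ of the \Addr{} constants of $\Sigma$; for each $P$, form $\mathcal{T}_P(\varphi)$, compute the bound $\kappa_\Sigma(\abs{\mathcal{T}_P(\varphi)})$, build the Presburger sentence $\tau(\mathcal{T}_P(\varphi)) \land \eta(\mathcal{T}_P(\varphi))$ (whose construction uses this bound to unfold the universal \Addr{}-quantifiers into finite disjunctions, to expand each sum constant into an explicit finite sum of balances, and to force the balances of ``unused'' addresses to $0$), and run a Presburger decision procedure on it; output ``satisfiable'' iff at least one of these sentences is satisfiable. Correctness is exactly the iff-chain of the previous paragraph; the procedure terminates because the partition enumeration is finite, $\mathcal{T}_P$, $\tau$, $\eta$ and $\kappa_\Sigma$ are all computable, and every Presburger check terminates.

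Most of the mathematical content is already discharged by the three cited results, so for this theorem the points requiring attention are organizational: (i) \emph{effectivity} of every ingredient, in particular that the bound function $\kappa_\Sigma$ of the fragment is genuinely computable, which is built into \Cref{def:small-model-property}; and (ii) \emph{closure of $\textsc{FRAG}$ under $\mathcal{T}_P$}, needed so that the small \Addr{} space guarantee may be invoked on $\mathcal{T}_P(\varphi)$ and not merely on $\varphi$. I expect (ii) to be the main obstacle, or at least the step most easily overlooked; it is immediate for the $\parens{l,1,d}$-\textsc{FRAG} family of \Cref{thm:small-models-for-single}, since $\mathcal{T}_P(\varphi)$ simply lands in $\parens{\abs{P},1,d}$-\textsc{FRAG}, whose bound function $\kappa(x)=\abs{P}+x+1$ is no larger than that of the original fragment. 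Everything else is a routine composition of \Cref{thm:distinct-models}, \Cref{def:small-model-property} and \Cref{thm:presburger-reduction}.
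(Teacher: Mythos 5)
Your proposal is correct and follows essentially the same route as the paper's proof: reduce to distinct models over all partitions $P$ via \Cref{thm:distinct-models}, invoke the small \Addr{} space property on each $\mathcal{T}_P(\varphi)$ (the paper likewise relies on the fact that $\mathcal{T}_P(\varphi)$ stays in \textsc{FRAG}, justified by an observation in its appendix), apply \Cref{thm:presburger-reduction}, and enumerate the finitely many partitions with a Presburger oracle. Your explicit flagging of closure under $\mathcal{T}_P$ and of the computability of $\kappa_\Sigma$ is a slightly more careful writeup of the same argument, not a different one.
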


\begin{proof}[\Cref{thm:sum-decidability}]
	Let $\varphi$ be a formula in \textsc{FRAG}. Then
	$\varphi$ has an \SL{} model 
		 iff
	for some partition $P$ of $\braces{a_1,\dots,a_l}$, 
	$\mathcal{T}_P(\varphi)$ has a \emph{distinct} \SL{} model. 
	For any $P$, the formula $\mathcal{T}_P(\varphi)$ is in FRAG, 
	therefore %they
    $\mathcal{T}_P(\varphi)$ has a \emph{distinct} \SL{} model 
		 iff
	it has a \emph{distinct} \SL{} model with \emph{small \Addr{} space}.
	
	From \Cref{thm:presburger-reduction}, 
	we get that for any $P$, 
	$\varphi_P \define \mathcal{T}_P(\varphi)$ has a \emph{distinct} \SL{} model 
		 iff
	$\tau(\varphi_P) \land \eta(\varphi_P)$ has a Standard Model of Arithmetic.
	By using the PA decision procedure as an oracle, we obtain the
        following {\emph{decision procedure for a \textsc{FRAG} formula
        $\varphi$}}: 
	\begin{itemize}
		\item For each possible partition $P$ of $\braces{a_1,\dots,a_l}$,
		 let $\varphi_P = \mathcal{T}_P(\varphi)$; 
		
		\item Using a PA decision procedure,  check whether 
		$\tau(\varphi_P) \land \eta(\varphi_P)$ 
		has a model, for each $P$; 
%		(using any decision procedure for Presburger arithmetic).
		
		\item If a model %any of the Presburger queries found a model, 
                  %then
                  for some partition $P$ was found, 
		the formula $\varphi_P$ has a \emph{distinct} \SL{} model, 
		and therefore $\varphi$ has \SL{} model; 
		
		\item Otherwise, there is no \emph{distinct} \SL{} model for any partition $P$, 
		and therefore there is no \SL{} model for $\varphi$.
	\end{itemize}
      \end{proof}

\begin{remark}
Our decision procedure 
  % described in the proof of
for \Cref{thm:sum-decidability} requires $B_l$ Presburger queries, \linebreak
where $B_l$ is Bell's number for all possible partitions of a set of size $l$. 
%	This number is huge, even for small values of $l$, 
%	but those queries can be done in parallel.
\end{remark}

Using \Cref{thm:sum-decidability} and \Cref{thm:small-models-for-single}, 
we then obtain the following result. 
\begin{corollary}
$\parens{l,1,d}$-\textsc{FRAG} is decidable.\qed
\end{corollary}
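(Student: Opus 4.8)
The plan is to obtain the corollary by simply chaining the two immediately preceding theorems, so the proof itself is a two-line instantiation. First I would invoke \Cref{thm:small-models-for-single}, which states that for every $l$ and $d$ the fragment $\parens{l,1,d}$-\textsc{FRAG} has \emph{small \Addr{} space}, witnessed by the explicit bound function $\kappa(x) = l + x + 1$. Then I would apply \Cref{thm:sum-decidability}, whose hypothesis is precisely that a fragment of \SL{} has small \Addr{} space and whose conclusion is that the fragment is decidable; instantiating the generic placeholder \textsc{FRAG} with $\parens{l,1,d}$-\textsc{FRAG} yields the claim.

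Since the corollary is a direct consequence, there is no genuine obstacle at this level; the content lives in the two cited results, and for completeness I would recall why they suffice. \Cref{thm:sum-decidability} is proved by a Turing reduction to Presburger arithmetic: using \Cref{thm:distinct-models} one first passes to distinct models; then the small-\Addr{}-space bound, together with \Cref{thm:presburger-reduction}, lets one replace the universal \Addr{}-quantifiers by finite disjunctions over a domain of size at most $\kappa(\abs{\varphi})$, replace each sum constant by an explicit finite addition of the chosen balances, and add the auxiliary formula $\eta(\varphi)$ that pins down the padded addresses; the result is a Presburger sentence, and PA is decidable, so one loops over the partitions of the address constants and queries a PA oracle. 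The real work is thus in \Cref{thm:small-models-for-single}: one must show that a single balance function $b^1$ together with its sum constant $s$ never forces a model larger than $l + \abs{\varphi} + 1$ — intuitively, beyond the $l$ named addresses only a bounded number of ``anonymous'' addresses with nonzero $b^1$-value are ever needed, because the only global constraint they participate in is the single equation tying $s$ to the sum of all $b^1$-values, and such an equation over $\N$ can always be satisfied by redistributing mass onto no more fresh addresses than there are numerals and constants occurring in $\varphi$.

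Finally, I would flag the complexity caveat echoing the remark following \Cref{thm:sum-decidability}: the decision procedure obtained this way issues $B_l$ separate Presburger queries, one per partition of the $l$ address constants, so although $\parens{l,1,d}$-\textsc{FRAG} is decidable, the procedure is not cheap; effectiveness, not efficiency, is what the corollary asserts.
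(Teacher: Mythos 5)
Your proof is correct and matches the paper exactly: the corollary is obtained by instantiating \Cref{thm:sum-decidability} with the small \Addr{} space property established in \Cref{thm:small-models-for-single} for $\parens{l,1,d}$-\textsc{FRAG}. The additional recap of why the two cited theorems hold is accurate but not needed for this step.
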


%The formal definitions of the transformations $\tau(.)$ and $\eta(.)$,
%along with a proof of \Cref{thm:presburger-reduction} can be found in appendix.

\subsection{\SL{} Undecidability}\label{sec:undecidable}%Using a 2-Counter Machine}
We now show that simple extensions of our decidable 
$\parens{l,1,d}$-\textsc{FRAG} fragment lose its decidability (\Cref{thm:undecidable}).
For doing so, we encode the halting problem of a \twocm{} 
using \SL{} with 3 balance functions, 
thereby proving that the resulting \SL{} fragment is undecidable. 

Consider a \twocm{}, 
whose transitions are encoded by the Presburger formula 
$\pi(c_1, c_2, p, c'_1, c'_2, p')$ with 6 free variables: 
2 for each of the three registers, 
one of which being the program counter (\pc{}). 
We assume w.l.o.g. that all three registers are within
$\N^+$, allowing us 
%the naturals, 
%excluding 0. 
to use addresses with a zero balance as a special ``separator''.
In addition, we assume that the program counter is 1 
at the start of the execution, 
and that there exists a single halting statement at line $H$. 
That is, the \twocm{} halts iff the \pc{} is equal to $H$.
\begin{table}[tb]
	\centering
	\bgroup
	\setlength\tabcolsep{2mm}
	\begin{tabular}{r | c | c | c | l |}
		\cline{2-5}
		& \textbf{\Addr} & \textbf{$l(\Addr)$} & \textbf{$c(\Addr)$} & \multicolumn{1}{c|}{\textbf{$g(\Addr)$}}
		\\ \cline{2-5}
		\ldelim\{{4}{*}[\mbox{{\footnotesize Time-step  \#0}}] & 
		& 0 & 1 & 0
		\\ \cline{2-5}
		& & 1 & 1 & $c_1$ at \#0
		\\ \cline{2-5}
		& & 2 & 1 & $c_2$ at \#0
		\\ \cline{2-5}
		& $a_0$ & 3 & 1 & \textsc{pc} at \#0 = 1
		\\ \cline{2-5}
		& \vdots & \vdots & \vdots & \multicolumn{1}{c|}{\vdots}
		\\ \cline{2-5}
		\ldelim\{{4}{*}[\mbox{{\footnotesize Time-step   \#$i$}}] & $x_1$ & $4i$ & 1 & 0
		\\ \cline{2-5}
		& $x_2$ & $4i + 1$ & 1 & $c_1$ at \#$i$
		\\ \cline{2-5}
		& $x_3$ & $4i + 2$ & 1 & $c_2$ at \#$i$
		\\ \cline{2-5}
		& $x_4$ & $4i + 3$ & 1 & \textsc{pc} at \#$i$
		\\ \cline{2-5}
		\ldelim\{{4}{*}[\mbox{{\footnotesize Time-step \#$(i+1)$}}] & $x_5$ & $4i + 4$ & 1 & 0
		\\ \cline{2-5}
		& $x_6$ & $4i + 5$ & 1 & $c_1$ at \#$(i+1)$
		\\ \cline{2-5}
		& $x_7$ & $4i + 6$ & 1 & $c_2$ at \#$(i+1)$
		\\ \cline{2-5}
		& $x_8$ & $4i + 7$ & 1 & \textsc{pc} at \#$(i+1)$
		\\ \cline{2-5}
		& \vdots & \vdots & \vdots & \multicolumn{1}{c|}{\vdots}
		\\ \cline{2-5}
		\ldelim\{{4}{*}[\mbox{{\footnotesize Time-step   \#$n = \frac{s_c}{4} - 1$}}] & & $s_c - 4$ & 1 & 0
		\\ \cline{2-5}
		& & $s_c - 3$ & 1 & $c_1$ at \#$n$
		\\ \cline{2-5}
		& & $s_c - 2$ & 1 & $c_2$ at \#$n$
		\\ \cline{2-5}
		& $a_1$ & $s_c - 1$ & 1 & \textsc{pc} at \#$n$ = $H$
		\\ \cline{2-5}
	\end{tabular}
	\egroup
	\vspace*{3mm}
	\caption{Transition System of a 2-Counter Machine,   Array View.}\label{tbl:2cm-in-sl}
	\vspace{-0.7cm}
\end{table}
%\vspace{-0.7cm}

\vspace*{-3.5mm}
\subsubsection{Reduction Setting}%Outline}
We have 4 \Addr{} elements for each time-step, 
3 of them hold one register each, 
and one is used to separate between each group of \Addr{} elements
(see \Cref{tbl:2cm-in-sl}).
We have 3 uninterpreted functions from \Addr{} to \Nat{} (``balances'').
For readability we denote these functions as $c,l,g$ (instead of $b_1,b_2,b_3$)
and their respective sums as $s_c,s_l,s_g$:
\begin{enumerate}
	\item Function $c\,$: Cardinality function, used to force size constraints. 
	We set its value for all addresses to be $1$, 
	and therefore the number of addresses is $s_c$.
	
	\item Function $l\,$: Labeling function, to order the time-steps. 
	We choose one element to have a maximal value of $s_c - 1$ 
	and ensure that $l$ is injective. 
	This means that the values of $l$ are distinctly $[0, s_c - 1]$.
	
	\item Function $g\,$: General purpose function, 
	which holds either one of the registers or 0 to mark the \Addr{} element as a separating one.
\end{enumerate}

\noindent
Each group representing a time-step is a 4 \Addr{} element, ordered as follows:
\begin{enumerate}
	\item First, a separating \Addr{} element $x$ (where $g(x)$ is 0).
	\item Then, the two general-purpose counters.
	\item Lastly, the program counter.
\end{enumerate}

\noindent
In addition we have 2 \Addr{} constants, 
$a_0$ and $a_1$ which represent the \pc{} value 
at the start and at the end of the execution. 
The element $a_1$ also holds the maximal value of $l$, that is,
$l(a_1) + 1\approx s_c$. 
Further, $a_0$ holds the fourth-minimal value, 
since its the last element of the first group,
and each group has four elements.

\vspace*{-3.5mm}
\subsubsection{Formalization Using a \TwoCM} 
We now formalize our reduction, proving undecidability of \SL{}.

%\begin{itemize}

\noindent 
(i) We impose an injective labeling %\item Labeling is injective:
	\vspace*{-2mm}
	\[
	\varphi_1 = 
		\forall x,y. 
			\parens{ l(x) \approx l(y) } 
				\to 
			\parens{ x \approx y }
	\]
	\vspace*{-5mm}

\noindent 
(ii) We next formalize properties over the program counter \pc{}.
       The \Addr{} constant that represents the program counter
       \pc{} value of the 
	last time-step is set to have the maximal labeling, 
	that is
	\vspace*{-2mm}
	\[
	\varphi_2 = \forall x. l(x) \leq l(a_1)
	\]
        Further, the \Addr{} constant that represents the \pc{} value of the 
	first time-step has the fourth labeling, hence
	\vspace*{-3mm}
	\[
	\varphi_3 = l(a_0) \approx 3
	\]
	Finally, the first and last values of the program counter are
        respectively 1 and
        $H$,  that is 
    \vspace*{-3mm}
	\[
	\varphi_4 = 
		g(a_0) \approx 1 
			\land 
		g(a_1) \approx H
	\]

\noindent 
(iii) We express \emph{cardinality constraints}
	ensuring that 
	there are as many \Addr{} elements as the labeling of the 
	last \Addr{} constant ($a_1$) + 1. 
	We assert 
	\vspace*{-3mm}
	\[
	\varphi_5 = 
		\parens{ s_c \approx l(a_1) + 1 } 
			\land 
		\forall x. \parens{ c(x) \approx 1 }
	\]
	
\noindent 
(iv) We encode the transitions of the \twocm{}, 
as follows. 
	For every 8 \Addr{} elements, 
	if they represent two sequential time-steps, 
	then the formula for the transitions of the \twocm{}
        is valid %must be true 
	for the registers it holds. 
	As such, we have \vspace{-0.3cm}
	\begin{align*}
		\varphi_6 = 
			\forall &x_1,\dots,x_8.
				\parens{ F1 \land F2 \land F3 } 
				\\
					&\to 
				\pi \parens{
					g(x_2), g(x_3), g(x_4), g(x_6), g(x_7), g(x_8)
				}
	\end{align*}
    where the conjunction $F1 \land F2 \land F3$ expresses that 
    $x_1, \dots, x_8$ are two sequential time-steps, with $F1$, $F2$ and $F3$
    defined as below. In particular, 
    $F1$, $F2$ and $F3$ formalize that  $x_1, \dots, x_8$  have sequential labeling, 
    starting with one zero-valued \Addr{} element
        (``separator'') and 
	continuing with 3 non-zero elements, as follows: 
	\begin{itemize}
		\item Sequential: 
		\vspace*{-4mm}
		\[\tag{F1}
			l(x_2) \approx l(x_1) + 1 
				\land \dots \land 
			l(x_8) \approx l(x_7) + 1
		\]
		\item Time-steps:
		\vspace*{-8mm}
		\begin{align*} &\tag{F2}
			 g(x_1) \approx 0 
				\land 
			g(x_2) > 0 
				\land 
			g(x_3) > 0 
				\land 
			g(x_4) > 0 \;, \\
	&\tag{F3}
			g(x_5) \approx 0 
				\land 
			g(x_6) > 0 
				\land 
			g(x_7) > 0 
				\land 
			g(x_8) > 0
		\end{align*}
	\end{itemize}
%\end{itemize}

Based on the above formalization, the formula %we get that
$\varphi = \varphi_1 \land \dots \land \varphi_6$ is satisfiable 
 iff
the \twocm{} halts within a finite amount of time-steps 
(and the exact amount would be given by $\frac{s_c}{4}$).
Since the halting problem for \twocms{} is undecidable, 
our \SL{}, 
already with 3 uninterpreted functions and their associated sums, 
is also undecidable.

\begin{theorem}\label{thm:undecidable}
	For any $l \geq 2, m \geq 3$ and $d$, 
	any fragment of \SL{} over
	$\Sigma^{l,m,d}_{+,\leq}$
	is undecidable. \qed
\end{theorem}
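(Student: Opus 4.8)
The plan is to prove \Cref{thm:undecidable} by reducing the halting problem for \twocms{} to the satisfiability problem of \SL{} formulas over $\Sigma^{l,m,d}_{+,\leq}$ for any $l \geq 2$, $m \geq 3$ and $d$, following the construction laid out in the preceding subsection. The central idea is to encode an entire (potentially unbounded but finite) run of a \twocm{} as a single \SL{} model: the \Addr{} domain serves as a linear "tape" of time-steps, grouped into blocks of four elements (one separator plus three register slots for $c_1$, $c_2$, and the program counter), and the three balance functions $c,l,g$ play the roles of cardinality enforcement, linear ordering, and register-content storage, respectively. The conjunction $\varphi = \varphi_1 \land \dots \land \varphi_6$ is then claimed to be satisfiable precisely when the machine halts.

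The key steps, in order, would be the following. First I would verify the "halting implies satisfiable" direction: given a halting run of length $n$, explicitly build an \SL{} model with $4(n+1)$ address elements, assign $l$ the values $0,\dots,4(n+1)-1$ bijectively, set $c \equiv 1$ (so that the \ref{eq:sum-property} forces $s_c = 4(n+1)$), and set $g$ according to the register contents at each time-step with $0$ on separators; then check that each $\varphi_i$ holds — in particular that $\varphi_6$ holds vacuously or via $\pi$ on every eight-element window, using the assumption that registers stay in $\N^+$ so separators are genuinely distinguishable by $g(x) \approx 0$. Second, and this is the harder direction, I would show "satisfiable implies halting": from any \SL{} model of $\varphi$, use $\varphi_1$ (injectivity of $l$) together with $\varphi_5$ (which ties $s_c$ to $l(a_1)+1$ and forces $c \equiv 1$, hence $s_c = \abs{\domain(\Addr)}$ by the Sum Property) and $\varphi_2$ to conclude that $l$ is a bijection from $\domain(\Addr)$ onto $\{0,\dots,s_c-1\}$; this linearizes the domain. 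Then $\varphi_3$, $\varphi_4$ pin down the initial block (\pc{} $=1$) and the final element ($g(a_1) \approx H$), and iterating $\varphi_6$ across consecutive windows reconstructs a valid computation history of $\pi$ from the initial to the halting configuration, so the machine halts. Third, I would note that the construction uses exactly $l = 2$ address constants ($a_0,a_1$) and $m = 3$ balance functions, and adding more constants ($d$), more address constants, or the symbols $+,\leq$ only enlarges the vocabulary, so undecidability propagates to every fragment over $\Sigma^{l,m,d}_{+,\leq}$ with $l \geq 2$, $m \geq 3$; since the \twocm{} halting problem is undecidable, so is \SL{} satisfiability on these fragments, and by duality validity as well.

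The main obstacle I expect is the "satisfiable implies halting" direction, specifically arguing that an arbitrary model — which a priori may have a domain of any cardinality, finite or infinite — is forced into the intended rigid "array" shape. The delicate points are: (a) showing the domain must be \emph{finite}, which follows because $c \equiv 1$ and the Sum Property make $s_c$ equal the cardinality, and $s_c$ is a natural number; (b) showing $l$ is \emph{onto} $\{0,\dots,s_c-1\}$ and not just injective into it — here injectivity from $\varphi_1$ plus the bound $\varphi_2$ (every $l(x) \leq l(a_1) = s_c - 1$) plus a counting argument (an injection from an $s_c$-element set into an $s_c$-element set is a bijection) closes the gap; and (c) checking that the window predicate $F1 \land F2 \land F3$ fires on \emph{exactly} the legitimately consecutive time-step pairs, so that $\varphi_6$ neither imposes spurious constraints nor misses a transition — this rests on the separators being unambiguously identified by $g \approx 0$, which in turn needs the w.l.o.g. assumption that all three registers (including \pc{}) stay strictly positive throughout the run. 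Handling these points carefully, I would also double-check that $\varphi_3$ ($l(a_0) \approx 3$) correctly identifies $a_0$ as the fourth element, i.e. the \pc{}-slot of the first block, so that the first block encodes the initial configuration.
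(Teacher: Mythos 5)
Your proposal matches the paper's argument: the paper proves \Cref{thm:undecidable} by exactly this reduction from the \twocm{} halting problem, using the three balance functions $c,l,g$ and the formulas $\varphi_1,\dots,\varphi_6$, with $l(a_1)+1 \approx s_c$ and $c \equiv 1$ forcing the domain to be a finite, $l$-linearized array of four-element time-step blocks. Your write-up is in fact somewhat more explicit than the paper's (which asserts the satisfiable-iff-halts equivalence without detailing the injection-into-an-equinumerous-set bijectivity argument or the separator-identification issue), but the approach and all key ingredients are the same.
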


\vspace*{-5mm}
\begin{remark}
Note that in the above formalization
the only use of associated sums comes from 
%        in our above formalization using two-counter machine
%        is for
expressing the size of the set of \Addr{} elements. 
As for our uninterpreted function $c(.)$ we have 
$\forall x. c(x) \approx 1$, 
its sum $s_c$ is thus the amount of addresses. 
Hence, 
%	This gives the following corollary: 
we can encode the halting problem for \twocms{} in an
almost identical way to the encoding presented here, 
using a generalization of PA with two uninterpreted functions 
for $l(.)$ and $g(.)$, 
and a \emph{size operation} replacing $c(.)$ and its
associated sum.
\end{remark}

\section{\SL{} Encodings of Smart Transitions} \label{sec:encoding}%chap_enc}
%\SR{use implicit SL encoding (5.1), explicit SL encoding in 5.3}
The definition of \SL{} models in
\Cref{sec:SL,sec:decidability} ensured that
the summation constants $s_j$ were respectively equal to 
the actual summation of all balances $b_j(.)$. 
In this section, we address the
challenge to formalize relations between $s_j$ and
$b_j(.)$ in a way that the resulting encodings can be expressed in
the logical frameworks of automated reasoners, 
in particular of SMT solvers and first-order theorem provers. 
%can be expressed in
%the%
%However, to be able to hand these problems to automated theorem provers 
%this constraint has to be made explicit in the encoding.

In what follows,
we consider a single transaction or one time-step of
multiple transactions over $s_j, b_j(.)$.
We refer to such transitions as \emph{smart transitions}.
Smart transitions are common in smart contracts,
expressing for example the minting and/or transferring of some coins,
as evidenced in \Cref{fig:mint-n-high-level} and discussed later.

Based on \Cref{sec:SL},
our smart transitions are encoded in the
$\Sigma^{l,2,d}$ fragment of \SL{}.
Note however, that neither decidability nor undecidability
of this fragment is implied by \Cref{thm:sum-decidability},
nor \Cref{thm:undecidable}. 
In this section,
we show that our \SL{} encoding of smart transitions is expressible in
first-order logic. 
We first introduce a sound, \emph{implicit \SL{} encoding},
by ``hiding'' away sum semantics and using invariant relations
over smart transitions (\Cref{sec:enc_uf}). 
This encoding does not allow us to directly assert
the values of any balance or sum,
but we can prove that this implicit encoding is complete,
relative to a translation function (\Cref{sec:enc_sound}).

By further restricting our implicit \SL{} encoding to
this relative complete setting,
we consider counting properties to
explicitly reason with balances 
and directly express verification conditions with unbounded sums on 
$s_j$ and $b_j(.)$.
This is shown in \Cref{sec:expl_enc},
and we evaluate different variants of the 
\emph{explicit \SL{} encoding} in \Cref{sec:exp},
showcasing their practical use and relevance within
automated reasoning.

To directly present our \SL{} encodings and results in the smart contract
  domain, in what follows we rely on the notation of 
\Cref{tbl:sum-logic-erc-20}. As such, we respectively denote $b,b'$ by 
\oldbal{}, \newbal{} and write \oldsum{}, \newsum{} for
 $s,s'$. As already discussed in  \Cref{fig:mint-n-high-level}, 
the prefixes \texttt{old-} and \texttt{new-} refer to
the entire state expressed in the encoding before and after the smart transition.
We explicitly indicate this state using \oldworld{}, \newworld{} respectively.
The non-prefixed versions \bal{} and \msum{} are stand-ins for
\emph{both} the \old{} and \new{} versions --- \Cref{fig:enc_uf}
illustrates our setting for  the smart transition of minting one coin. 

%\SR{...HERE...} As the goal of this section is to model one time-step of multiple transactions in the same financial institution, 
%we consider the fragment $\Sigma^{l,2,d}$ of \SL{} with the two uninterpreted functions \oldbal{} and \newbal{} 
%together with one summation constant each, \oldsum{} and \newsum{}. 
%\SR{...or HERE...} The prefix \texttt{old-} characterizes the constants, relations and functions belonging 
%to the state we were in before the transaction, from now on referred to as \texttt{old-world}. 
%Analogously, the prefix \texttt{new-} refers to the state after the transaction, the \texttt{new-world}. 
%As we consider \SL{} with precisely two uninterpreted functions, 
%we have no information about decidability from \Cref{sec:decidability}.

With this \SL{} notation at hand, 
we are thus interested in
finding first-order formulas that
verify smart transition relations between \oldsum{} and \newsum{},
given the relation between \oldbal{} and \newbal{}. 
In this paper, we mainly focus on the smart transitions
of minting and transferring money, 
yet our results could be used in the context
of other financial transactions/software transitions over unbounded sums.

\begin{example}
In the case of minting $n$ coins in  \Cref{fig:mint-n-high-level},
	we require formulas that 
	(a) 
		describe the state before the transition 
		(the \oldworld{}, thus pre-condition),
	(b) 
		formalize the transition 
		(the relation between \oldbal{} and \newbal{}; 
		(i)-(ii) in \Cref{fig:mint-n-high-level})
	and (c) 
		imply the consequences for the \newworld{} 
		((iii) in \Cref{fig:mint-n-high-level}). 
	These formulas verify that minting and depositing $n$ coins
	into some address
	result in an increase of the sum by $n$, 
	that is $\newsum = \oldsum + n$, 
	as expressed in the functional correctness formula
	(\ref{eq:mint-n-sum-logic}) of \Cref{fig:mint-n-high-level}.
\end{example}

%We are interested in finding first-order formulas that 
%imply a certain relation of \oldsum{} and \newsum{}, 
%given the relation between \oldbal{} and \newbal{}. 
%For the minting of $n$ coins, 
%that means finding formulas describing the \texttt{old-world}, 
%the relation of the old balances to the new balances,
%and the consequences for the \texttt{new-world}. 
%Furthermore, these formulas shall imply that minting $n$ coins to some address results 
%in the sum increasing by $n$, that means $\oldsum + n = \newsum$.

\subsection{\SL{} Encoding using Implicit Balances and Sums}\label{sec:enc_uf}
The first encoding we present is a set of first-order formulas with 
equality over sorts $\braces{ \Coin,\, \Addr }$. No additional theories
are considered. 
%There is no other theory reasoning involved. 
The \Coin{} sort represents money, 
where one coin is one unit of money.
The \Addr{} sort represents the account addresses as before.
As a consequence, 
balance functions and sum constants only exist implicitly
in this encoding. As such, 
the property 
$\msum=\sum_{a \in \Addr} \bal(a)$
% (shown as the gray curved arrows in \Cref{fig:enc_uf})
\emph{cannot be directly expressed in this encoding}.
Instead, we formalize this property by using so-called \emph{smart
  invariant} relations between two predicates \hc{} and
\ac{} over coins $c\in\Coin$ and $a\in\Addr$, as follows.

%between \ac{} and \hc{}, as d
%
% The main idea of the encoding is to assign every coin $c$, 
%that has the predicate $\ac$, 
%to precisely one address $a$. 
%This means that address $a$ owns that coin $c$. 
%This is expressed through the binary $\hc$ predicate. 
%Additionally, it shall be the case that if an address $a$ has the coin $c$, then $c$ is $\ac$.

\begin{figure}[tb]
	\centering
	\includegraphics[width=0.9\textwidth]{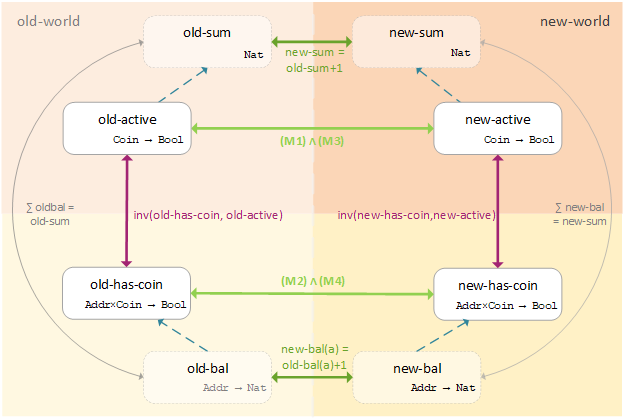}
	\caption{Implicit \SL{} Encoding of $\mint_1$, where \texttt{Addr} is short for \Addr.}
	\label{fig:enc_uf}
\end{figure}

\begin{definition}[Smart Invariants]\label{def:invariants}
Let $\hc\subseteq \Addr{} \times \Coin{}$ 
	and  consider $\ac \subseteq \Coin{}$. A \emph{smart invariant} of
        the pair $\pair{\hc}{\,\ac}\,$ is
        the conjunction of the following three formulas 
  %
%	A pair $\pair{\hc}{\,\ac}\,$, 
%	where $\hc$ \linebreak $\subseteq \Addr{} \times \Coin{}$ 
%	and  $\ac \subseteq \Coin{}$, 
%	\emph{satisfies the invariants}, 
%	iff
%	it satisfies the following three formulas.
	\begin{enumerate}
		\item 
			Only active coins $c$ can be owned by an address $a$:
			\[ 
			\tag{I1} 
			\forall c : \Coin.\;
			\exists a:\Addr. \;
			\hc(a,c) \to \ac(c) 
			\;.
			\]
		\item 
			Every active coin $c$ belongs to some address $a$:
			\[ 
			\tag{I2} 
			\forall c : \Coin.\;
			\ac(c) \to 
			\exists a : \Addr. \;
			\hc(a,c) 
			\;.
			\]
		\item  
			Every coin $c$ belongs to at most one address $a$: 
			\begin{align*}
			\tag{I3}  
			\forall c : \Coin.
			&\forall a, a' : \Addr. \\
			&\parens{ 
				\hc(a,c) \land \hc(a',c) \to a \approx a'
			}.
			\end{align*}
	\end{enumerate}
We write  $\inv(\hc{},\;\ac{})$ to denote the smart invariant $\text{(I1)} \land
        \text{(I2)} \land \text{(I3)}$ of $\pair{\hc}{\,\ac}\,$. 
%        We say that $\inv(\hc{},\;\ac{})$
      \end{definition}

      Intuitively, our \emph{smart invariants}  ensure that
a coin $c$ is \emph{active} iff it is \emph{owned} by precisely one
address $a$.
%
%
%
%      for every
%coin $c$, we have 
%$c \in \ac \subseteq \Coin$ iff
%	there exists exactly one address $a$ such that 
%	$\pair{a}{c} \in \hc \subseteq \Addr \times \Coin$.
%	That is, a coin is \emph{active} iff it is \emph{owned} by precisely one address.
Our smart invariants imply the soundness of our implicit \SL{} encoding,
as follows.

%--- i.e. satisfying the sum property ---
%is expressed as follows: %achieved by the following:
\begin{restatable}[Soundness of \SL{} Encoding]{theorem}{soundImplicitSL}
\label{thm:enc:soundness}
  Given that
$\msum = \abs{\ac}$
and for every $a \in \Addr$ it holds 
$\bal(a) = \abs{ \braces{ c \in \Coin \mid \pair{a}{c} \in \hc } }$, then
$	\inv(\hc,\ac)\; \Longrightarrow \;
  \msum = \suma \bal(a)$. \qed

\end{restatable}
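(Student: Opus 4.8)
The plan is to show that, under the stated assumptions relating $\msum$ to $\abs{\ac}$ and each $\bal(a)$ to the number of coins owned by $a$, the smart invariant $\inv(\hc,\ac)$ forces the multiset $\{\,c \in \Coin \mid \pair{a}{c} \in \hc\,\}$, taken over all $a \in \Addr$, to form a partition of the set $\ac$ of active coins. Once that is established, the desired equality $\msum = \suma \bal(a)$ is just the statement that the cardinality of a finite set equals the sum of the cardinalities of the blocks of a partition of it, together with the observation that coins that are not active contribute $0$ on the right-hand side as well (they are owned by no address, by (I1)).

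First I would fix an arbitrary \SL{} structure satisfying $\inv(\hc,\ac)$ and unfold the three invariant conjuncts into set-theoretic statements about the relation $\hc \subseteq \Addr \times \Coin$ and the predicate $\ac \subseteq \Coin$. Concretely: (I1) says that if $\pair{a}{c}\in\hc$ for some $a$, then $c\in\ac$ --- i.e. the ``owned'' coins are a subset of the active coins; (I2) says every active coin is owned by at least one address; and (I3) says every coin is owned by at most one address. Combining (I2) and (I3) yields a function $\mathsf{owner}\colon \ac \to \Addr$ assigning to each active coin its unique owner, and combining this with (I1) gives that the fibers $\mathsf{owner}^{-1}(a) = \{\,c\in\Coin \mid \pair{a}{c}\in\hc\,\}$ for $a\in\Addr$ are pairwise disjoint and exhaust $\ac$. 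Note that for an inactive coin $c$, (I1) forces $\pair{a}{c}\notin\hc$ for every $a$, so such a coin lies in no fiber --- consistent with the fibers partitioning exactly $\ac$.

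Then I would finish by the cardinality computation: since $\{\mathsf{owner}^{-1}(a)\}_{a\in\Addr}$ is a partition of $\ac$,
\[
\abs{\ac} \;=\; \suma \abs{\mathsf{owner}^{-1}(a)} \;=\; \suma \abs{\braces{\,c\in\Coin \mid \pair{a}{c}\in\hc\,}}.
\]
Substituting the hypotheses $\msum = \abs{\ac}$ on the left and $\bal(a) = \abs{\braces{c\in\Coin \mid \pair{a}{c}\in\hc}}$on the right gives $\msum = \suma \bal(a)$, as required.

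The main subtlety --- the part that needs a word of care rather than genuine difficulty --- is that the sum $\suma$ ranges over \emph{all} of $\Addr$, which is an unbounded (though, in any model, set-sized) domain, whereas $\ac$ need not be finite a priori. Since the intended models have finitely many coins, the sum is well-defined; if one wants to be fully rigorous about arbitrary models, the argument should be phrased so that all but finitely many summands are zero, or one simply works within the finite-coin setting in which the hypotheses $\msum = \abs{\ac}$ and $\bal(a) = \abs{\{\cdots\}}$ even make sense as natural numbers. Apart from this bookkeeping, the proof is a routine unfolding of definitions and the standard partition–cardinality identity.
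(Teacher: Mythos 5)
Your proof is correct, and its core combinatorial content coincides with what the paper ultimately relies on; the difference is in the packaging. The paper does not prove \Cref{thm:enc:soundness} directly: it derives it as an immediate consequence of the stronger two-way statement (\Cref{sound_comp}), whose proof introduces the error sets $M_{\text{Inact}}, M_{\text{Least}}, M_{\text{Most}}, M_{\text{Pairs}}$ and the violation counts $V_{\leq}, V_{\geq}$, establishes the general identity $\sum_{a} \abs{C_a} = \abs{\ac} + V_{\geq} - V_{\leq}$ (whenever at least one of the two counts vanishes), and then reads off both directions of the equivalence. Your argument is exactly the special case $V_{\leq} = V_{\geq} = 0$ of that machinery --- indeed it is essentially Step 1 of the paper's proof of \Cref{sound_comp}: under $\inv(\hc,\ac)$ the ownership fibers are pairwise disjoint (I3), contain only active coins (I1), and cover $\ac$ (I2), so they partition $\ac$ and the cardinality identity follows. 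What your route buys is a self-contained, more elementary proof of soundness that avoids the error-counting apparatus entirely; what the paper's route buys is that the same bookkeeping also yields the converse direction (relative completeness), which genuinely needs the quantitative accounting of how many invariant violations occur and cannot be obtained from the partition argument alone. Your closing remark about finiteness is apt and matches the paper's implicit finite-coin setting, under which the hypotheses $\msum = \abs{\ac}$ and $\bal(a) = \abs{\braces{c \in \Coin \mid \pair{a}{c} \in \hc}}$ are meaningful as natural numbers and all but finitely many summands vanish.
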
 %We  discuss soundness and completeness of our
%encoding in \Cref{sec:enc_sound}.
%The details, together with a completeness result, 
%are discussed in \Cref{sec:enc_sound}.

%the soundness property we are hoping for is the following. 
%Given that $\abs{\ac}=\msum$ 
%and for every address $a \in \Addr{}$ it holds that 
%$\bal(a)=\abs{\braces{c \in \Coin{}: (a,c) \in \hc}}$, 
%then $\sum_{a \in \Addr} \bal(a)=\msum$, if $\texttt{inv}(\hc{},\;\ac{})$. 
%Its details together with a completeness result will be discussed in \Cref{sec:enc_sound}.

\noindent We say that a \emph{smart transition preserves smart invariants},
when
\begin{align*}
	\inv&(\oldhc, \oldact) \\
	&\iff \inv(\newhc,\newact),
\end{align*}
 
  where $\oldhc, \oldact$ and $\newhc, \newact$
  respectively denote the functions $\hc, \ac$ in the states before
  and after the smart transition. 
  Based on the soundness of our implicit \SL{} encoding, we formalize smart
  transitions preserving smart invariants as first-order formulas. % in our \SL{} sound encoding 
%  ow examine thtransition and its expected impact
%	(illustrated by the dark green arrows in \Cref{fig:enc_uf}).
  We only discuss  smart transitions implementing %	As a warm-up
%to the example of
minting $n$ coins here, but other transitions, such as transferring coins, can be handled
in a similar manner. We first focus on miniting a
single coin, as 
 follows. 
	%
	%The equivalent transition and impact,
	%expressed in our uninterpreted functions encoding (lime arrows),
	%describe how the \ac{} and \hc{} predicates in the \newworld{} 
	%evolve from the \oldworld{}.
	%
%	We formalize the transition as follows.

%What has not yet been mentioned are the transition and the expected impact illustrated
%by the dark green arrows in \Cref{fig:enc_uf}.
%As the transitions we are interested in are smart contract's transactions, 
%we focus on transferring and minting coins. 
%Consider minting one coin. 
%By doing so the balance of precisely one address, the receiver's, increases by one, 
%whereas all the others' balances remain unchanged. 
%The expected impact on the sum constant is also an increase of one, 
%as shown in \Cref{fig:mint-n-high-level}. 
%The equivalent transition and expected impact in the language of our uninterpreted functions encoding (lime arrows) 
%describe how the \ac{} and the \hc{} predicate in the \texttt{new-world} evolved from the \texttt{old-world}.

\begin{definition}[Transition $\mint_1(a,c)$]\label{mint1}
	Let there be $c \in \Coin, a \in \Addr$. 
	The transition $\mint_1(a,c)$ activates coin $c$ 
	and deposits it into address $a$.
	\begin{enumerate}
		\item 	
			The coin $c$ was inactive before and is active now: 
			\[ 
			\tag{M1} 
			\neg \oldact(c) \land \newact(c) \;.
			\]
			
		\item 
			The address $a$ owns the new coin $c$:
			\[ 
			\tag{M2}  
			\newhc(a,c) \land \forall a' : \Addr{}.\; \neg \oldhc(a',c)\;.
			\]
			
		\item 
			Everything else stays the same: 
			\begin{align*}
			\tag{M3}
			& \forall  c' : \Coin.\; 
				c' \not\approx c\to 
				\parens{ \newact(c') \liff \oldact(c') }\;, \\
				\tag{M4}
		& \forall  c' : \Coin.\;\forall a' : \Addr. \;
		\; 
		(c' \not\approx c \lor a' \not\approx a) \to \\
		& 	\qquad  \parens{ \newhc(a',c') \liff \oldhc(a',c') } \;.
			\end{align*}
	\end{enumerate}
	The transition $\mint_1(a,c)$ is defined as 
	$\text{(M1)} \land \text{(M2)} \land \text{(M3)} \land \text{(M4)}$.
\end{definition}
 
%That the equivalent transition and expected impact indeed encode the 
%minting of one coin is shown in the following soundness result.

By minting one coin, 
	the balance of precisely one address, that is of the
        receiver's address, increases by one,
	whereas all other balances remain unchanged. Thus, 
	the expected impact on the sum of account balances is also
        increased by one, as illustrated in \Cref{fig:enc_uf}. 
The following theorem proves that the definition of 
$\mint_1$ is \emph{sound}. That is, $\mint_1$ 
 affects the implicit balances and sums as expected and hence $\mint_1$
 preserves smart invariants. 

\begin{restatable}[Soundness of $\mint_1(a,c)$]{theorem}{soundMint}\label{mint1_sound}
	Let  $c \in \Coin$, $a\in \Addr$ such that 
        $\mint_1(a,c)$. 
	Consider balance functions
	$\oldbal$, $\newbal: \Addr \to \N$,
	non-negative integer constants $\oldsum$, $\newsum$,
	unary predicates $\oldact$, $\newact \subseteq \Coin$
	and binary predicates $\oldhc$, $\newhc \subseteq \Addr \times \Coin$
	such that 
	\[
		\abs{\oldact}=\oldsum\, ,\;
		\abs{\newact}=\newsum , 
	\] 
	and for every address $a'$, we have 
	\begin{align*}
		&\oldbal(a')=\abs{ \braces{ c' \in \Coin \mid (a', c') \in \oldhc } } \, ,\\ 
		&\newbal(a')=\abs{ \braces{ c' \in \Coin \mid (a', c') \in \newhc  }}\;.
	\end{align*}
	
	Then,
	$\newsum = \oldsum + 1$,
	$\newbal(a)= \oldbal(a)+1$. Moreover, for all other addresses $a' \neq a$, 
	it holds $\newbal(a')=\oldbal(a')$. \qed
\end{restatable}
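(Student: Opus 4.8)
The plan is to prove the three conclusions directly from the definition of $\mint_1(a,c)$ by manipulating the finite sets whose cardinalities are $\oldsum$, $\newsum$, $\oldbal(a')$, and $\newbal(a')$. First I would establish $\newsum = \oldsum + 1$. By hypothesis $\abs{\oldact} = \oldsum$ and $\abs{\newact} = \newsum$. From (M1) we have $\neg\oldact(c)$ and $\newact(c)$, and from (M3) every coin $c' \not\approx c$ satisfies $\newact(c') \liff \oldact(c')$. Hence $\newact = \oldact \cup \braces{c}$ with $c \notin \oldact$, so taking cardinalities gives $\newsum = \abs{\newact} = \abs{\oldact} + 1 = \oldsum + 1$. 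I would note this uses that $\oldact$ is finite, which is implicit in the assumption $\abs{\oldact} = \oldsum \in \N$.

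Next I would handle $\newbal(a) = \oldbal(a) + 1$. Let $O_a = \braces{c' \in \Coin \mid (a,c') \in \oldhc}$ and $N_a = \braces{c' \in \Coin \mid (a,c') \in \newhc}$, so $\oldbal(a) = \abs{O_a}$ and $\newbal(a) = \abs{N_a}$. From (M2), $\newhc(a,c)$ holds, so $c \in N_a$; also (M2) gives $\forall a'.\ \neg\oldhc(a',c)$, so in particular $c \notin O_a$. For any $c' \not\approx c$, (M4) (instantiated with $a' = a$, which forces the disjunct $c' \not\approx c$) gives $\newhc(a,c') \liff \oldhc(a,c')$, so $c' \in N_a \liff c' \in O_a$. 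Therefore $N_a = O_a \cup \braces{c}$ with $c \notin O_a$, and $\newbal(a) = \abs{N_a} = \abs{O_a} + 1 = \oldbal(a) + 1$.

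For the last conclusion, fix any address $a' \neq a$, and define $O_{a'}$, $N_{a'}$ analogously. For every coin $c'$, the pair $(a', c')$ satisfies $a' \not\approx a$, so the antecedent of (M4) holds regardless of whether $c' \approx c$, giving $\newhc(a',c') \liff \oldhc(a',c')$ for all $c'$. Hence $N_{a'} = O_{a'}$ and $\newbal(a') = \abs{N_{a'}} = \abs{O_{a'}} = \oldbal(a')$.

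I do not expect any serious obstacle here; the argument is essentially bookkeeping on set membership using the four clauses of the transition definition. The one subtlety worth flagging is the finiteness of $\Coin$-indexed sets: the conclusions about cardinalities presuppose that $\oldact$, $\newact$, $O_{a'}$, $N_{a'}$ are finite, which follows from the standing assumptions that $\abs{\oldact} = \oldsum$, etc., are natural numbers; with that in hand, "$S = T \cup \braces{x}$ and $x \notin T$ implies $\abs{S} = \abs{T} + 1$" is immediate. It may also be worth remarking explicitly that the conclusion "$\mint_1$ preserves smart invariants" referenced in the surrounding text is then obtained by combining this theorem with \Cref{thm:enc:soundness}, though that is outside the formal statement being proved.
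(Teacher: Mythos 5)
Your proposal is correct and follows essentially the same route as the paper's proof: both arguments use (M1) and (M3) to show $\newact$ is the disjoint union of $\oldact$ with $\braces{c}$, use (M2) and (M4) analogously for the set of coins held by $a$, and use (M4) alone to get set equality for every $a' \neq a$, then take cardinalities. Your explicit remark about finiteness of the relevant sets is a reasonable extra precaution that the paper leaves implicit.
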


\noindent
Smart transitions minting an arbitrary number of $n$ coins, as in our
~\Cref{fig:mint-n-high-level}, is then realized by 
repeating the $\mint_1$ transition $n$ times.
Based on the soundness of $\mint_1$, ensuring that $\mint_1$
preserves smart invariants,   %  is established correctly (i.e. it is sound),
%and that it preserves the invariants
%(as defined in
%\Cref{def:invariants},
we conclude by induction that $n$ repetitions of $\mint_1$, that is
\emph{minting $n$ coins,
also preserves smart invariants. } The precise definition of $\mint_n$ together with the soundness result is stated in \Cref{apd:mint_n}.

%What one has to prove for this case is that $\mint_1$ works correctly and that if the invariants were preserved so far, 
% then another execution of $\mint_1$ does not change this fact, 
% which is an instance of the induction principle. 
% Transferring $n$ coins is constructed in an analogous way.

\subsection{Completeness Relative to a Translation
  Function}\label{sec:enc_sound}
Smart invariants provide sufficient conditions for ensuring soundness
of our \SL{} encodings (\Cref{thm:enc:soundness}). We next show
that, under additional constraints, smart invariants are also necessary
conditions, establishing thus \emph{(relative) completeness of our
  encodings. }

A straightforward extension of \Cref{thm:enc:soundness} 
however does not hold. 
Namely, only  under the assumptions of
\Cref{thm:enc:soundness},
the following formula is not valid:
\[
\msum = \suma \bal(a) 
\quad \iff \quad 
\inv(\hc{},\;\ac{})\,.\] 

As a counterexample, assume 
(i) 
$\msum = \abs{\ac}$,
(ii) 
for every  $a\in\Addr$ it holds that 
$\bal(a) = \abs{ \braces{ c \in \Coin \mid (a,c) \in \hc } }$, that is
the assumptions of \Cref{thm:enc:soundness}. Further, let 
(iii)  
      the smart invariants \inv(\hc{},\;\ac{})
      hold for all but the coins $c_1,c_2\in\Coin$ 
and all but the addresses $a_1,a_2 \in \Addr$. We also assume 
%      
%$\msum = \abs{\ac}$,  
%for every  $a\in\Addr$ it holds that 
%$\bal(a) = \abs{ \braces{ c \in \Coin \mid (a,c) \in \hc } }$ 
%and \linebreak
%$\msum = \suma \bal(a)$
%but the invariants 
%(as in \Cref{def:invariants})
%hold for all but the coins $c_1$, $c_2$ 
%and all but the addresses $a_1$, $a_2$.
that (iv) $c_1$ is active but not owned by any address and 
(v) $c_2$ is active and owned by the two distinct addresses 
$a_1,a_2$. We thus have 
$\msum = \suma \bal(a)$, yet $\inv(\hc{},\;\ac{})$ does not hold. 

To ensure completeness of our encodings, 
we therefore introduce a translation
function $f$ that restricts the set 
$\mF \define 2^{\Addr \times \Coin} \times 2^{\Coin}$  
of $(\hc,\ac)$ pairs, as follows. 
We exclude from $\mF$ those pairs 
$(\hc, \ac)$ that 
violate smart invariants by both 
(i) not satisfying (I2), 
as (I2) ensures that there are not too many active coins,
and by (ii) 
not satisfying at least one of (I1) and (I3), as (I1) and (I3) ensure that there are not too few active coins. 
The required translation function $f$ (\Cref{app:f}) now assigns every pair $(\bal, \msum)$ the set of all $(\hc,\ac) \in \mF$ that satisfy $\msum = \abs{ \ac }$, $\bal(a)= \abs{ \braces{ c \in \Coin \mid \hc(a,c) } }$ for every address $a$ and have not been excluded.
%We implement these restrictions on $\mF$ by the following equivalence
%relation $\sim$ over $\mF$.
 
% \begin{definition}[Relation $\sim\,$]
 %	Let the pairs 
 %	$p_1=(\hco,\aco)\in\mF$ and 
 %	$p_2=(\hct,\act) \in \mF$. Then 
 %	$p_1 \sim p_2$ iff
 %	\begin{enumerate}
 %		\item 
 %			$|\aco|=|\act| \;$,
 %		\item
%	 		$
%	 		\abs{\braces{c \in \Coin{}\mid \hco(a,c) }}
%	 			=
%			\abs{\braces{c \in \Coin{}\mid \hct(a,c) }}
%			$, for all $a\in \Addr$,
 %		\item $p_1$ violates (I2) in $V_{\leq}$ cases and
  %                $p_2$ violates (I2) also $V_{\leq}$ times; 

% 		\item $p_1$ does not satisfy (I1) and (I3) in all
 %                 together 	$V_{\geq}$ cases, which is also the
  %                number of times $p_2$ violates (I1) and (I3).  
% 	\end{enumerate}
 %\end{definition}

%The translation function $f$ can now be defined as a function that
%assigns every pair $(\bal,\msum)$ a class from $\mF/_{\sim}$.

%\begin{definition}[Translation Function $f$]\label{def:f_welldef}  
%	The function 
%	$f: \N^{\Addr} \times \N \to \mF/_{\sim}$, 
%	$(\bal,\msum) \mapsto [(\hc,\ac)]_{\sim}$, 
%	is defined to satisfy the following conditions for an arbitrary
%	$(\hc,\ac) \in [(\hc,\ac)]_{\sim}$.
%	\begin{enumerate}
%		\item 
%			$\msum = \abs{ \ac }\;.$
%		\item 
%			For every $a \in \Addr$
%			it holds 
%			$\bal(a)= \abs{ \braces{ c \in \Coin \mid \hc(a,c) } }$. 
%		\item 
%			At least one of 
%			$V_{\leq} = 0$ 
%			and 
%			$V_{\geq} = 0$ 
%			holds.
%	\end{enumerate}
%\end{definition}

%The function $f$ is well-defined and injective,
%ensuring completeness of our \SL{} encodings relative to $f$.

\begin{restatable}[Relative Completeness of \SL{} Encoding]{theorem}{completeImplicitSL}\label{sound_comp}
	Let 
	$(\bal, \msum) \in \N^{\Addr} \times \N$ 
	and let
	$(\hc,\ac) \in f(\bal,\msum)$ be arbitrary. 
	Then, 
	\[ 
	\msum = \suma \bal(a) 
	\quad \iff \quad  
	\inv\,(\hc{},\;\ac{})\;.\qquad\qed
	\]\vspace*{-5mm}
\end{restatable}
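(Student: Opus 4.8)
The plan is to prove the two directions of the biconditional separately, leveraging Theorem~\ref{thm:enc:soundness} for one direction and the defining properties of the translation function $f$ for the other. Fix $(\bal,\msum)\in\N^{\Addr}\times\N$ and $(\hc,\ac)\in f(\bal,\msum)$. By the definition of $f$, we may assume throughout that $\msum=\abs{\ac}$ and $\bal(a)=\abs{\braces{c\in\Coin\mid\hc(a,c)}}$ for every $a\in\Addr$, and moreover that $(\hc,\ac)$ is not one of the excluded pairs.

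\textbf{Direction ($\Leftarrow$).} This is immediate from Theorem~\ref{thm:enc:soundness}: the two hypotheses of that theorem, namely $\msum=\abs{\ac}$ and $\bal(a)=\abs{\braces{c\in\Coin\mid\pair{a}{c}\in\hc}}$ for all $a$, hold by membership in $f(\bal,\msum)$, so $\inv(\hc,\ac)$ implies $\msum=\suma\bal(a)$. No use of the exclusion is needed here.

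\textbf{Direction ($\Rightarrow$).} Assume $\msum=\suma\bal(a)$, i.e.\ $\abs{\ac}=\suma\abs{\braces{c\mid\hc(a,c)}}$; we must derive $\inv(\hc,\ac)$, i.e.\ (I1), (I2) and (I3). The argument is by contraposition against the exclusion rule: suppose $\inv(\hc,\ac)$ fails. First, the right-hand sum $\suma\abs{\braces{c\mid\hc(a,c)}}$ counts each coin $c$ once for every address owning it, so it equals $\abs{\braces{c\in\Coin\mid \exists a.\,\hc(a,c)}}$ exactly when (I3) holds (no coin owned more than once), and in general is $\geq$ that cardinality, with the number of owned coins itself being $\le\abs{\ac}$ exactly when (I1) holds (every owned coin is active) and $\ge\abs{\ac}$ exactly when (I2) holds (every active coin is owned). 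Thus violating (I2) forces $\suma\abs{\braces{c\mid\hc(a,c)}}<\abs{\ac}$ unless compensated by a violation of (I1) or (I3) pushing the sum back up; violating (I1) or (I3) while (I2) holds forces $\suma\abs{\braces{c\mid\hc(a,c)}}>\abs{\ac}$. So the only way to have the sum equal to $\abs{\ac}$ while $\inv$ fails is to fail (I2) \emph{and} to fail (I1) or (I3) simultaneously --- but this is precisely the class of pairs excluded from $\mF$ by $f$. Since $(\hc,\ac)\in f(\bal,\msum)\subseteq\mF$ is not excluded, this case cannot occur, so $\inv(\hc,\ac)$ holds.

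\textbf{Main obstacle.} The delicate point is the second direction: turning the bare cardinality identity $\abs{\ac}=\suma\abs{\braces{c\mid\hc(a,c)}}$ into the three pointwise invariants. The counterexample before the theorem shows one cannot conclude $\inv$ unconditionally; the crux is to make precise the ``too many / too few active coins'' bookkeeping --- splitting the discrepancy into a surplus of active-but-unowned coins (governed by (I2)) and a surplus of owned-but-inactive or multiply-owned coins (governed by (I1), (I3)) --- and to check that these surpluses can only cancel exactly in the excluded configuration. This amounts to a careful double-counting argument over the finite sets $\Coin$ and $\Addr$, together with a precise reading of the exclusion clause in the definition of $f$ from \Cref{app:f}; everything else is routine.
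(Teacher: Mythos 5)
Your plan is correct and follows essentially the same route as the paper's proof: the paper also quantifies the violations of (I2) by a deficit $V_{\leq}$ and the violations of (I1)/(I3) by a surplus $V_{\geq}$, establishes the identity $\suma \bal(a) = \msum + V_{\geq} - V_{\leq}$ for every pair in $f(\bal,\msum)$, and concludes from the exclusion clause (at least one of $V_{\leq},V_{\geq}$ is zero) that the sum identity holds iff both vanish iff $\inv(\hc,\ac)$ holds --- exactly your cancellation argument. Two small caveats: your ($\Leftarrow$) direction invokes \Cref{thm:enc:soundness}, but the paper proves that theorem \emph{as a corollary of} \Cref{sound_comp}, so to avoid circularity you should instead note that $\inv(\hc,\ac)$ gives $V_{\leq}=V_{\geq}=0$ and read the equality off the same identity (the paper's Step 1: the $C_a$ are then disjoint with union $\ac$); and your parenthetical ``exactly when'' claims about the cardinality comparisons are too strong as stated (set inclusion gives only one direction), though your subsequent surplus/deficit bookkeeping does not actually rely on them.
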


\subsection{\SL{} Encodings using Explicit Balances and Sums}\label{sec:expl_enc}

%All the explicit encodings of \SL{} mentioned in this paper evolved 
%from efforts to
We now restrict our \SL{} encoding from Section~\ref{sec:enc_uf} to
explicitly reason with balance functions during smart transitions. 
We do so by expressing our translation function $f$ from Section~\ref{sec:enc_sound} in first-order logic.
We now use the  summation constant 
$\msum \in \N$ 
and the balance function  
$\bal : \Addr \to \N$ 
in our \SL{} encoding. In particular, we use our smart
  invariants 
$\inv(\hc{},\;\ac{})$  %. and capture their relations  them to the predicates 
in this explicit \SL{} encoding together with two  additional axioms
(Ax1, Ax2), ensuring that
$\msum = \abs{\ac}$ and 
$\bal(a) = \abs{ \braces{ c \in \Coin \mid \hc(a,c) } }$ 
for all $a\in\Addr$. 

To formalize the additional properties, we introduce two counting mechanisms in our \SL{}
encoding. 
The first one is a bijective function 
$\mcount: \Coin \to \N^+$ 
and the second one is a function 
$\ind: \Addr \times \Coin \to \N^+$, 
where $\ind(a,.): \Coin \to \N^+$ 
is bijective for every $a \in \Addr$.
To ensure that $\mcount$ and $\ind(a,.)$ count coins, we impose the
following two properties:
%What enables this construction to indeed count coins is the following two properties: 
\[\tag{Ax1} 
	\forall c : \Coin.\; \ac(c) \iff \mcount(c) \leq \msum \;, 
\]
\[\tag{Ax2} 
	\forall c : \Coin.\; \forall a : \Addr.\; \hc(a,c) \iff \ind(a,c) \leq \bal(a) \;.  
\]

\Cref{fig:expl_enc} illustrates our revised \SL{} encoding for
our smart transition  $\mint_1$. We next ensure
soundness of our resulting explicit encoding for summation, as follows. 
%That our concept provides correct summation follows from the following lemma. 

\begin{restatable}[Soundness of Explicit \SL{} Encodings]{theorem}{soundExplicitSL}\label{fo_expressible}
 Let there be a pair %\linebreak
$(\bal,\msum) \in \N^{\Addr} \times \N$, 
a pair 
$(\hc,\ac) \in \mF$,
and functions
$\mcount: \Coin{} \to \N^+$ 
and
$\ind: \Addr \times \Coin \to \N^+$.

Given that
$\mcount$ is bijective,
$\ind(a,.): \Coin \to \N^+$ 
is bijective for every $a\in\Addr$,
and that (Ax1), (Ax2) and $\inv\,(\hc{},\;\ac{})$ hold,
then, 
$\msum = \abs{\ac}$ and 
$\bal(a) = \abs{\braces{ c \in \Coin{}: \hc(a,c) } }$, for
every $a\in\Addr$.

In particular, we have $\msum = \suma \bal(a)$.\qed
\end{restatable}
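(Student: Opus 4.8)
The plan is to establish the two counting identities $\msum = \abs{\ac}$ and $\bal(a) = \abs{\braces{c\in\Coin : \hc(a,c)}}$ directly from the axioms, and then invoke Theorem~\ref{thm:enc:soundness} to obtain $\msum = \suma \bal(a)$. First I would fix the pair $(\hc,\ac)\in\mF$, the pair $(\bal,\msum)$, and the counting functions $\mcount$ and $\ind$ as in the hypothesis, and observe that the whole argument is a counting bijection argument: each axiom turns an ``active/owned'' predicate into a numeric threshold condition, and bijectivity lets us count the satisfying coins exactly.

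For the first identity, I would use (Ax1): $\ac(c) \iff \mcount(c)\leq\msum$. Since $\mcount:\Coin\to\N^+$ is bijective, it restricts to a bijection between $\braces{c\in\Coin : \mcount(c)\leq\msum}$ and $\braces{k\in\N^+ : k\leq\msum} = \braces{1,\dots,\msum}$, a set of size $\msum$ (using $\msum\in\N$, so this is well-defined even when $\msum=0$). By (Ax1) the left-hand set is exactly $\ac$, hence $\abs{\ac}=\msum$. For the second identity I would argue analogously for each fixed $a\in\Addr$: by hypothesis $\ind(a,\cdot):\Coin\to\N^+$ is bijective, so it restricts to a bijection between $\braces{c\in\Coin : \ind(a,c)\leq\bal(a)}$ and $\braces{1,\dots,\bal(a)}$, which has size $\bal(a)$; by (Ax2) the former set is precisely $\braces{c\in\Coin : \hc(a,c)}$, giving $\bal(a) = \abs{\braces{c\in\Coin : \hc(a,c)}}$.

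With both identities in hand, the hypotheses of Theorem~\ref{thm:enc:soundness} are met --- namely $\msum = \abs{\ac}$ and $\bal(a) = \abs{\braces{c\in\Coin : \pair{a}{c}\in\hc}}$ for every $a$ --- and since $\inv(\hc,\ac)$ is assumed to hold, that theorem yields $\msum = \suma\bal(a)$, completing the proof. The only genuinely delicate point, and the one I would be careful about, is that the bijectivity is used on the \emph{restriction} of $\mcount$ (resp.\ $\ind(a,\cdot)$) to a sub-threshold set: one must check that $\mcount$ maps this set \emph{onto} $\braces{1,\dots,\msum}$ and not merely injectively into it --- injectivity alone would only give $\abs{\ac}\leq\msum$. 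Surjectivity follows because $\mcount$ is a bijection onto all of $\N^+$, so every $k\in\braces{1,\dots,\msum}$ has a (unique) preimage $c$ with $\mcount(c)=k\leq\msum$, whence $\ac(c)$ by (Ax1); the same reasoning handles $\ind(a,\cdot)$. Everything else is routine set-cardinality bookkeeping, and no properties of $\mF$ beyond membership are needed for this direction.
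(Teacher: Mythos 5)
Your proof is correct and follows essentially the same argument as the paper: both establish $\abs{\ac}=\msum$ and $\abs{\braces{c : \hc(a,c)}}=\bal(a)$ from (Ax1), (Ax2) and the bijectivity of $\mcount$ and $\ind(a,\cdot)$, and then conclude via a previously proved soundness result. The only (cosmetic) difference is that the paper finishes by observing $(\hc,\ac)\in f(\bal,\msum)$ and citing Theorem~\ref{sound_comp}, whereas you cite Theorem~\ref{thm:enc:soundness} directly --- which the paper itself derives as an immediate consequence of Theorem~\ref{sound_comp}.
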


\begin{figure}[tb]
	\centering
	\includegraphics[width=0.9\textwidth]{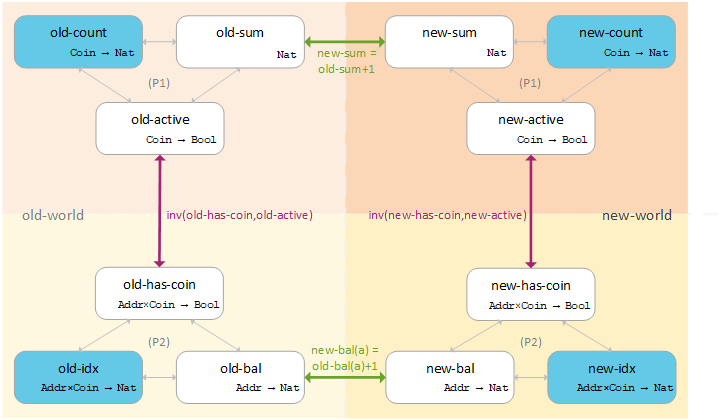}
	\caption{Explicit \SL{} Encoding of $\mint_1$, where \texttt{Addr} is short for \Addr.}
	\label{fig:expl_enc} \vspace{-0.3cm}
\end{figure}

%\vspace{-0.5cm}

When compared to \Cref{sec:enc_uf},
  our explicit \SL{} encoding introduced above uses
  our smart invariants as axioms of our encoding, together with (Ax1)
  and (Ax2). %rather than as proper invariants.
In our explicit \SL{} encoding, the post-conditions asserting functional correctness of smart
transitions express thus relations among  $\oldsum$ to $\newsum$.
For example, for $\mint_n$ we are
interested in ensuring \vspace{-0.2cm}
\begin{equation}\label{eq:mintN:nototal} \mint_n \To \newsum=\oldsum +n \;. 
\end{equation}
 By using two new constants
$\texttt{old-total}$, $\texttt{new-total} \in \N $, we can use
$\msum=\texttt{total}$ as smart invariant for $\mint_n$. As a result, the property to be ensured is then
\begin{equation}\label{eq:expl:mintN}
\begin{array}{l}
(
\oldsum=\texttt{old-total} 
\land 
\texttt{new-total}=\texttt{old-total}+n 
\land 
\mint_n
)\\
\To 
\parens{
	\newsum=\texttt{new-total}
  }\;.
  \end{array}
\end{equation}

It is easy to see that the negations of \eqref{eq:mintN:nototal} and \eqref{eq:expl:mintN}  are equisatisfiable. 
We note however that the additional constants $\texttt{old-total}$,
$\texttt{new-total}$ used in~\eqref{eq:expl:mintN}  lead to unstable
results within automated reasoners, as discussed in Section \ref{sec:exp}.

\section{Experiments}\label{sec:exp}

\noindent{\bf From Theory to Practice.}
%While the explicit \SL{} encoding presented in \Cref{sec:expl_enc} is suitable as a theoretic result, it lacks practical applicability.
To make our explicit \SL{} encodings handier for automated reasoners, we improved the setting illustrated in \Cref{fig:expl_enc} 
by applying the following restrictions without losing any generality. 
%These restrictions are crucial for the success of our experiments.

%\begin{itemize}
\noindent (i) %	\item 
	The predicates {\hc{} and \ac{} were removed from the explicit \SL{}
          encodings}, by 
	replacing them by their equivalent expressions (Ax1)-(Ax2).
	
\noindent (ii)  
	The {surjectivity assertions of \mcount{} and \ind{} were restricted} 
	to the relevant intervals $[1, \msum]$, $[1, \bal(a)]$ respectively.
	
\noindent (iii) 
	Compared to \Cref{fig:expl_enc}, 
	only {one mutual \mcount{} and one mutual \ind{}}
	functions were used. We however conclude that we do not lose
       expressivity of our resulting \SL{} encoding, as
          shown in \Cref{app:onecount}. % generality here was an important finding. 
	
\noindent (iv) 
	When our \SL{} encoding contains expressions such as
	$\forall c : \Coin.\; \ind(a_0,c) \in [l_0,u_0] \iff \ind(a_1,c) \in [l_1,u_1]$, 
	with $a_0$, $a_1$ being distinct addresses such that  either 
	$u_i \leq \bal(a_i)$ 
	or 
	$l_i > \bal(a_i)$, 
	$i\in \braces{ 0,1 }$, 
	then it can be assumed that the {coins in those intervals are in the same order} 
	for both functions (\Cref{app:ordering}).

      %\end{itemize}
Based on the above,
we derive three different explicit \SL{}
encodings to be used in automated reasoning about smart
transitions. We respectively denote these explicit \SL{} encodings by 
\intenc{}, \natenc{} and \idenc{}, and describe them next.
%We performed several experiments on benchmarks from smart contracts, 
%including the minting of $n$ coins.

\noindent{\bf Benchmarks.} 
In our experiments, 
we consider four smart transitions $\mint_1$, $\mint_n$, $\transfer_1$
and $\transfer_n$, respectively denoting minting and transferring
one and $n$ coins. 
These transitions capture the main operations of linear integer arithmetic.
In particular, $\mint_n$ implements the smart transition of  our
running example from \Cref{fig:mint-n-high-level}.

For each of the four smart transitions, we implement four \SL{} encodings: 
the implicit \SL{} encoding  {\tt uf} from \Cref{sec:enc_uf} using only uninterpreted functions  
and three explicit encodings \intenc{}, \natenc{} and \idenc{}
as variants of  \Cref{sec:expl_enc}.
We also consider three additional arithmetic benchmarks using
\intenc{}, which are not directly motivated by smart contracts. 
 Together with variants of \intenc{} and \natenc{} presented in the sequel, our benchmark set contains  31 examples altogether,
  with each example being formalized in the SMT-LIB input
  syntax~\cite{smtlib}. In addition to our encodings, we also proved
  consistency of the axioms used in our encodings.

%Finally, we evaluate three additional arithmetic benchmarks using \intenc{}, which are not directly motivated by smart contracts. This yields a total of 31 benchmarks, since four benchmarks from the variations of \intenc{} and \natenc are their standard encodings of $\mint_1$ and $\transfer_1$. 
%All these 31 encodings are written in the 
%SMT-LIB standard~\cite{smtlib} and their axioms were assured to be consistent.

\noindent{\bf \SL{} Encodings and Relaxations.} 
Our explicit \SL{} encoding \intenc{} uses linear integer arithmetic, 
whereas \natenc{} and \idenc{} are based on natural numbers. 
As naturals %Thus the latter 
are not a built-in theory in SMT-LIB, 
we assert the axioms of Presburger arithmetic directly in the
encodings of \natenc{} and \idenc{}. 

\begin{figure}[tb]
	%\begin{wrapfigure}{r}{0.45\textwidth}
	\centering
	\vspace{-0.2cm}
	\includegraphics[width=0.48\textwidth]{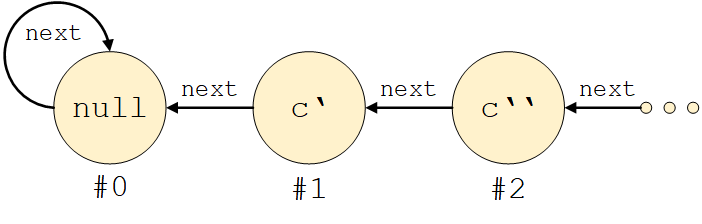}
	\caption{Linked Lists in \texttt{id}.}
	\label{fig:id_pic}
	%\end{wrapfigure}
\end{figure}

In our \idenc{} encodings, inductive datatypes are additionally used to order coins. 
There exists one linked list of all coins for \mcount{} 
and one for each $\ind(a,.)$, $a \in \Addr$. 
Additionally, there exists a ``null'' coin, 
which is the first element of every list and is not owned by any
address.  
As shown in \Cref{fig:id_pic}, the numbering of each coin is defined by 
its position in the respective list. 
This way surjectivity for \mcount{} and \ind{} can respectively be asserted by the formulas
$\exists c : \Coin. \;\mcount(c) \approx \msum$
and
$\forall a : \Addr. \; \exists c : \Coin. \;\ind(a,c) \approx \bal(a)$.
However, asserting surjectivity for \intenc{} and \natenc{} cannot be
achieved without quantifying over $\N^+$.
Such quantification would drastically effect the performance of 
automated reasoners in (fragments of) first-order logics. As a remedy,
within the default encodings of \intenc{} and \natenc{}, we only consider relevant instances of surjectivity. 

Further,  we consider  variations of \intenc{} and \natenc{} by
asserting proper surjectivity to the relevant intervals of  \ind{} and
\mcount{} (denoted as \emph{surj}) and/or adding the \total{}
constants mentioned in Section \ref{sec:expl_enc} (denoted as
\emph{with \total{}, no \total}) .
These variations of \intenc{} and \natenc{} are implemented for $\mint_1$ and $\transfer_1$.

\noindent{\bf Experimental Setting.} 
We evaluated our benchmark set of 31 examples using SMT solvers Z3~\cite{Z3} and CVC4~\cite{CVC4},
as well as the first-order theorem prover Vampire~\cite{Vampire}. Our
experiments were run on a standard machine with an Intel Core i5-6200U CPU (2.30GHz, 2.40GHz) and 8 GB RAM. The time is given in seconds and we ran all experiments with a time limit of 300s. Time out is indicated by the symbol $\times$. The default parameters were used for each solver, unless stated otherwise in the corresponding tables. The precise calls of the solvers, 
together with examples of the encodings, 
can be found in \Cref{app:codes}\footnote{All encodings are available at github.com/SoRaTu/SmartSums.}.

\begin{table}[tb]
	\centering
	\begin{tabular}{|l|rrr|l|rrr|}
	\hline
		\multicolumn{4}{|c|}{$\mint_1$} &\multicolumn{4}{c|}{$\transfer_1$} \\
	\hline
		no \total{}
			& \multicolumn{1}{l}{\bf $\,$ Z3 \qquad} 
			& \multicolumn{1}{l}{\bf $\;$CVC4\quad} 
			& \multicolumn{1}{l|}{\bf $\,$Vampire \quad} 
			&  no \total{}
			& \multicolumn{1}{l}{\bf $\,$ Z3 \qquad} 
			& \multicolumn{1}{l}{\bf $\;$CVC4 \quad} 
			& \multicolumn{1}{l|}{\bf $\,$Vampire \quad} 
			\\
			
		$\;$ \natenc{}
			& \qquad 0.02 
			& $\times$ 
			& 0.92 
			
			& $\;$ \natenc{}
			& $\;$ $\times$   
			& $\times$  
			& 15.35 
			\\
			
		$\;$ \natenc{} surj. 
			& $\;$ $\times$   
			& $\;$ $\times$  
			& $\times$    
			
			& $\;$ \natenc{} surj. 
			& \quad 100.03  
			& $\;$ $\times$  
			& $\times$  
			\\
			
		$\;$ \intenc{}
			& 0.02 
			& 0.03 
			& $\times$  
			
			& $\;$ \intenc{}
			& 0.02 
			& 0.07 
			& $\times$   
			\\
			
		$\;$ \intenc{} surj. \quad 
			& $\times$  
			& \qquad 5.96 
			& $\times$  
			
			& $\;$ \intenc{} surj. \quad 
			& 1.02 
			& $\times$ 
			& $\times$   
			\\

	\hline
		 with \total{}$\,$ 
			& \multicolumn{1}{l}{\bf $\,$ Z3} 
			& \multicolumn{1}{l}{\bf $\;$CVC4\quad}
			&\multicolumn{1}{l|}{\bf $\,$Vampire \quad} 
			
			&  with \total{}$\,$ 
			& \multicolumn{1}{l}{\bf $\,$ Z3} 
			& \multicolumn{1}{l}{\bf $\;$CVC4\quad} 
			&\multicolumn{1}{l|}{\bf $\,$Vampire \quad} 
		\\
		
		$\;$ \natenc{} 
			& 0.03 
			& $\times$  
			& 2.92 
			
			& $\;$ \natenc{}
			& 0.28 
			& $\times$  
			& 22.54 
			\\
			
		$\;$ \natenc{} surj. 
			& 0.11 
			& $\times$  
			& $\times$  
			
			& $\;$ \natenc{} surj. 
			& 38.24 
			& $\times$ 
			& $\times$  
			\\
			
		$\;$ \intenc{} 
			& 0.02 
			& 0.03 
			& $\times$  
			
			& $\;$ \intenc{} 
			& 0.02 
			& 0.10 
			& $\times$  
			\\
			
		$\;$ \intenc{} surj. 
			& 3.81 
			& 5.95 
			& $\times$  
			
			& $\;$ \intenc{} surj. 
			& $\times$  
			& \qquad 6.56 
			& $\times$  
			\\
	\hline
	\end{tabular}
	\vspace{0.2cm}
	\caption{
		Results of $\mint_1$ and 
		 $\texttt{transferFrom}_1$ using \natenc{} and \intenc{},
		with/without the \total{} Constants and  
		with/without Surjectivity. 
	}
	\label{tbl:total_surj_table}	\vspace{-7mm}
%\vspace{-0.2cm}
      \end{table}

\noindent{\bf Experimental Analysis.} We first report on our 
experiments using different variations of \intenc{} and \natenc.
\Cref{tbl:total_surj_table} shows that asserting complete surjectivity for \intenc{} and \natenc{} is computationally hard and indeed
significantly effects the performance of automated reasoners. %automated theorem provers'
%performance.
Thus, for the following experiments 
only relevant instances of surjectivity, 
such as 
$\exists c : \Coin.\, \mcount(c) = \msum$
were asserted in \intenc{} and \natenc{}.
\Cref{tbl:total_surj_table} also illustrates the instability of
using the % adding the 
$\texttt{total}$ constant. %s has strange effects, 
%as we claim in \Cref{sec:expl_enc}. 
Some tasks seem to be easier even though 
their reasoning difficulty increased strictly by adding additional constants. 
%Others have slightly worse results, as expected.%\vspace{-0.8cm}
      
      \begin{table}[tb]
      	\centering
      	\begin{tabular}{|c|lr|lr|lr|lr|}
      		\hline
      		\multicolumn{1}{|c|}{\multirow{2}{*}{Encoding}} & \multicolumn{8}{c|}{Task}  \\
      		\cline{2-9}
      		& \multicolumn{2}{c|}{$\mint_1$} &\multicolumn{2}{c|}{$\transfer_1$}& \multicolumn{2}{c|}{$\mint_n$} & \multicolumn{2}{c|}{$\transfer_n$}\\
      		\hline
      		\texttt{uf}  &
      		\begin{tabular}{l}
      			Z3:\\
      			CVC4:\\
      			Vampire: \\
      		\end{tabular} & 	\begin{tabular}{r}
      			0.01\\
      			0.02\\
      			0.18\\
      		\end{tabular} &
      		\begin{tabular}{l}
      			Z3:\\
      			CVC4:\\
      			Vampire: \\
      		\end{tabular} & 	\begin{tabular}{r}
      			0.02\\
      			0.03\\
      			0.19\\
      		\end{tabular} &
      		\begin{tabular}{l}
      			Z3:\\
      			CVC4:\\
      			Vampire: \\
      		\end{tabular} & 	\begin{tabular}{r}
      			$\times$\\
      			$\times$\\
      			$0.35^{*}$\\
      		\end{tabular} &
      		\begin{tabular}{l}
      			Z3:\\
      			CVC4:\\
      			Vampire: \\
      		\end{tabular} & 	\begin{tabular}{r}
      			$\times$\\
      			$\times$\\
      			$0.44^{*}$\\
      		\end{tabular} \\
      		\hline
      		\texttt{nat} &
      		\begin{tabular}{l}
      			Z3:\\
      			CVC4:\\
      			Vampire: \\
      		\end{tabular} & 	\begin{tabular}{r}
      			0.02\\
      			$\times$\\
      			0.92\\
      		\end{tabular} &
      		\begin{tabular}{l}
      			Z3:\\
      			CVC4:\\
      			Vampire: \\
      		\end{tabular} & 	\begin{tabular}{r}
      			$\times$\\
      			$\times$\\
      			15.35\\
      		\end{tabular} &
      		\begin{tabular}{l}
      			Z3:\\
      			CVC4:\\
      			Vampire: \\
      		\end{tabular} & 	\begin{tabular}{r}
      			$\times$\\
      			$\times$\\
      			$23.23^{\dagger}$\\
      		\end{tabular} &
      		\begin{tabular}{l}
      			Z3:\\
      			CVC4:\\
      			Vampire: \\
      		\end{tabular} & 	\begin{tabular}{r}
      			$\times$\\
      			$\times$\\
      			$228.22^{\dagger}$ \\
      		\end{tabular} \\
      		\hline
      		\texttt{int}  &
      		\begin{tabular}{l}
      			Z3:\\
      			CVC4:\\
      			Vampire: \\
      		\end{tabular} & 	\begin{tabular}{r}
      			0.02\\
      			0.03\\
      			$\times$ \\
      		\end{tabular} &
      		\begin{tabular}{l}
      			Z3:\\
      			CVC4:\\
      			Vampire: \\
      		\end{tabular} & 	\begin{tabular}{r}
      			0.02\\
      			0.07\\
      			$\times$ \\
      		\end{tabular} &
      		\begin{tabular}{l}
      			Z3:\\
      			CVC4:\\
      			Vampire: \\
      		\end{tabular} & 	\begin{tabular}{r}
      			0.03\\
      			0.05\\
      			$\times$ \\
      		\end{tabular} &
      		\begin{tabular}{l}
      			Z3:\\
      			CVC4:\\
      			Vampire: \\
      		\end{tabular} & 	\begin{tabular}{r}
      			0.11\\
      			0.35\\
      			$\times$ \\
      		\end{tabular} \\
      		\hline
      		\texttt{id} &
      		\begin{tabular}{l}
      			Z3:\\
      			CVC4:\\
      			Vampire: \\
      		\end{tabular} & 	\begin{tabular}{r}
      			$\times$\\
      			$\times$\\
      			$7.36^{\ddagger}$ \\
      		\end{tabular} &
      		\begin{tabular}{l}
      			Z3:\\
      			CVC4:\\
      			Vampire: \\
      		\end{tabular} & 	\begin{tabular}{r}
      			$\times$\\
      			$\times$\\
      			$17.16^{\ddagger}$ \\
      		\end{tabular} &
      		\begin{tabular}{l}
      			Z3:\\
      			CVC4:\\
      			Vampire: \\
      		\end{tabular} & 	\begin{tabular}{r}
      			$\times$\\
      			$\times$\\
      			$23.52^{\ddagger}$ \\
      		\end{tabular} &
      		\begin{tabular}{l}
      			Z3:\\
      			CVC4:\\
      			Vampire: \\
      		\end{tabular} & 	\begin{tabular}{r}
      			$\times$\\
      			$\times$\\
      			$\times$ \\
      		\end{tabular} \\
      		\hline
      	\end{tabular}
      	\vspace{0.2cm}
      	\caption{
      		Smart Transitions using Implicit/Explicit \SL{} Encodings. 
      		%		Default Modes except $*$, $\dagger$ and $\ddagger$. 
      	}
      	\label{tbl:overview_table} \vspace{-0.8cm}
      \end{table}

Our most important experimental findings are shown in
\Cref{tbl:overview_table}, demonstrating that \emph{our \SL{} encodings are
suitable for automated reasoners. 
%It proves our approach works well. 
Thanks to our explicit \SL{} encodings, each solver can certify every
smart transition in at least one  encoding.} 
Our explicit \SL{} encodings are more relevant than the implicit encoding \ufenc{} as we can express and compare any two non-negative integer sums, 
whereas for \ufenc{} handling arbitrary values $n$ 
can only be done by iterating over the $\mint_1$ (or $\transfer_1$)
transition. 
This iteration requires inductive reasoning, 
which currently only Vampire could do~\cite{inductive-reasoning},
as indicated by the superscript $*$. 
Nevertheless, the transactions $\mint_1$, $\transfer_1$, 
which involve only one coin in \ufenc{}, require no inductive
reasoning 
as the actual sum is not considered; each of our solvers can certify
these examples.

We note that the tasks $\mint_n$ and $\transfer_n$ from
\Cref{tbl:overview_table}  yield a huge search space when using their
explicit \SL{} encodings within automated reasoners. 
%As our goal was not to optimize SMT-LIB encodings, 
We split these tasks into proving intermediate lemmas and proved each
of these lemmas independently, by the respective solver. In
particular, 
we used one lemma for $\mint_n$ and four lemmas for $\transfer_n$.
%Nevertheless, we believe that $\mint_n$ and $\transfer_n$ can be
%proved even without these intermediate lemmas guiding proof search, by
%fine-tuning the respective solvers and their parameters. 
In our experiments, we only used the recent theory reasoning framework
of Vampire with split queues~\cite{splitqueues} and indicate our
results in by superscript $\dagger$. 

We further remark that our explicit \SL{} encoding \idenc{} using
inductive datatypes also requires inductive reasoning about smart
transitions and beyond.  
The need of induction explains why SMT solvers failed proving our
\idenc{} benchmarks, as shown in \Cref{tbl:overview_table}. We
note that 
Vampire  found a proof using built-in induction~\cite{inductive-reasoning} and
theory-specific reasoning~\cite{splitqueues}, as indicated by
superscript $\ddagger$. 

We conclude by showing the generality of our approach beyond smart
transitions. It in fact enables fully automated reasoning about any two summations $\sum_{i \in I} g(i)$, $\sum_{i \in I} h(i)$ of non-negative integer values $g(i)$, $h(i)$ ($i \in I$) over a mutual finite set $I$. The examples of \Cref{tbl:arith_table} affirm this claim.

\begin{table}[tb]
	\centering
	\begin{tabular}{|l|l|lr|}
		\hline
		\multicolumn{2}{|c|}{Task} &\multicolumn{2}{c|}{\multirow{2}{*}{Time}} \\
		\cline{1-2}
		\multicolumn{1}{|c|}{Transition} & \multicolumn{1}{c|}{Impact} &&\\
		\hline
		\begin{tabular}{l}
			$\newbal(a_0)=\oldbal(a_0)+3$\\
			$\newbal(a_1)=\oldbal(a_1)-3$\\
		\end{tabular} & $\newsum=\oldsum$ & 
			\begin{tabular}{l}
			Z3:\\
			CVC4:\\
			Vampire: \\
		\end{tabular} & 	\begin{tabular}{r}
		0.20\\
		1.28\\
		$\times$ \\
	\end{tabular}
	\\
		\hline
		\begin{tabular}{l}
			$\newbal(a_0)=\oldbal(a_0)+4$\\
			$\newbal(a_1)=\oldbal(a_1)-2$\\
		\end{tabular} & $\newsum=\oldsum+2$ & 		\begin{tabular}{l}
		Z3:\\
		CVC4:\\
		Vampire: \\
	\end{tabular}		&\begin{tabular}{r}
		0.58\\
		7.14\\
		$\times$ \\
	\end{tabular}\\
		\hline	
		\begin{tabular}{l}
			$\newbal(a_0)=\oldbal(a_0)+5$\\
			$\newbal(a_1)=\oldbal(a_1)-3$\\
			$\newbal(a_2)=\oldbal(a_2)-1$\\
		\end{tabular} & $\newsum=\oldsum +1$ & 		\begin{tabular}{l}
		Z3:\\
		CVC4:\\
		Vampire: \\
	\end{tabular}	&	\begin{tabular}{r}
		1.52\\
		155.20\\
		$\times$ \\
	\end{tabular}\\
		\hline
	\end{tabular}
\vspace{0.2cm}
	\caption{Arithmetic Reasoning in the Explicit \SL{} Encoding \texttt{int}.} 
	\label{tbl:arith_table} \vspace{-0.8cm}
\end{table}

\section{Related work}

\noindent{\it Smart Contract Safety.}
Formal verification of smart contracts is an emerging hot topic because of 
the value of the assets stored in smart contracts, 
e.g. the DeFi software~\cite{DeFiPulse}. 
Due to the nature of the blockchain,
bugs in smart contracts are irreversible and thus the demand for
provably bug-free smart contracts is high. 

The K interactive framework has been used to verify safety of a smart contract, 
e.g. in~\cite{CAV20Deposit}.
Isabelle~\cite{isabelle} was also shown to be useful in manual, interactive verification
of smart contracts~\cite{isabelle-smart-contracts}.
We, however, focus on automated approaches.

There are also efforts to perform deductive verification of smart contracts both on 
the source level in languages such as Solidity~\cite{azureblockchain,EF,SRI} and Move~\cite{CAVMove},
as well as on the the Ethereum virtual machine (EVM) level~\cite{Certora,ccs/SchneidewindGSM20}.
This paper improves the effectiveness of these approaches by 
developing techniques for automatically reasoning about unbounded
sums. 
This way, we believe we support a more semantic-based verification of smart contracts.

Our approach differs from works using ghost variables \cite{SRI}, 
since we do not manually update the ``ghost state''.
Instead, the verifier needs only to reason about the local changes,
and the aggregate state is maintained by the axioms. 
That means other approaches assume (a) the local changes and (b) the impact on ghost variables (sum), 
whereas we only assume (a) and automatically prove $a \To b$. 
This way, we reduce the user-guidance in providing and proving (b).	

Our work complements approaches that verify smart contracts 
as finite state machines~\cite{azureblockchain} and methods, like
ZEUS~\cite{zeus}, using symbolic model checking
and abstract interpretation to verify generic safety properties for smart contracts.

The work in \cite{shuvendu} provides an extensive evaluation of ERC-20 and ERC-721 tokens. 
ERC-721 extends ERC-20 with ownership functions, 
one of which being ``approve''. 
It enables transactions on another party’s behalf. 
This is independent of our ability to express sums in first-order logic, 
as the transaction’s initiator is irrelevant to its effect.

\noindent{\it Reasoning about Financial Applications.}
Recently, the Imandra prover introduced an automated reasoning framework 
for financial applications~\cite{imandra-reasoning,imandra,financial}.
Similarly to our approach,
these works use SMT procedures to verify and/or generate counter-examples to 
safety properties of low- and high-level algorithms.
In particular, results of~\cite{imandra-reasoning,imandra,financial}
include examples of verifying ranking orders in matching logics of exchanges,
proving high-level properties such as transitivity and anti-symmetry
of such orders. 
In contrast, 
we focus on verifying properties relating 
local changes in balances to changes of the global state 
(the sum). 
Moreover, our encodings enable automated reasoning both in 
SMT solving and first-order theorem proving.

%which provides a framework for 
%interactive and automatic verification of financial algorithms
%by modeling them in a subset of OCaml with formal operational semantics,
%which can be translated into many-sorted, first-order logic with theories.
%
%Similarly to our approach, Grant et al. use SMT procedures to verify 
%--- and generate counter-examples to ---
%safety properties of low- and high-level algorithms.
%
%They show examples of verifying ranking orders in matching logics of exchanges,
%proving high-level properties such as transitivity and anti-symmetries
%of said orders.
%In contrast, 
%we focus on verifying properties relating 
%local changes in balances to changes of the global state (i.e. the sum).

\noindent{\it Automated Aggregate Reasoning.}
The theory of first-order logic with aggregate operators
has been thoroughly studied in~\cite{logic-with-aggregates,fmt}.
Though proven to be strictly more expressive than first-order logic,
both in the case of general aggregates as well as simple counting logics,
in this paper we present a practical way to encode a weakened version of aggregates 
(specifically sums) 
in first-order logic.
Our encoding (as in \Cref{sec:encoding}) works by expressing
particular sums of interest, 
harnessing domain knowledge to avoid the need of general aggregate operators.

Previous works~\cite{bapa,extending-smt-to-hol} 
in the field of higher-order reasoning do not directly
discuss aggregates.
The work of~\cite{bapa}  extends Presburger arithmetic with 
Boolean algebra for finite, unbounded sets of uninterpreted elements.
This includes a way to express the set cardinalities  and to compare
them against integer variables, 
but does not support uninterpreted functions,
such as the balance functions we use throughout our approach. 

The SMT-based framework of~\cite{extending-smt-to-hol} takes a different, 
white-box approach,
modifying the inner workings of SMT solvers to support higher-order logic.
We on the other hand treat theorem provers and SMT solvers as black-boxes,
constructing first-order formulas that are tailored to their capabilities.
This allows us to use any off-the-shelf SMT solver.

In~\cite{dpll-agg}, an SMT module for the theory of FO(Agg) is presented,
which can be used in all DPLL-based SAT, SMT and ASP solvers.
However, FO(Agg) only provides a way to express functions that have
sets or similar constructs as inputs, but not to verify their semantic behavior.

\section{Conclusions}\label{sec:conclusions}
%\vspace*{-2.5mm}
We present a methodology for reasoning about unbounded sums
in the context of \emph{smart transitions}, that is 
transitions that occur in smart contracts modeling transactions.
%We describe a theoretical (\Cref{sec:SL,sec:decidability})
%and practical (\Cref{sec:encoding,sec:exp}) setting to fully automate
%reasoning about sums to
%verify the correctness of programs that manipulate
%currencies. Such programs for example include Solidity code
%written with the ERC-20 Token Standard interface. 
%
Our sum logic \SL{} and its usage of sum constants,
instead of fully-fledged sum operators,
turns out to be most appropriate for the setting of smart contracts. We
show that \SL{} has decidable fragments (\Cref{sec:decidable-sum-logic}),
as well as undecidable ones (\Cref{sec:undecidable}). 
Using two phases to first implicitly encode \SL{} in first-order logic
(\Cref{sec:enc_uf}),
and then explicitly encode it (\Cref{sec:expl_enc}),
allows us to use off-the-shelf automated reasoners in new ways,
and automatically verify the semantic correctness of smart transitions.

%There are still various interesting directions to take further.
%It is clear that there exists a wide gap between our decidability and undecidability results. 
% In particular, the

Showing the (un)decidability of the \SL{} fragment with 
two sets of uninterpreted functions and sums is an interesting step for
further work, as this fragment supports encoding smart transition systems. 
%As for our experimental results,
%there is more to be explored.
%Notably, small changes in the encoding,
%e.g. adding equivalent constant symbols,
%can drastically affect solvers.
%It is not immediately apparent why that is the case,
%and future work could explore and refine these effects. 
%Our experiments can further be refined by studying the effects of the \total{} constants %
%and the need for lemmas in the proofs.
Another interesting direction of future work is to apply our approach to different aggregates, 
such as minimum and maximum  
and to reason about under which conditions these values stay above/below certain thresholds. 
A slightly modified setting of our \SL{} axioms can already handle 
$\min$/$\max$ aggregates in a basic way, 
namely by using $\geq$ and $\leq$ instead of equality and 
dropping the injectivity/surjectivity (respectively) axioms of the counting mechanisms.

Summing upon multidimensional arrays in various ways is yet another direction of future research. 
Our approach supports the summation over all values in all dimensions by 
adding the required number of parameters to the predicate \ind{} and by adapting the axioms accordingly.

\section*{Acknowledgements}
We thank Petra Hozzov\'a for fruitful  discussions on our encodings and Sharon Shoham-Buchbinder for her insights and contributions to this paper. This  work was partially 
funded by the ERC CoG ARTIST 101002685, the ERC StG SYMCAR 639270,
the United States-Israel Binational Science Foundation (BSF) grant No. 2016260, Grant No. 1810/18 from the Israeli Science Foundation, Len Blavatnik and the Blavatnik Family foundation,
the FWF grant LogiCS W1255-N23, the TU Wien DK SecInt and the Amazon ARA 2020 award FOREST.

\newpage

\bibliographystyle{llncs2e/splncs04}
\bibliography{includes/references}

\appendix

\newpage
\section{Proofs for Sum Logic}\label{app:sum-logic-proofs}
\begin{definition}[\SL{} Structure]
	\label{def:sum-structure}
	
	Let $\Sigma$ be an \SL{} vocabulary. 
	We write a structure $\A = (\domain, \interp) \in \structure{\Sigma}$ as a tuple
	\[
	\A = \parens{ 
		A,
		\N,
		a^\A_1, \dots, a^\A_l,
		b^\A_1, \dots, b^\A_m,
		c^\A_1, \dots, c^\A_d,
		s^\A_1, \dots, s^\A_m,
		0, 1, +, \leq
	}
	\]
	where $A = \domain(\Addr)$ is some \emph{finite}
	\footnote{
		In fact, we need only to require that the set of addresses with non-zero balances 
		$\braces{ \alpha \in \domain(\Addr) \mid \forall j. b^\A_j(\alpha) > 0 }$ be finite. 
		Except for addresses that are referred by an \Addr{} constant, 
		we can always discard all zero-balance addresses from a model. 
		Thus, we might as well limit ourselves to finite sets of addresses.
	}, possibly empty set. We have 
	$a_i^\A = \interp(a_i) \in A$; 
	$b^\A_j = \interp(b^1_j) \in \N^A$; and 
	$c_k^\A = \interp(c_k), s^\A_j = \interp(s_j) \in \N$. 
	
	We always assume that $\domain(\Nat) = \N$, 
	and that $0,1,+^2$ and $\leq^2$ are interpreted naturally. 
	For brevity, we omit them when describing \SL{} structures.
\end{definition}

\subsection*{Distinct Models Proof}
\begin{observation}
	For any set $X$ and any partition $P$ thereof, 
	it holds that $\abs{P} \leq \abs{X}$.
\end{observation}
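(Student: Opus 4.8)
The plan is to exhibit an injection from the set of blocks $P$ into the underlying set $X$, which immediately yields $\abs{P} \leq \abs{X}$. Recall that a partition $P$ of $X$ is a collection of non-empty, pairwise disjoint subsets of $X$ whose union is $X$. The two defining properties I would rely on are that every block is non-empty and that distinct blocks are disjoint; the covering property (that the blocks union to $X$) is not even needed here.

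First I would choose, for each block $B \in P$, a representative element $r(B) \in B$. This choice is well-defined precisely because each block of a partition is non-empty. For the finite sets of \Addr{} constants $\braces{a_1,\dots,a_l}$ arising in our setting no appeal to the axiom of choice is required, since one may simply take $r(B)$ to be the least-indexed constant contained in $B$. Next I would verify that the resulting map $r \colon P \to X$ is injective: if $B \neq B'$ are distinct blocks, then $B \cap B' = \emptyset$ by pairwise disjointness, so $r(B) \in B$ and $r(B') \in B'$ must be distinct. An injection from $P$ into $X$ then gives $\abs{P} \leq \abs{X}$, as claimed.

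There is essentially no substantive obstacle in this argument; it is a routine consequence of the definition of a partition. The only points requiring any attention are the well-definedness of the representative map, which hinges solely on the non-emptiness of the blocks, and its injectivity, which hinges on their pairwise disjointness—both of which are built directly into what it means for $P$ to be a partition of $X$.
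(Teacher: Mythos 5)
Your proof is correct: the representative map $r\colon P \to X$ is well-defined by non-emptiness of blocks and injective by their pairwise disjointness, which immediately gives $\abs{P} \leq \abs{X}$. The paper states this observation without any proof, treating it as immediate from the definition of a partition, and your injection-via-representatives argument is precisely the routine justification it implicitly relies on (including the remark that no choice principle is needed for the finite set $\braces{a_1,\dots,a_l}$).
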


\begin{definition}[Partition-Induced Function]
	\label{def:partition-induced-function}
	
	Let $P$ be a partition of a finite set $X$ of size $l$. 
	$P = \braces{A_1, \dots, A_{l'}}$ where $l' \leq l$.		
	
	We define the partition-induced function $f_P(x)$ (for any $x \in X$) 
	as the index $i$ such that $A_i \in P$ and $x \in A_i$.
	
	For brevity, we denote $f_P(x)$ as $P(x)$.
\end{definition}

\begin{definition}[Function-Induced Equivalence Class]
	\label{def:function-induced-class}
	
	Let $f$ be some function over some set $X$. 
	We define the function-induced equivalence class for each $x \in X$ as
	\[
	[x]_f \define \braces{ x' \in X \mid f(x') = f(x) }.
	\]
\end{definition}

\begin{definition}[Function-Induced Partition]
	\label{def:function-induced-partition}
	
	Let $f$ be some function defined over some set $X$. 
	We define the function-induced partition $P_f$ as
	\[
	P_f \define \braces{[x]_f \mid x \in X }.
	\]
\end{definition}

\begin{definition}[Partitioning Sum Terms by $P$]
	\label{def:partitioning-terms}
	
	Let $t$ be some term over an \SL{} vocabulary $\Sigma = \Sigma^{l,m,d}$ 
	(with $l$ \Addr{} constants) 
	and let $P$ be some partition of $\braces{a_1,\dots,a_l}$.
	
	We define a transformation $\mathcal{T}_P(t)$ inductively as a term over an \SL{} vocabulary 
	$\Sigma_P = \Sigma^{l',m,d}$ 
	with 
	$l' = \abs{P} \leq l$ \Addr{} constants:	
	\[
	\mathcal{T}_P(t) 
	\define \left\{
	\begin{array}{ll}
		a_{P(a_i)}
		& \mbox{if } t = a_i \\
		x_i
		& \mbox{if } t = x_i \text{ of sort } \Addr\\
		s_j
		& \mbox{if } t = s_j \\
		b_j(\mathcal{T}_P(t_1)) 
		& \mbox{if } t = b_j(t_1) 
		\text{ where $t_1$ is some $a_i$ or $x_i$} \\
		\mathcal{T}_P(t_1) + \mathcal{T}_P(t_2) 
		& \mbox{if } t = t_1 + t_2
	\end{array}
	\right.
	\]
\end{definition}

\begin{definition}[Partitioning an \SL{} Formula by $P$]
	\label{def:partitions-formulas}
	We naturally extend the terms transformation $\mathcal{T}_P$ to formulas.
\end{definition}

\begin{observation}
	For any \SL{} vocabulary $\Sigma$, $\Sigma_P \subseteq \Sigma$, 
	since $l' \leq l$. 
	Therefore, for any formula $\varphi$ in some fragment FRAG of \SL{}, 
	$\mathcal{T}_P(\varphi) \in \text{FRAG}$ as well.
\end{observation}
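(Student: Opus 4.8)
The plan is to unwind the definition of the transformation $\mathcal{T}_P$ and of the \SL{} fragments, and to check that neither passing from $\Sigma$ to $\Sigma_P$ nor applying $\mathcal{T}_P$ can leave \textsc{FRAG}. This is a definition-chasing argument, so I do not expect a real obstacle; the only two points needing a moment's care are the behaviour on the ``crossed-out'' vocabularies $\Sigma^{l,m,d}_{\cancel{+},\cancel{\leq}}$, and the fact that the fragments are specified by an \emph{upper bound} on the number of \Addr{} constants rather than an exact count.

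First I would establish the signature inclusion $\Sigma_P \subseteq \Sigma$. By \Cref{def:partitioning-terms} we have $\Sigma_P = \Sigma^{l',m,d}$ with $l' = \abs{P}$, and since $\abs{P}\leq\abs{\braces{a_1,\dots,a_l}}=l$ for any partition $P$, we get $l' \leq l$. Every symbol of $\Sigma^{l',m,d}_{+,\leq}$ --- the constants $a_1,\dots,a_{l'}$, the balance functions $b^1_j$, the constants $c_k$ and $s_j$, $0$, $1$, $+$ and $\leq$ --- already occurs in $\Sigma^{l,m,d}_{+,\leq}$ precisely because $l' \leq l$. For the reduced vocabularies the same holds: if $+$ and/or $\leq$ are not available in $\Sigma$, then $\varphi$ never mentions them, and by the inductive clauses of $\mathcal{T}_P$ no new occurrence of $+$ is ever created (the clause for $t_1 + t_2$ merely recurses) and $\leq$ is never introduced, so $\mathcal{T}_P(\varphi)$ stays within the same reduced signature. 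Hence $\Sigma_P \subseteq \Sigma$ in all cases.

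Next I would observe, using \Cref{def:partitioning-terms} and its extension to formulas (\Cref{def:partitions-formulas}), that $\mathcal{T}_P(\varphi)$ is by construction a well-formed formula over $\Sigma_P$: the recursion maps each \Addr{} constant $a_i$ to the \Addr{} constant $a_{P(a_i)}$, each \Addr{} variable to itself, each $s_j$ to $s_j$, each application $b_j(\cdot)$ to an application of the same $b_j$, and each $t_1 + t_2$ to $\mathcal{T}_P(t_1) + \mathcal{T}_P(t_2)$; the extension to formulas commutes with the Boolean connectives and with universal quantification over \Addr{}. In particular $\mathcal{T}_P$ preserves being a universal sentence, preserves that all quantifiers range over \Addr{}, introduces no new function or relation symbol, leaves $m$ and $d$ unchanged, and only lowers $l$ to $l'$. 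Since every fragment \textsc{FRAG} used in the paper --- the universal \Addr{}-sentence fragment, the $\parens{l,1,d}$-\textsc{FRAG} fragment, and their $\cancel{+}$/$\cancel{\leq}$ variants --- is cut out by exactly such syntactic restrictions together with an upper bound on $\parens{l,m,d}$ and a choice of which of $+,\leq$ are present, and since a formula over the smaller signature $\Sigma_P$ is a fortiori a formula over $\Sigma$ satisfying the same bounds, I conclude $\mathcal{T}_P(\varphi)\in\textsc{FRAG}$ whenever $\varphi\in\textsc{FRAG}$.

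The one thing to be careful not to over-read is ``$\Sigma_P \subseteq \Sigma$'': it is an inclusion of \emph{signatures}, not of models, and all it is used for is that the purely syntactic class \textsc{FRAG} is insensitive to shrinking or re-embedding the ambient signature in this manner. With that caveat the claim is immediate, and it is exactly in this form that it feeds into the decision procedure of \Cref{thm:sum-decidability}: one needs each $\mathcal{T}_P(\varphi)$ to remain inside \textsc{FRAG} so that \Cref{thm:small-models-for-single} and \Cref{thm:presburger-reduction} still apply to it.
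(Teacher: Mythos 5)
Your proof is correct and follows exactly the route the paper intends: the paper leaves this Observation unproved as immediate from \Cref{def:partitioning-terms}, and your definition-chasing — $l' = \abs{P} \leq l$ gives the signature inclusion, $\mathcal{T}_P$ introduces no new symbols and preserves the universal-\Addr{}-sentence shape, and the fragments are syntactic classes insensitive to shrinking the ambient signature — is precisely the justification being taken for granted. Your added care about the crossed-out vocabularies and about formulas over $\Sigma_P$ being a fortiori formulas over $\Sigma$ is a welcome tightening but not a different argument.
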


\begin{definition}[Distinct Structures]
	\label{def:distinct-structures}
	
	An \SL{} structure $\A$ is considered \emph{distinct} when
	$
	\abs{ \braces{ a_1^\A, \dots, a_l^\A } } = l
	$. 
	I.e. the $l$ \Addr{} constants represent 
	$l$ distinct elements in $\domain(\Addr)$.
\end{definition}

\distinctModelsTheorem*
\begin{longproof}{\Cref{thm:distinct-models}}
	~\\
	\proofpart{
		If $\varphi$ has an \SL{} model, 
		then there exists some partition $P$ such that 
		$\mathcal{T}_P(\varphi)$ 
		has a \emph{distinct} \SL{} model ($\To$)
	}
	
	Let $\A$ be some \SL{} model of $\varphi$ and let 
	$f$ be the mapping from $\braces{a_1,\dots,a_l}$ to $A$, i.e
	\[
	f(a_i) \triangleq a_i^\A
	\]
	
	Let $P$ be the partition (of size $l'$) induced by $f$ 
	and we construct a distinct \SL{} model 
	\[A' = \parens{
		A,
		a'_1,\dots,a'_{l'},
		b_1^\A, \dots, b_m^\A,
		c_1^\A,\dots,c_n^\A,
		s_1^\A, \dots, s_m^\A
	}
	\] 
	for $\mathcal{T}_P(\varphi)$, where 
	$A$, 
	$b_1^\A, \dots, b_m^\A$, 
	$c_1^\A,\dots,c_n^\A$, 
	and $s_1^\A, \dots, s_m^\A$ are taken from $\A$.
	
	For every $i' \in [1,l']$, $a'_{i'}$ is defined as 
	$
	a'_{i'} = a^\A_i
	$
	for some $i \in [1,l]$ such that $P(a_i) = i'$.
	
	\begin{remark}
		The choice of $i$ is unimportant, 
		since for any two indices $i_1,i_2$, 
		if $P(a_{i_1}) = i' = P(a_{i_2})$ 
		then by definition of $P$, $a^\A_{i_1} = a^\A_{i_2}$.
	\end{remark}
	
	\begin{observation}
		$\A'$ is distinct and holds the sum property.
	\end{observation}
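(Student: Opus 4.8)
The plan is to verify the two halves of the observation separately; both follow immediately from how the structure $\A'$ was constructed, so I expect a short argument whose only mildly delicate point is the distinctness claim.

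\emph{The sum property.} First I would note that the structure $\A'$ built above shares with $\A$ the same domain $A = \domain(\Addr)$, the same balance functions $b_1^\A,\dots,b_m^\A$, and the same sum constants $s_1^\A,\dots,s_m^\A$; only the interpretations of the $\Addr$ constants were altered. In particular $A$ is finite, being the domain of sort $\Addr$ of an \SL{} structure (\Cref{def:sum-structure}). Since $\A$ is an \SL{} model of $\varphi$, it satisfies the Sum Property \eqref{eq:sum-property}, that is, $s_j^\A = \sum_{a \in A} b_j^\A(a)$ for every $1 \le j \le m$. Reading this off for the components of $\A'$ yields $s_j^{\A'} = s_j^\A = \sum_{a \in A} b_j^\A(a) = \sum_{a \in A} b_j^{\A'}(a)$, which is exactly the Sum Property for $\A'$; hence $\A'$ holds the sum property.

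\emph{Distinctness.} Here I would unfold the construction: $P$ is the partition of $\braces{a_1,\dots,a_l}$ induced (\Cref{def:function-induced-partition}) by the map $f(a_i) \define a_i^\A$, so two constants $a_i$ and $a_k$ lie in the same class of $P$ precisely when $a_i^\A = a_k^\A$, while $a'_{i'}$ is set to $a^\A_i$ for any index $i$ with $P(a_i) = i'$ (the preceding remark shows that the choice of such an $i$ is immaterial). Now fix two distinct indices $i_1', i_2' \in [1,l']$ and let $i_1, i_2$ be the indices used to define $a'_{i_1'}$ and $a'_{i_2'}$. Since $P(a_{i_1}) = i_1' \ne i_2' = P(a_{i_2})$, the constants $a_{i_1}$ and $a_{i_2}$ belong to different classes of $P$, so $a^\A_{i_1} \ne a^\A_{i_2}$, i.e.\ $a'_{i_1'} \ne a'_{i_2'}$. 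Hence $a'_1,\dots,a'_{l'}$ denote $l'$ pairwise distinct elements and $\abs{\braces{a'_1,\dots,a'_{l'}}} = l'$, which is the definition of a distinct structure (\Cref{def:distinct-structures}).

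The only genuine obstacle is making sure the distinctness argument is insensitive to the arbitrary choice of class representatives; this is handled once and for all by the remark immediately preceding the observation. Beyond that everything is bookkeeping, since all the arithmetic data of $\A'$ is copied verbatim from the \SL{} model $\A$.
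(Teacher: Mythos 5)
Your proof is correct and follows exactly the route the paper intends: the paper states this observation without proof, treating both halves as immediate from the construction of $\A'$ (the arithmetic data is copied verbatim from $\A$, so the Sum Property transfers, and the new $\Addr$ constants are representatives of distinct classes of the function-induced partition, so they are pairwise distinct). Your write-up merely spells out these two points in full detail, including the correct appeal to the preceding remark about the irrelevance of the choice of class representative.
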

	
	\begin{nclaim}\label{claim:equal-closed-terms-in-A-A'}
		For any closed term $t$ over $\Sigma$, 
		$\interp(t) = \interp'(\mathcal{T}_P(t))$ 
		(i.e. the interpretation of $t$ in $\A$ equals 
		to the interpretation of $\mathcal{T}_P(t)$ in $\A'$).
	\end{nclaim}	
	\begin{proof}
		Since $\mathcal{T}_P(t) = t$ for all terms except terms containing $a_i$, 
		and since $\A'$ is identical to $\A$ except for \Addr{} constants, 
		we only need to consider this kind of terms. 
		
		Moreover, since $\mathcal{T}_P$ is defined inductively, 
		it suffices to prove the claim for the basis terms $a_i$.
		
		Let $t = a_i$ for some $i \in [1,l]$, and let $i' = P(a_i)$:
		
		\begin{align*}
			\interp'(\mathcal{T}_P(t)) 
			&= \interp' ( \mathcal{T}_P(a_i) ) \\
			&= \interp' ( a_{i'} ) \\
			&= a'_{i'} \\
			&= a^\A_i \\
			&= \interp(a_i) = \interp(t)
		\end{align*}
	\end{proof}
	
	\begin{nclaim}
		\label{claim:equal-open-terms-in-A-A'}
		For any term $t$ with free variables 
		$x_1,\dots, x_r$ (of sort \Addr{}), 
		for all $\alpha_1,\dots,\alpha_r \in A$, 
		for any assignment $\Delta$, 
		let $\Delta' = \Delta \subs$, and therefore
		$
		\interp_{\Delta'}(t)
		=
		\interp'_{\Delta'}(\mathcal{T}_P(t))
		$.
	\end{nclaim}
	\begin{proof}
		Identical to the proof of 
		\Cref{claim:equal-closed-terms-in-A-A'}.
	\end{proof}
	
	\begin{nclaim}\label{claim:distinct-sat-1}
		Let $\xi$ be a sub-formula of $\varphi$, therefore:
		
		\begin{enumerate}
			\item 
				If $\xi$ is a closed formula then 
				$\A \sat \xi \iff \A' \sat \mathcal{T}_P(\xi)$
			
			\item 
				If $\xi$ is a formula with free variables 
				$x_1,\dots,x_r$ 
				then for any 
				$\alpha_1,\dots,\alpha_r \in A$,
				$
				\A \sat \xi \subs
				\iff
				\A' \sat \mathcal{T}_P(\xi) \subs
				$
		\end{enumerate}
	\end{nclaim}
	
	\noindent
	Since $\varphi$ is a closed formula, 
	and since $\A \sat \varphi$, 
	it holds that $\A' \sat \mathcal{T}_P(\varphi)$ 
	and therefore $\A'$ is a distinct \SL{} model for $\mathcal{T}_P(\varphi)$.
	
	\begin{proof}[\Cref{claim:distinct-sat-1}]
		Let us consider the following cases:
		
		\proofcase{$\xi = t_1 \approx t_2$ without free variables}
		Follows from \Cref{claim:equal-closed-terms-in-A-A'}.
		
		\proofcase{$\xi = t_1 \approx t_2$ with free variables $x_1,\dots, x_r$}
		Follows from \Cref{claim:equal-open-terms-in-A-A'}.
		
		\proofcase{$\xi = \neg \zeta$ without free variables}
		$\zeta$ is also a closed sub-formula of $\varphi$ 
		and from the induction hypothesis:
		
		\begin{align*}
			\A \sat \xi
			&\iff \A \nsat \zeta \\
			&\iff \A' \nsat \mathcal{T}_P(\zeta) \\
			&\iff \A' \sat \mathcal{T}_P(\xi)
		\end{align*}
		
		\proofcase{$\xi = \neg \zeta$ with free variables $x_1,\dots,x_r$}
		$\zeta$ is also a sub-formula of $\varphi$ with free variables 
		$x_1,\dots,x_r$ 
		and from the induction hypothesis, 
		for any $\alpha_1,\dots,\alpha_r \in A$:
		
		\begin{align*}
			\A \sat &\xi \subs \\
			&\iff 
			\A \nsat \zeta \subs \\
			&\iff 
			\A' \nsat \mathcal{T}_P(\zeta) \subs \\
			&\iff 
			\A' \sat \mathcal{T}_P(\xi) \subs
		\end{align*}
		
		\proofcase{$\xi = \zeta_1 \lor \zeta_2$ without free variables}
		$\zeta_1,\zeta_2$ are also closed sub-formulas of $\varphi$, 
		and from the induction hypothesis:
		
		\begin{align*}
			\A \sat \xi
			&\iff \A \sat \zeta_1 \text{ or } \A \sat \zeta_2 \\
			&\iff \A' \sat \mathcal{T}_P(\zeta_1) \text{ or } \A' \sat \mathcal{T}_P(\zeta_2) \\
			&\iff \A' \sat \mathcal{T}_P(\xi)
		\end{align*}
		
		\proofcase{$\xi = \zeta_1 \lor \zeta_2$ with free variables $x_1,\dots,x_r$}
		$\zeta_1,\zeta_2$ are also sub-formulas of $\varphi$ with (at most) free variables 
		$x_1,\dots,x_r$, and from the induction hypothesis, 
		for any $\alpha_1,\dots,\alpha_r$:
		
		\begin{align*}
			\A &\sat \xi \subs \\
			&\iff 
			\A \sat \zeta_1 \subs \\
			&\qquad\text{ or } 
			\A \sat \zeta_2 \subs \\
			&\iff 
			\A' \sat \mathcal{T}_P(\zeta_1) \subs \\
			&\qquad\text{ or } 
			\A' \sat \mathcal{T}_P(\zeta_2) \subs \\
			&\iff 
			\A' \sat \mathcal{T}_P(\xi) \subs
		\end{align*}
		
		\proofcase{$\xi = \forall x.\zeta$ without free variables}
		$\zeta$ is a sub-formula of $\varphi$ with (at most) one free variable $x$. 
		From the induction hypothesis:
		
		\begin{align*}
			\A \sat \xi
			&\iff \text{For any } \alpha \in A.\A \sat \zeta \sub \\
			&\iff \text{For any } \alpha \in A.\A' \sat \mathcal{T}_P(\zeta) \sub \\
			&\iff \A \sat \mathcal{T}_P(\xi)
		\end{align*}
		
		\proofcase{$\xi = \forall x.\zeta$ with free variables $x_1,\dots,x_r$}
		$\zeta$ is a sub-formula of $\varphi$ with (at most) $m+1$ free variables 
		$x,x_1,\dots,x_r$. 
		From the induction hypothesis, 
		for any $\alpha_1,\dots,\alpha_r \in A$:
		
		\begin{align*}
			\A 
			&\sat \xi \subs \\
			&\iff \text{For any }\alpha \in A.\A \sat \zeta  \subs[\dosub{x}{\alpha},]\\
			&\iff \text{For any }\alpha \in A.\A' \sat \mathcal{T}_P(\zeta) \subs[\dosub{x}{\alpha},] \\
			&\iff \A' \sat \mathcal{T}_P(\xi) \subs
		\end{align*}
	\end{proof}
	
	\proofpart{
		If there exists some partition $P$ such that $\mathcal{T}_P(\varphi)$ 
		has a \emph{distinct} \SL{} model, 
		then $\varphi$ has an \SL{} model ($\Leftarrow$)
	}
	Let $\A'$ be some \SL{} model for $\mathcal{T}_P(\varphi)$ 
	and we construct an \SL{} model 
	\[
	\A = \parens{
		A, 
		a^\A_1,\dots, a^\A_l,
		b^\A_m, \dots, b^\A_m,
		c^\A_n, \dots, c^\A_n, 
		s^\A_m, \dots, s^\A_m
	}
	\]
	for $\varphi$, where 
	$A$, 
	$b^\A_m, \dots, b^\A_m$,
	$c^\A_n, \dots, c^\A_n$,
	and $s^\A_m, \dots, s^\A_m$
	are taken from $\A'$.
	
	For every $i \in [1,l]$, $a^\A_i$ is defined as:
	$
	a^\A_i = a'_{P(a_i)}
	$.
	
	\begin{observation}
		$\A$ is a Sum structure, and holds the sum property.
	\end{observation}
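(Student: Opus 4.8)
The plan is to establish the two assertions packed into this observation: that the structure $\A$ just assembled is a bona fide \SL{} structure over the full vocabulary $\Sigma = \Sigma^{l,m,d}$, and that it satisfies the \ref{eq:sum-property}. Both are immediate from the way $\A$ was built on top of the distinct model $\A'$ of $\mathcal{T}_P(\varphi)$, so this step is really bookkeeping that prepares the ground for the substantive part of the backward ($\Leftarrow$) direction.

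First I would verify well-formedness. By construction $\A$ reuses $\domain(\Addr) = A$ and $\domain(\Nat) = \N$ from $\A'$, along with the interpretations of $0$, $1$, $+^2$, $\leq^2$, the numeric constants $c_k$, the sum constants $s_j$, and the balance functions $b_j \in \N^A$. The only new interpretations are those of the $l$ address constants, and for each $i \in [1,l]$ we set $a^\A_i = a'_{P(a_i)}$; since $1 \le P(a_i) \le l' = \abs{P}$ and $a'_{P(a_i)} \in A$, every address constant of $\Sigma$ gets an element of $A$. Hence all symbols of $\Sigma$ receive interpretations of the correct sort, and because $A$ is finite (it is the $\Addr$-domain of the \SL{} structure $\A'$), $\A$ fits \Cref{def:sum-structure}.

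For the \ref{eq:sum-property}, the point is simply that reinterpreting the address constants leaves untouched every quantity occurring in the equation $\interp(s_j) = \sum_{a \in \domain(\Addr)}[\interp(b_j)](a)$: $\A$ and $\A'$ agree on the $\Addr$-domain $A$, on each balance function $b_j$, and on each sum constant $s_j$. Since $\A'$ is an \SL{} model of $\mathcal{T}_P(\varphi)$, it satisfies $s_j^{\A'} = \sum_{a \in A} b_j^{\A'}(a)$ for every $j$; substituting the identical data of $\A$ gives $s_j^{\A} = \sum_{a \in A} b_j^{\A}(a)$, which is exactly the \ref{eq:sum-property} for $\A$.

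There is no real obstacle here. The only thing worth stating explicitly is that the sum ranges over the entire finite domain $A$ rather than over the named addresses, so it is visibly insensitive to how the constants $a_i$ are interpreted; granting that, the \ref{eq:sum-property} transfers from $\A'$ to $\A$ in one line. The genuinely non-trivial work is deferred to the claims that follow — the Part-2 analogues of \Cref{claim:equal-closed-terms-in-A-A'} and \Cref{claim:distinct-sat-1} — which use this well-formed $\A$ to conclude $\A \sat \varphi$.
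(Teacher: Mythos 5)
Your argument is correct and matches the paper, which states this observation without proof as immediate from the construction: since $\A$ inherits the domain $A$, the balance functions $b_j^\A$, and the sum constants $s_j^\A$ verbatim from the \SL{} model $\A'$ and only reinterprets the address constants (which do not occur in the \ref{eq:sum-property}), both well-formedness and the sum property transfer directly. Nothing further is needed here.
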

	
	\begin{nclaim}\label{claim:equal-closed-terms-in-A-A'-2}
		For any closed term $t$, 
		$\interp(t) = \interp'(\mathcal{T}_P(t))$
	\end{nclaim}
	\begin{proof}
		Similarly to \Cref{claim:equal-closed-terms-in-A-A'}, 
		we only need to consider $t = a_i$, and we get:
		
		\begin{align*}
			\interp(t)
			&= \interp(a_i) \\
			&= a^\A_i \\
			&= a'_{P(a_i)} \\
			&= \interp' ( a_{ P(a_i) } ) = \interp' ( \mathcal{T}_P(t) )
		\end{align*}
	\end{proof}\begin{nclaim}\label{claim:equal-open-terms-in-A-A'-2}
		For any term $t$ with free variables $x_1,\dots, x_r$, 
		for any assignment $\Delta$, 
		and for any $\alpha_1,\dots,\alpha_r \in A$, 
		we define $\Delta' = \Delta \subs$, and
		\[
		\interp_{\Delta'}(t \subs)
		=
		\interp'_{\Delta'}(\mathcal{T}_P(t) \subs)\;. \]
	\end{nclaim}
	\begin{proof}
		Identical to the proof of \Cref{claim:equal-closed-terms-in-A-A'-2}.
	\end{proof}
	
	\begin{nclaim}\label{claim:distinct-sat-2}
		Let $\xi$ be a sub-formula of $\varphi$, therefore:
		
		\begin{enumerate}
			\item If $\xi$ is a closed formula, 
			then $\A' \sat \mathcal{T}_P(\xi) \iff \A \sat \xi$.
			
			\item If $\xi$ is a formula with free variables 
			$
			x_1, \dots, x_r
			$ then for every 
			$
			\alpha_1, \dots, \alpha_r \in A
			$,
			
			$
			\A' \sat \mathcal{T}_P(\xi) 
			\subs
			\iff
			\A \sat \xi 
			\subs
			$
		\end{enumerate}
	\end{nclaim}
	
	\noindent
	Since $\varphi$ is a closed formula, 
	and since $\A \sat \mathcal{T}_P(\varphi)$ 
	we get that $\A' \sat \varphi$.
	
	\begin{proof}[\Cref{claim:distinct-sat-2}]
		In the same vain of \Cref{claim:distinct-sat-1}, 
		this follows from \Cref{claim:equal-closed-terms-in-A-A'-2,claim:equal-open-terms-in-A-A'-2}.
	\end{proof}
\end{longproof}

\subsection*{Small \Addr{} Space Proof}
\smallModelProperty*

\smallModelsForSingle*
\begin{longproof}{\Cref{thm:small-models-for-single}}
Let there be some universal, 
closed formula $\varphi$ over 
$\Sigma = \Sigma^{l,1,d}_{\cancel{+},\cancel{\leq}}$ 
and let there be some minimal structure 
$\A \in \structure{\Sigma}$ such that
$\A \sat_\text{SL} \varphi$ 
(i.e. $\A$ is an \SL{} model for $\varphi$).

We denote the (finite) size of $\A$ as $z \triangleq \abs{A}$, 
and we assume towards contradiction that 
$z \geq l+\abs{\varphi}+1$ 
(as our bound function is $\kappa(x) = l+x+1$). 
We construct a smaller model $\A'$ for $\varphi$. 
Thus contradicting the minimality of $\A$, 
and proving our desired claim.

We write out the given model
$
\A = \parens{
	A,
	a^\A_1, \dots, a^\A_l,
	b^\A,
	c^\A_1, \dots, c^\A_n,
	s^\A	
}
$

We know that $\abs{A} = z > l$, and therefore the set 
\[
S \triangleq A \setminus \braces{ 
	a^\A_1, \dots,  a^\A_l
}
\]
is not empty. Let us define 
\[
\alpha^* \triangleq \argmin_{\alpha \in S}\braces{b^\A(\alpha)}
\]
and $b^* \triangleq b^\A(\alpha^*)$.
We construct the  \emph{smaller} \SL{} structure \linebreak
$
\A' = \parens{
	A', a^{\A'}_1, \dots, a^{\A'}_l, b^{\A'}, c^{\A'}_1, \dots, c^{\A'}_n, s^{\A'}	
}
$, where
\begin{align}
	A' 
	&\triangleq 
	A \setminus \braces{ \alpha^* } \\
	a^{\A'}_i 
	&\triangleq 
	a^\A_i \\
	b^{\A'} 
	&\triangleq 
	b^\A \text{ projected on }A' \\
	s^{\A'} &\triangleq s^\A - b^*
\end{align}

and we postpone defining $c^{\A'}_k$ for now. We observe that:

\begin{observation}
	If $\A$ is a \emph{distinct} \SL{} model, then so is $\A'$.
\end{observation}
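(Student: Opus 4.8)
The goal is to confirm that the shrinking step preserves distinctness: under the assumption that $\A$ is a \emph{distinct} \SL{} model of $\varphi$, the reduced structure $\A'$ built by deleting $\alpha^*$ from the domain is again distinct. Recall that distinctness means the $l$ \Addr{} constants denote $l$ pairwise-distinct elements, i.e.\ $\abs{\braces{a^{\A'}_1,\dots,a^{\A'}_l}} = l$. The plan is to lean on the single feature of $\alpha^*$ that matters here, namely that it was chosen from $S \triangleq A \setminus \braces{a^\A_1,\dots,a^\A_l}$, so $\alpha^*$ is \emph{not} the image of any \Addr{} constant.

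First I would record that, by construction, $a^{\A'}_i = a^\A_i$ for every $i \in [1,l]$, so the \Addr{} constants are re-interpreted by exactly the same elements as in $\A$. Next, since $\alpha^* \in S$, each $a^\A_i$ is distinct from $\alpha^*$, whence $a^\A_i \in A' = A \setminus \braces{\alpha^*}$; thus the deletion of $\alpha^*$ never removes a named element, the assignment $a^{\A'}_i = a^\A_i$ is well defined, and as subsets of $A'$ we have $\braces{a^{\A'}_1,\dots,a^{\A'}_l} = \braces{a^\A_1,\dots,a^\A_l}$. Finally I would close by cardinality: distinctness of $\A$ gives $\abs{\braces{a^\A_1,\dots,a^\A_l}} = l$, and since the two image sets coincide, $\abs{\braces{a^{\A'}_1,\dots,a^{\A'}_l}} = l$, which is exactly distinctness of $\A'$.

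There is no genuine obstacle in this observation: it is immediate once one notes that $\alpha^*$ lies outside the range of the constant interpretations, a fact built into the definition of $S$. The only point needing a flicker of care is the implicit well-definedness check, that every $a^\A_i$ survives into $A'$, since otherwise the \Addr{} constants of $\A'$ would be undefined. This small bookkeeping step supports the larger minimality argument, whose real difficulty lies elsewhere, namely in choosing the new numeric constants $c^{\A'}_k$ and re-establishing both $\A' \sat \varphi$ and the Sum Property after losing one address, which is exactly where the hypothesis $z \geq l + \abs{\varphi} + 1$ will do its work.
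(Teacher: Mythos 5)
Your proof is correct and matches the paper's (implicit) reasoning: the paper states this observation without proof precisely because, as you note, $\alpha^*$ is drawn from $S = A \setminus \braces{a^\A_1,\dots,a^\A_l}$ and the constants are reinterpreted unchanged, so the image set $\braces{a^{\A'}_1,\dots,a^{\A'}_l} = \braces{a^\A_1,\dots,a^\A_l}$ retains cardinality $l$. Your explicit well-definedness check (that every $a^\A_i$ survives into $A'$) is a worthwhile addition, but it is the same argument the paper takes as immediate.
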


~\\
\noindent
Firstly, we prove the following claim:

\begin{nclaim}
	$\A'$ holds the sum property.
\end{nclaim}
\begin{proof}
	Since $\A$ holds the sum property for $s^\A$:
	\begin{align*}
		s^{\A'} 
		&= s^\A - b^* \\
		&= \sum_{\alpha \in A} b^\A(\alpha) - b^* \\
		&= 
		\parens{
			\sum_{\alpha \in A \setminus \braces{\alpha^*}} b^\A(\alpha)
		} 
		+ 
		\underbrace{b^\A(\alpha^*) }_{ = b^*}
		-
		b^* \\
		&= \sum_{\alpha \in A'} b^{\A'}(\alpha)
	\end{align*}
\end{proof}

The definition for $c^{\A'}_k$ depends on $b^*$. 
If $b^* = 0$ then simply 
$c^{\A'}_k = c^\A_k$. 
In this case, we prove the following:

\begin{lemma}
	\label{lem:small-with-zero}
	For any term $t$, assignment $\Delta$,
	\[
	\interp_\Delta(t) = \interp'_\Delta(t)
	\]
\end{lemma}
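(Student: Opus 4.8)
The plan is to prove \Cref{lem:small-with-zero} by a straightforward structural induction on the term $t$, exploiting the fact that when $b^* = 0$ the structure $\A'$ agrees with $\A$ on \emph{every} symbol of the vocabulary and differs from it only by having discarded the single domain element $\alpha^*$. Concretely, $a^{\A'}_i = a^\A_i$ for all $i$, $c^{\A'}_k = c^\A_k$ for all $k$ (by the definition of $c^{\A'}_k$ in this case), $s^{\A'} = s^\A - b^* = s^\A$, and $b^{\A'}$ is just $b^\A$ restricted to $A'$. Throughout, $\Delta$ is an assignment for which both sides are defined, so every \Addr-variable is mapped by $\Delta$ into $A'$ (and $\N$ is the domain of sort $\Nat$ in both structures).

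For the base cases I would argue as follows. If $t = x$ is a variable, then $\interp_\Delta(x) = \Delta(x) = \interp'_\Delta(x)$ (the same element, which lies in $A'$ when $x$ is of sort $\Addr$). If $t$ is an address constant $a_i$, a numeral constant $c_k$, one of $0, 1$, or any numeral, equality is immediate since $\A$ and $\A'$ interpret these symbols identically. If $t = s$, then $\interp_\Delta(s) = s^\A = s^{\A'} = \interp'_\Delta(s)$, using $b^* = 0$.

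The only inductive case is $t = b(t_1)$. Here the crucial observation — and the one place the argument uses the shape of the vocabulary $\Sigma^{l,1,d}_{\cancel{+},\cancel{\leq}}$ — is that $\Sigma$ contains no function symbol whose range sort is $\Addr$, so $t_1$ is necessarily an address constant $a_i$ or a variable $x$. In either case $\interp_\Delta(t_1) = \interp'_\Delta(t_1) =: \alpha$ by the base case, and moreover $\alpha \neq \alpha^*$: if $t_1 = a_i$ then $\alpha = a^\A_i \in \{a^\A_1,\dots,a^\A_l\}$, which is disjoint from $\{\alpha^*\}$ because $\alpha^* \in S = A \setminus \{a^\A_1,\dots,a^\A_l\}$; and if $t_1 = x$ then $\alpha = \Delta(x) \in A'$, which again excludes $\alpha^*$. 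Hence $\alpha \in A'$, and so $\interp'_\Delta(b(t_1)) = b^{\A'}(\alpha) = b^\A(\alpha) = \interp_\Delta(b(t_1))$, closing the induction.

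I do not expect a serious obstacle here: the lemma is essentially bookkeeping. The single subtle point is guaranteeing that an $\Addr$-sorted subterm can never evaluate to the deleted element $\alpha^*$, which is exactly why it matters that the vocabulary has no $\Addr$-valued functions (and that $+$ and $\leq$, which anyway produce values of sort $\Nat$, are absent). It is also worth flagging that this clean agreement breaks as soon as $b^* \neq 0$: then $s^{\A'} \neq s^\A$, the $s$ base case fails, and $c^{\A'}_k$ must be redefined — precisely the situation treated separately in the remainder of the proof.
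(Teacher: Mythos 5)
Your proof is correct and follows essentially the same route as the paper, whose entire argument is the observation that when $b^* = 0$ every symbol (including $s$, since $s^{\A'} = s^\A - b^* = s^\A$, and the $c_k$, which are copied unchanged in this case) is interpreted identically in $\A$ and $\A'$, so term evaluation coincides. Your structural induction merely unfolds that observation, with the added care that no $\Addr$-sorted subterm can evaluate to the deleted element $\alpha^*$ --- a point the paper leaves implicit.
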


\begin{proof}
	Since $b^* = 0$, we get that $s^{\A'} = s^\A$ 
	and therefore the interpretations of $\A$ and $\A'$ are identical 
	--- $\interp = \interp'$. 
\end{proof}

\begin{corollary}
	Since the domain of $\A'$ is a strict subset of the domain of $\A$, 
	for any formula $\xi$, 
	$\A \sat \xi \To \A' \sat \xi$, 
	and in particular $\A'$ is also an \SL{} model for $\varphi$.
\end{corollary}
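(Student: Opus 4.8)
The plan is to read the corollary as the familiar fact that universal sentences persist downward to substructures, applied to the substructure $\A'$ of $\A$. First I would spell out that $\A'$ really is a substructure of $\A$: its $\Nat$-domain is untouched ($\domain'(\Nat) = \N$), its $\Addr$-domain is $A' = A \setminus \braces{\alpha^*}$, and — crucially — since $\alpha^*$ was chosen in $S = A \setminus \braces{a^\A_1,\dots,a^\A_l}$, every constant $a^\A_i$ still lies in $A'$, while $c^\A_k$, $s^\A$, $0$, $1$ all lie in the unchanged set $\N$. The function $b^{\A'}$ is by construction $b^\A$ restricted to $A'$, and $+,\leq$ are the standard operations on $\N$. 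Invoking \Cref{lem:small-with-zero} (available since we are in the case $b^* = 0$, where in addition $s^{\A'} = s^\A$) gives that $\interp$ and $\interp'$ agree on every symbol, hence that $\interp_\Delta(t) = \interp'_\Delta(t)$ for every term $t$ and every assignment $\Delta$ taking values in $A'$.

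Next I would prove absoluteness of quantifier-free \SL{} formulas between $\A$ and $\A'$ by a routine structural induction: for every quantifier-free $\psi$ whose free variables are among $x_1,\dots,x_r$ and every $\alpha_1,\dots,\alpha_r \in A'$, the structure $\A$ satisfies $\psi$ under the assignment $x_i \mapsto \alpha_i$ iff $\A'$ does. The atomic case ($\psi$ of the form $t_1 \approx t_2$ or $t_1 \leq t_2$) is immediate from the previous step, since both structures interpret $\approx$ and $\leq$ identically on $\N$ and agree on all term values, and the Boolean cases $\neg$ and $\lor$ are trivial. Now every \SL{} formula is a universal sentence $\xi = \forall x_1\cdots\forall x_r.\,\psi$ with $\psi$ quantifier-free and with quantification only over $\Addr$. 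If $\A \sat \xi$, then for arbitrary $\alpha_1,\dots,\alpha_r \in A'$ we also have $\alpha_1,\dots,\alpha_r \in A$, so $\A$ satisfies $\psi$ under $x_i \mapsto \alpha_i$, hence by absoluteness so does $\A'$; as the tuple was arbitrary, $\A' \sat \xi$. This establishes the first half of the corollary.

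For the second half I would take $\xi = \varphi$: since $\A$ is an \SL{} model of $\varphi$ it is in particular a model of $\varphi$, so $\A' \sat \varphi$, and combining this with the Claim already proved (that $\A'$ holds the Sum Property) and with the fact that $A' \subseteq A$ is finite, $\A'$ is an \SL{} model of $\varphi$. This also finishes the sub-case $b^* = 0$ of the overall argument, because $\abs{A'} = z - 1 < z = \abs{A}$ contradicts the assumed minimality of $\A$ (the remaining sub-case $b^* > 0$ will require a genuinely different choice of the interpretations $c^{\A'}_k$ and is handled separately). I do not expect a real obstacle here: the whole content is the downward preservation of $\forall$-sentences under substructures, and the only points that need care are the many-sorted bookkeeping — all quantifiers range over $\Addr$, so every witness available in $\A$ over $A$ is still available in $\A'$ over $A' \subseteq A$ — and the observation that deleting $\alpha^*$ does leave a substructure, which is precisely why $\alpha^*$ was chosen outside the interpretations of the $\Addr$ constants.
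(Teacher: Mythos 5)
Your proposal is correct and follows essentially the same route the paper intends: the corollary is exactly the downward persistence of universal sentences to substructures, justified by the identity of interpretations from \Cref{lem:small-with-zero} in the $b^* = 0$ case, the inclusion $A' \subseteq A$, and the fact that $\alpha^*$ was chosen outside the interpretations of the $\Addr$ constants; you then correctly combine this with the earlier Claim that $\A'$ holds the sum property to conclude that $\A'$ is an \SL{} model of $\varphi$. The paper leaves these routine details implicit, and your only (harmless) refinement is to note explicitly that the blanket phrase ``for any formula $\xi$'' must be read as ranging over the universal sentences to which \SL{} is restricted.
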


\noindent
In the case that $b^* > 0$, we define
\[
c^{\A'}_k 
\triangleq 
\left\{\begin{array}{ll}
	s^\A - b^* & \mbox{if } c^\A_k = s^\A \\
	c^\A_k + 1 & \mbox{if } c^\A_k \geq s^\A - b^* \text{ and } c^\A_k \neq s^\A \\
	c^\A_k & \mbox{otherwise}
\end{array}\right.
\]
and the proof is more involved. We firstly make the following observations:

\begin{observation}
	\label{obs:small-const-vs-sum}
	For any $k \in [1,d]$, 
	\[
	c^\A_k = s^\A \iff c^{\A'}_k = s^{\A'}
	\]
\end{observation}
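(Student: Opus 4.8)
The plan is to read the claim directly off the three-case definition of $c^{\A'}_k$, using only two elementary facts: that $b^* > 0$ is assumed in this branch of the proof (so that $s^{\A'} = s^\A - b^* < s^\A$ is a \emph{strict} decrease), and that the three cases of the definition are exhaustive. In particular I would first record that the ``otherwise'' case is precisely the condition $c^\A_k < s^\A - b^*$, since the negation of ``$c^\A_k = s^\A$'' together with the negation of ``$c^\A_k \geq s^\A - b^*$ and $c^\A_k \neq s^\A$'' simplifies to exactly that.

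For the direction $c^\A_k = s^\A \To c^{\A'}_k = s^{\A'}$ there is nothing to do beyond applying the first clause of the definition: if $c^\A_k = s^\A$ then $c^{\A'}_k = s^\A - b^* = s^{\A'}$ by fiat. For the converse I would argue by contradiction. Assume $c^{\A'}_k = s^{\A'} = s^\A - b^*$ and suppose $c^\A_k \neq s^\A$, so that one of the last two clauses applies. If the middle clause applies, i.e.\ $c^\A_k \geq s^\A - b^*$ (and $c^\A_k \neq s^\A$), then $c^{\A'}_k = c^\A_k + 1 \geq (s^\A - b^*) + 1 > s^\A - b^* = s^{\A'}$, contradicting $c^{\A'}_k = s^{\A'}$. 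If instead the last clause applies, i.e.\ $c^\A_k < s^\A - b^*$, then $c^{\A'}_k = c^\A_k < s^\A - b^* = s^{\A'}$, again a contradiction. Hence $c^\A_k = s^\A$, which completes the biconditional.

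I do not expect a genuine obstacle here; the only point that needs care is the bookkeeping of the case conditions, and specifically that the hypothesis $b^* > 0$ is what makes the shifted sum $s^{\A'}$ strictly smaller than $s^\A$ and the middle-clause inequality strict, so that the three cases really are mutually exclusive and no off-by-one slips in. This observation is then exactly what is needed downstream to argue that the constants $c_k$ retain their position relative to the sum after the element $\alpha^*$ is removed, which is the key to showing that the smaller structure $\A'$ is still an \SL{} model of $\varphi$.
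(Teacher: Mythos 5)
Your case analysis is correct and is exactly the unpacking the paper intends: the statement appears there as an unproved observation, evident from the three-clause definition of $c^{\A'}_k$ together with $s^{\A'} = s^\A - b^*$. The only quibble is that the mutual exclusivity of the three clauses already follows from the explicit side-conditions ($c^\A_k \neq s^\A$ in the middle clause) rather than from $b^* > 0$, but this does not affect the argument.
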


\begin{observation}
	\label{obs:small-const-vs-const}
	For any $k_1,k_2 \in [1,d]$, 
	\[
	c^\A_{k_1} = c^\A_{k_2} \iff c^{\A'}_{k_1} = c^{\A'}_{k_2}
	\]
\end{observation}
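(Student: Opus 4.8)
The plan is to fold the case-split definition of $c_k^{\A'}$ into a single function on $\N$ and reduce the biconditional to that function's injectivity. Write $n_0 \define s^\A - b^*$; this is a genuine natural number, since $\alpha^* \in S \subseteq A$ gives $s^\A = \sum_{\alpha \in A} b^\A(\alpha) \ge b^\A(\alpha^*) = b^*$. Define $\mu : \N \to \N$ by $\mu(s^\A) = n_0$, by $\mu(x) = x + 1$ for $x \ge n_0$ with $x \ne s^\A$, and by $\mu(x) = x$ for $x < n_0$. Since we are in the case $b^* > 0$, we have $n_0 < s^\A$, so these clauses are mutually exclusive and jointly exhaustive, $\mu$ is total, and comparing with the definition of $c_k^{\A'}$ shows $c_k^{\A'} = \mu(c_k^\A)$ for every $k \in [1,d]$.

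The forward direction of the Observation is then immediate: applying $\mu$ to $c_{k_1}^\A = c_{k_2}^\A$ yields $c_{k_1}^{\A'} = c_{k_2}^{\A'}$. For the converse it suffices to show that $\mu$ is injective. Partition $\N$ as $A_1 = \braces{s^\A}$, $A_2 = \braces{x \in \N : x \ge n_0,\ x \ne s^\A}$, $A_3 = \braces{x \in \N : x < n_0}$. The restriction of $\mu$ to each block is injective (a singleton domain on $A_1$, the shift $x \mapsto x+1$ on $A_2$, the identity on $A_3$), and the three images are pairwise disjoint, since $\mu(A_3) \subseteq [0, n_0 - 1]$, $\mu(A_1) = \braces{n_0}$, and $\mu(A_2) \subseteq [n_0 + 1, \infty)$. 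Hence any two naturals with the same $\mu$-image lie in the same block and are equal, so $\mu$ is injective; thus $c_{k_1}^{\A'} = c_{k_2}^{\A'}$, i.e. $\mu(c_{k_1}^\A) = \mu(c_{k_2}^\A)$, forces $c_{k_1}^\A = c_{k_2}^\A$.

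The only genuinely delicate point is the bookkeeping that $n_0 \ge 0$ and that the three clauses really do partition $\N$ when $b^* > 0$; once that is pinned down the rest is routine. I note in passing that the same $\mu$ (and its injectivity) is exactly what will be used to transport all \Nat{}-valued data from $\A$ to $\A'$ while sending $s^\A$ to $s^{\A'} = n_0$, and it also yields \Cref{obs:small-const-vs-sum} at once; one could alternatively prove the present Observation by a direct $3 \times 3$ case analysis on which clause $c_{k_1}^\A$ and $c_{k_2}^\A$ fall into, but routing through injectivity of $\mu$ avoids the repetition.
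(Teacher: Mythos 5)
Your proof is correct. The paper states this as an unproved Observation, implicitly relying on exactly the case analysis you package into the injectivity of $\mu$; your reformulation $c_k^{\A'}=\mu(c_k^\A)$ with $\mu$ injective is a clean and faithful justification, and the bookkeeping you flag ($n_0\ge 0$ via the sum property, and the three clauses partitioning $\N$ precisely because $b^*>0$ in this branch of the definition) is all in order.
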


\noindent
The central claim we need to prove is:
\begin{nclaim}
	\label{claim:small-sat}
	Let $\xi$ be a sub-formula of $\varphi$,
	\begin{enumerate}
		\item If $\xi$ is a closed, quantifier-free formula then 
		\[
		\A \sat \xi \iff \A' \sat \xi
		\]
		
		\item If $\xi$ is a quantifier-free formula with free variables $x_1, \dots, x_r$, 
		then for every \linebreak $\alpha_1, \dots, \alpha_r \in A'$,
		\[
		\A \sat \xi \subs
		\iff
		\A' \sat \xi \subs
		\]
		
		\item If $\xi$ is a closed, universally quantified formula then 
		\[
		A \sat \xi \To \A' \sat \xi
		\]
		
		\item If $\xi$ is a universally quantified formula with free variables $x_1, \dots, x_r$, 
		then for every $\alpha_1, \dots, \alpha_r \in A'$,
		\[
		\A \sat \xi \subs
		\To
		\A' \sat \xi \subs
		\]
	\end{enumerate}
\end{nclaim}

\begin{corollary}
	$\A' \sat \varphi$.
\end{corollary}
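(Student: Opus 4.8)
The plan is to obtain $\A' \sat \varphi$ as the immediate closed-sentence instance of \Cref{claim:small-sat}. By our standing restriction, $\varphi$ is a \emph{closed, universal} sentence over $\Sigma$ whose quantifiers all range over the \Addr{} sort, and the construction started from the hypothesis $\A \sat_\text{SL} \varphi$. Since $\varphi$ has no free variables and is universally quantified, it is exactly the shape addressed by item~3 of the claim (and, in the degenerate case where $\varphi$ carries no quantifier at all, by item~1). Instantiating item~3 with $\xi = \varphi$ gives the implication $\A \sat \varphi \To \A' \sat \varphi$; chaining this with the hypothesis $\A \sat \varphi$ yields $\A' \sat \varphi$ in a single step.

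The one point worth flagging is the \emph{direction} of the implication available for quantified formulas. Unlike the quantifier-free items~1--2, which supply a full biconditional, items~3--4 provide only the preservation implication $\A \sat \To \A' \sat$. This is precisely what we need here: we are transporting satisfaction \emph{downward} from the larger model $\A$ to its sub-model $\A'$, obtained by deleting the minimum-balance non-constant element $\alpha^*$, and a universal sentence true over $A$ remains true after restricting the \Addr{} domain. No assignment to free variables enters, so there is genuinely no further case analysis to perform; the parallel conclusion in the branch $b^* = 0$ was already delivered by the corollary to \Cref{lem:small-with-zero}.

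Finally, I would record how this corollary closes the contradiction underlying \Cref{thm:small-models-for-single}. We have already checked that $\A'$ is distinct and holds the sum property, and the corollary adds the last ingredient: $\A'$ is a \emph{model of} $\varphi$. Thus $\A'$ is an \SL{} model of $\varphi$ with $\abs{A'} = z - 1 < z = \abs{A}$, contradicting the minimality of $\A$ under the standing assumption $z \geq l + \abs{\varphi} + 1$. Hence that assumption must fail, giving $\abs{A} \leq l + \abs{\varphi}$, well within the claimed bound $\kappa(\abs{\varphi}) = l + \abs{\varphi} + 1$. The only ``obstacle'' in this step is thus not in the derivation itself, which is a routine instantiation, but in having correctly matched $\varphi$ to item~3 and in recognizing that the one-sided preservation direction is exactly the one required.
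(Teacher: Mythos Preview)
Your proof is correct and follows essentially the same route as the paper: both observe that $\varphi$ is a closed, universally quantified sub-formula of itself, invoke item~3 of \Cref{claim:small-sat} with $\xi=\varphi$, and chain the resulting implication with the hypothesis $\A\sat\varphi$. Your additional commentary on the one-sided direction of the implication and on how the corollary closes the minimality contradiction is accurate but goes beyond what the paper records for this step.
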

\begin{proof}
	Since $\varphi$ is a closed, universally quantified sub-formula of itself, ]
	and since it is given that $\A \sat \varphi$, 
	we get from \Cref{claim:small-sat} that $\A' \sat \varphi$.
\end{proof}

In order to prove \Cref{claim:small-sat} we firstly need to prove the following two lemmas:

\begin{lemma}
	\label{lem:small-balances-vs-sum}
	For any $\alpha \in A'$,
	\[
	b^\A(\alpha) = b^{\A'}(\alpha) < s^{\A'} < s^\A
	\]
\end{lemma}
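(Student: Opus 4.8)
The plan is to establish the three parts of the chain $b^\A(\alpha) = b^{\A'}(\alpha) < s^{\A'} < s^\A$ separately, and two of them are immediate. For the first equality, note that by construction $b^{\A'}$ is just $b^\A$ restricted to $A' = A \setminus \braces{\alpha^*}$, so the two functions agree on every $\alpha \in A'$. For the last inequality, recall that $s^{\A'} \triangleq s^\A - b^*$, that $s^\A = \sum_{\beta \in A} b^\A(\beta) \geq b^\A(\alpha^*) = b^*$ (so the subtraction is a genuine natural number, as already used when checking that $\A'$ holds the Sum Property), and that the case hypothesis of this branch of the proof is $b^* > 0$; hence $s^{\A'} < s^\A$.

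The only part requiring an argument is the middle inequality $b^{\A'}(\alpha) < s^{\A'}$ for an arbitrary $\alpha \in A'$. Here I would invoke the already-proved claim that $\A'$ satisfies the Sum Property, which gives $s^{\A'} = \sum_{\beta \in A'} b^{\A'}(\beta) = b^{\A'}(\alpha) + \sum_{\beta \in A' \setminus \braces{\alpha}} b^{\A'}(\beta)$, so the strict inequality reduces to finding some $\beta \in A' \setminus \braces{\alpha}$ with $b^{\A'}(\beta) > 0$. Since $\alpha^*$ was chosen as an $\argmin$ of $b^\A$ over $S \triangleq A \setminus \braces{a^\A_1, \dots, a^\A_l}$, every element of $S$ has $b^\A$-value at least $b^* > 0$, so every element of $S \setminus \braces{\alpha^*} \subseteq A'$ has strictly positive balance in $\A'$. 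It then remains to count: by the contradiction hypothesis $z \geq l + \abs{\varphi} + 1$ we have $\abs{S} \geq z - l \geq \abs{\varphi} + 1$, hence $\abs{S \setminus \braces{\alpha^*}} \geq \abs{\varphi} \geq 2$, so $S \setminus \braces{\alpha^*}$ contains an element $\beta \neq \alpha$, which witnesses the inequality.

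I expect the (mild) main obstacle to be exactly this last bookkeeping: making sure that deleting the minimal-balance non-constant address still leaves at least one positive-balance address distinct from the $\alpha$ under consideration. This is where the assumed lower bound $z \geq l + \abs{\varphi} + 1$ on the size of the minimal model $\A$ is genuinely used (together with the fact that any formula has length at least $2$); everything else in the lemma is purely definitional or a restatement of the Sum Property of $\A'$.
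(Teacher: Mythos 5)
Your proposal is correct and follows essentially the same route as the paper: the equality and the inequality $s^{\A'} < s^\A$ are definitional (using $b^* > 0$), and the middle inequality is obtained from the Sum Property of $\A'$ by exhibiting a second positive-balance, non-constant address surviving the deletion of $\alpha^*$, using the size bound $z \geq l + \abs{\varphi} + 1$ and the minimality of $b^*$ exactly as the paper does.
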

\begin{proof}
	First, since $b^{\A'}$ is defined to be a projection of $b^\A$ on 
	a subset of its domain $A' \subseteq A$ 
	it is obvious that $b^\A(\alpha) = b^{\A'}(\alpha)$ 
	for any $\alpha \in A'$.
	
	Also, since $s^{\A'} = s^\A - b^*$ 
	and we know that $b^* > 0$, 
	it is clear that $s^{\A'} < s^\A$.
	
	What remains to prove is that for any $\alpha \in A'$, 
	$b^{\A'}(\alpha) < s^{\A'}$. 
	$A'$ has at least $l+\abs{\varphi}+1$ elements, 
	and therefore 
	$A' \setminus \braces{ a^{\A'}_1, \dots, a^{\A'}_l }$ 
	has at least 2 elements. 
	Let us denote them: $\alpha_1, \alpha_2$.
	
	For both of these elements, 
	\[
	b^{\A'}(\alpha_1), b^{\A'}(\alpha_2) > 0
	\]
	since otherwise they would have been chosen as $\alpha^*$ 
	--- contradicting $b^*$'s minimality.
	
	\noindent
	For any element $\alpha$, since $\A'$ holds the sum property,
	\begin{align*}
		s^{\A'} &= \sum_{ \alpha' \in A' } b^{\A'}(\alpha')
		\\
		&= b^{\A'}(\alpha) + \sum_{ \alpha' \in A' \setminus \braces{\alpha} } b^{\A'}(\alpha')
	\end{align*}
	
	We can re-arrange and get that
	\[
	b^{\A'}(\alpha) = s^{\A'} - \sum_{\alpha' \in A' \setminus \braces{\alpha} } b^{\A'}(\alpha')
	\]
	and since $A' \setminus \braces{ \alpha }$ contains either $\alpha_1$ or $\alpha_2$, it must be that
	\[
	\sum_{\alpha' \in A' \setminus \braces{\alpha} } b^{\A'}(\alpha') > 0
	\]
	and therefore $b^{\A'}(\alpha) < s^{\A'}$.
\end{proof}

\begin{lemma}
	\label{lem:small-sum-vs-formula-length}
	\[
	\abs{\varphi} < s^{\A'} < s^\A
	\]
\end{lemma}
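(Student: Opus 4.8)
The plan is to bound $s^{\A'}$ from below by counting the addresses of $A'$ that are \emph{not} named by any of the $l$ address constants, and exploiting that every such address carries a strictly positive balance. The upper bound $s^{\A'} < s^\A$ is already in hand: by construction $s^{\A'} = s^\A - b^*$ and we are in the case $b^* > 0$, as recorded in \Cref{lem:small-balances-vs-sum}. So the only real content is the lower bound $\abs{\varphi} < s^{\A'}$.

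First I would invoke the standing size assumption on $\A$: it guarantees $\abs{A'} \geq l + \abs{\varphi} + 1$, so after removing the at most $l$ elements named by $a_1^{\A'},\dots,a_l^{\A'}$, the set $S' \define A' \setminus \braces{a_1^{\A'},\dots,a_l^{\A'}}$ still has at least $\abs{\varphi}+1$ elements. Since $A' = A \setminus \braces{\alpha^*}$ and the address constants are interpreted the same way in $\A$ and $\A'$, we have $S' \subseteq S \setminus \braces{\alpha^*}$, where $S$ is the set over which $\alpha^*$ was chosen as a balance-minimiser. Consequently every $\alpha \in S'$ satisfies $b^{\A'}(\alpha) = b^\A(\alpha) \geq b^\A(\alpha^*) = b^* \geq 1$, using minimality of $\alpha^*$ and that $b^* > 0$.

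Then the conclusion follows from the Sum Property of $\A'$:
\[
s^{\A'} \;=\; \sum_{\alpha \in A'} b^{\A'}(\alpha) \;\geq\; \sum_{\alpha \in S'} b^{\A'}(\alpha) \;\geq\; \abs{S'} \;\geq\; \abs{\varphi} + 1 \;>\; \abs{\varphi},
\]
which together with $s^{\A'} < s^\A$ is exactly the claimed chain $\abs{\varphi} < s^{\A'} < s^\A$. I do not expect a genuine obstacle here; the one thing to handle carefully is the bookkeeping — checking that the address constants really leave at least $\abs{\varphi}+1$ addresses untouched in $A'$ (this is precisely where the slack built into the bound function $\kappa$ and the contradiction hypothesis $z \geq l + \abs{\varphi} + 1$ is spent) and that $S' \subseteq S$, so that the minimality of $b^*$ legitimately applies to every element being counted.
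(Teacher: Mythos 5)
Your proof is correct and follows essentially the same route as the paper's: both bound $s^{\A'}$ from below by summing over the at least $\abs{\varphi}+1$ addresses of $A'$ not named by constants, each of which must have a strictly positive balance by the minimality of $b^*$ (the paper phrases this as ``otherwise it would have been chosen as $\alpha^*$,'' you phrase it as $b^{\A'}(\alpha) \geq b^* \geq 1$), and both get the upper bound directly from $s^{\A'} = s^\A - b^*$ with $b^* > 0$. No substantive difference.
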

\begin{proof}
	Let us examine the set $S \triangleq A' \setminus \braces{a^{\A'}_1, \dots, a^{\A'}_l}$. 
	It has at least $\abs{\varphi}+1$ elements.
	
	For any $\alpha \in S$, $b^{\A'}(\alpha) > 0$, 
	otherwise it would have been chosen as $\alpha^*$ and we'd have $b^* = 0$ 
	--- which contradicts $b^*$'s minimality.
	
	Therefore, on the one hand,
	\[
	\sum_{\alpha \in S} b^{\A'}(\alpha) \geq \abs{S} \geq \abs{\varphi}+1 > \abs{\varphi}
	\]
	
	And, on the other hand, since $S \subseteq A'$, we know that
	\[
	\sum_{\alpha \in S} b^{\A'}(\alpha) \leq \sum_{\alpha \in A'} b^{\A'}(\alpha) = s^{\A'}
	\]
	
	And combining the two results we get that 
	$\abs{\varphi} < s^{\A'}$, 
	and since $b^* > 0$, $s^{\A'} < s^\A$.
\end{proof}

\begin{proof}[Proof of \Cref{claim:small-sat}]
We prove the claim using structural induction.

\proofstep{$\xi = t_1 \approx t_2$ without free variables}
$\xi$ is a closed, quantifier-free formula, 
so we prove that 
$\A \sat \xi \iff \A' \sat \xi$. 
We consider the following cases:

\proofcase{$t_1 = t_2$} Tautology.

\proofcase{$t_1 = s, t_2 = \text{numeral}$}	
Since $\xi$ is a sub-formula of 
$\varphi$, $\abs{\xi} \leq \abs{\varphi}$, 
and therefore the numeral is less than $\abs{\varphi}$.

However, 
$s^\A, s^{\A'} > \abs{\varphi}$ 
from 
\Cref{lem:small-sum-vs-formula-length} 
and therefore $\A, \A' \nsat \xi$.

\proofcase{$t_1 = s, t_2 = c_k$}
From Observation \ref{obs:small-const-vs-sum} we know that 
$
s^\A = c^\A_k \iff s^{\A'} = c^{\A'}_k
$ 
and therefore $\A \sat \xi \iff \A' \sat \xi$.

\proofcase{$t_1 = s, t_2 = b(a_i)$}
From \Cref{lem:small-balances-vs-sum} 
we know that for any $\alpha \in A'$, 
$b^{\A'}(\alpha) = b^\A(\alpha) < s^{\A'} < s^\A$ 
and in particular for 
$\alpha = a^{\A}_i = a^{\A'}_i$, $\A, \A' \nsat \xi$.

\proofcase{$t_1 = c_k, t_2 = \text{numeral}$}
If $c^\A_k = c^{\A'}_k$ then trivially 
$\A \sat \xi \iff \A' \sat \xi$.

Otherwise, 
$c^\A_k , c^{\A'}_k \geq s^{\A'}$. 
However, since $\xi$ is a sub-formula of $\varphi$, 
the numeral is less than $\abs{\varphi}$, 
and $s^{\A'} > \abs{\varphi}$, 
from \Cref{lem:small-sum-vs-formula-length}. 
Therefore, $\A, \A' \nsat \xi$.

\proofcase{$t_1 = c_{k_1}, t_2 = c_{k_2}$}
Trivial, from Observation \ref{obs:small-const-vs-const}.

\proofcase{$t_1 = c_k, t_2 = b(a_i)$}
If $c^\A_k = c^{\A'}_k$ then 
from \Cref{lem:small-balances-vs-sum}, 
$\A \sat \xi \iff \A' \sat \xi$.

Otherwise, $c^\A_k , c^{\A'}_k \geq s^{\A'}$. 
However, 
from \Cref{lem:small-balances-vs-sum} 
we know that for any $a \in A'$ 
(and in particular for $a_j^\A$), 
$b^{\A'}(a) = b^\A(a) < s^{\A'} \leq c^\A_k, c^{\A'}_k$. 
Therefore, $\A, \A' \nsat \xi$.

\proofcase{$t_1 = a_{i_1}, t_2 = a_{i_2}$}
Trivial, since the interpretation of the 
\Addr{} constants is identical in $\A$, $\A'$.

Any other case is symmetrical to one of the cases above.

\proofstep{$\xi = t_1 \approx t_2$ with free variables $x_1,\dots,x_r$}
$\xi$ is a quantifier-free formula, 
so we prove that for any $\alpha_1, \dots, \alpha_r \in A'$
\[
\A \sat \xi \subs
\iff
\A' \sat \xi \subs
\]

We consider the following cases:
\proofcase{$t_1 = t_2$}
Tautology.

\proofcase{$t_1 = s, t_2 = b(x)$}
From \Cref{lem:small-balances-vs-sum} we know that for any 
$\alpha \in A'$, 
$b^{\A'}(\alpha) = b^\A(\alpha) < s^{\A'} < s^\A$ 
and in particular 
$\A, \A' \nsat \xi \sub$.

\proofcase{$t_1 = b(x), t_2 = \text{numeral}$}
Trivial, since 
from \Cref{lem:small-balances-vs-sum}, 
for every $\alpha \in A'$, 
$b^\A(\alpha) = b^{\A'}(\alpha)$.

\proofcase{$t_1 = b(x), t_2 = c_k$}
Let there be $\alpha \in A'$, 
we separate into the following cases:
\begin{enumerate}
	\item If $c^\A_k = c^{\A'}_k$:
	
	From \Cref{lem:small-balances-vs-sum} we get
	\begin{align*}
		\A \sat \xi \sub
		&\iff \A \sat b(\alpha) \approx c_k \\
		&\iff b^\A(\alpha) = c^\A_k \\
		&\iff b^{\A'}(\alpha) = c^{\A'}_k \\
		&\iff \A' \sat b(\alpha) \approx c_k \\
		&\iff \A' \sat \xi \sub
	\end{align*}
	
	\item Otherwise, $c^\A_k \geq s^{\A'}$ 
	and $c^{\A'}_k \geq s^{\A'}$. 
	From \Cref{lem:small-balances-vs-sum} 
	we get $b^{\A'}(a) = b^\A(a) < s^{\A'} < s^\A$ 
	and therefore 
	$\A \nsat \xi \sub$ 
	and 
	$\A' \nsat \xi \sub$.
\end{enumerate}

\proofcase{$t_1 = b(x), t_2 = b(a_i)$}
Trivial from \Cref{lem:small-balances-vs-sum}.

\proofcase{$t_1 = b(x_1), t_2 = b(x_2)$}
Trivial from \Cref{lem:small-balances-vs-sum}.

\proofcase{$t_1 = a_i, t_2 = x$}
Trivial, since the interpretation of the address constants is identical 
in $\A$ and $\A'$.

\proofcase{$t_1 = x_1, t_2 = x_2$}
Trivially holds for any $a \in A'$.

Any other case is symmetrical to one of the cases above.

\proofstep{$\xi = \neg \zeta$ without free variables}
Since $\varphi$ is a universal formula we can assume it is in prenex form, 
and therefore, $\zeta$ is a closed, quantifier-free formula, shorter than $\xi$ 
and from the induction hypothesis,
$
\A \sat \zeta \iff \A' \sat \zeta
$,
and therefore
\begin{align*}
	\A \sat \xi
	&\iff \A \nsat \zeta \\
	&\iff \A' \nsat \zeta \\
	&\iff \A' \sat \xi
\end{align*}

\proofstep{$\xi = \neg \zeta$ with free variables $x_1,\dots,x_r$}
Similarly to the closed formula case, 
$\zeta$ is a quantifier-free formula with free variables 
$x_1,\dots,x_r$ 
and the claim holds from the induction hypothesis.

\proofstep{$\xi = \zeta_1 \lor \zeta_2$ without free variables}
Similarly to the negation formula case, 
$\zeta_1,\zeta_2$ are closed, quantifier-free formulas 
and the claim holds from the induction hypothesis.

\proofstep{$\xi = \zeta_1 \lor \zeta_2$ with free variables $x_1, \dots, x_r$}
Similarly to the closed formula case, 
and the claim holds from the induction hypothesis.

\proofstep{$\xi = \forall v.\zeta$ without free variables}
Since $\xi$ is a universal formula, 
we need to show that if $\A \sat \xi$ 
then $\A' \sat \xi$ (but not vice versa).

$\zeta$ is a universally quantified formula with (at most) one free variable $x$. 
If $\A \sat \xi$ then for every $\alpha \in A$,
\[
\A \sat \zeta \sub
\]

and in particular for any $\alpha \in A' \subsetneq A$.

$\zeta$ is shorter than $\xi$ and therefore the induction hypothesis holds:
\[
\A' \sat \zeta \sub
\]

for any $\alpha \in A'$, and therefore $\A' \sat \xi$.

\proofstep{$\xi = \forall v.\zeta$ with free variables $x_1, \dots, x_r$}
Let there be $\alpha_1, \dots, \alpha_r \in A'$. If
\[
\A \sat \xi 
\subs
\]

then for every $\alpha \in A$,
\[
\A \sat \zeta 
\subs[\dosub{x}{\alpha},]
\]

and in particular for every $\alpha \in A' \subsetneq A$. 
From the induction hypothesis for $\zeta$ we get:
\[
\A' \sat \zeta 
\subs[\dosub{x}{\alpha},]
\]

which is true for any $a \in A'$, and therefore,
\[
\A' \sat \xi 
\subs
\]
\end{proof}
\end{longproof}

\subsection*{Presburger Reduction Proof}
\subsubsection{Defining the Transformations}
The transformation of formulas from SL to PA works by 
explicitly writing out sums as additions and universal quantifiers as conjunctions. 
Since we're dealing with a fragment of SL that has some bound function $\kappa(\cdot)$, 
we know that for given formula $\varphi$, 
there is a model with at most $\kappa(\abs{\varphi})$ elements of \Addr{} sort.

Moreover, we use 
$
\tilde \kappa \define 
\max \braces{ 
	\kappa(\abs{\varphi}), 
	l 
}
$ 
as the upper bound 
(where $l$ is the amount of \Addr{} constants). 
Since we're looking for distinct models, 
it is obvious that we need at least $l$ distinct elements.

For each balance function $b^1_j$ we have $\tilde \kappa$ constants 
$b_{1,j}, \dots, b_{\tilde \kappa,j}$.

In addition we have $\tilde \kappa$ indicator constants 
$a_1, \dots, a_{\tilde \kappa}$, 
to mark if an \Addr{} element is "active". 
An inactive element has all zero balances, 
and is skipped over in universal quantifiers.

Any \Addr{} constant $a_i$ or \Addr{} variable $x$ is handled in two ways, 
depending on the context they appear in:
\begin{itemize}
	\item If they are compared, we replace the comparison with $\top$ or $\bot$; 
	we know statically if the comparison holds, 
	since the \Addr{} constants are distinct and 
	every universal quantifier is written out as a conjunction.
	
	\item Otherwise, they must be used in some balance function $b^1_j$, 
	and then they are substituted with the corresponding $b_{i,j}$ or $b_{x,j}$ 
	(which will be determined once the universal quantifiers are unrolled).
\end{itemize}

The integral constants $c_1, \dots, c_d$ are simply copied over.

In summary:

\begin{definition}[Corresponding Presburger Vocabulary]
	\label{def:pres-vocab}
	Given the \SL{} vocabulary $\Sigma^{l,m,d}$ and a bound $\tilde \kappa \geq l$, 
	we define the \emph{corresponding Presburger vocabulary} as
	\[
	\Sigma^{l,m,d}_{\text{Pres}(\tilde \kappa)} = \text{Pres}(\Sigma^{l,m,d}, \tilde \kappa) 
	\define
	\parens{
		a_1, \dots, a_{\tilde \kappa},
		b_{1,1}, \dots, b_{\tilde \kappa,m},
		c_1, \dots, c_d,
		0, 1,
		+^2, \leq^2
	}\;. \]
\end{definition}

Firstly, we define the simpler auxiliary formula $\eta(\varphi)$ in three parts:
\begin{definition}
	\label{def:aux-1}
	We require that inactive \Addr{} elements have zero balances ---
	\[
	\eta_1(\varphi) = 
	\bigland_{i=1}^{\tilde \kappa}
	\brackets{	
		\parens{ a_i \approx 0 }
		\to 
		\parens{ \bigland_{j=1}^m b_{i,j} \approx 0 }
	}
	\]
\end{definition}

\begin{definition}
	\label{def:aux-2}
	And that elements referred by \Addr{} constants be active ---
	\[
	\eta_2(\varphi) =
	\bigland_{i=1}^l a_i \not \approx 0
	\]
\end{definition}

\begin{definition}
	\label{def:aux-3}
	Finally, we require that the active elements are a continuous sequence starting at 1. 
	Or, put differently, once an indicator is zero, all indicators following it are also zero:
	\[
	\eta_3(\varphi) = \bigland_{i=1}^{\tilde \kappa} 
	\brackets{
		a_i \approx 0 
		\to
		\parens{\bigland_{i'=i}^{\tilde \kappa} a_{i'} \approx 0}
	}
	\]
\end{definition}

\noindent
The complete auxiliary formula is then 
$
\eta(\varphi) = \eta_1(\varphi) \land \eta_2(\varphi) \land \eta_3(\varphi)
$.

In order to define $\tau(\varphi)$, 
we firstly define the transformation for terms, 
and then build up the complete transformation, using several substitutions:

\begin{definition}
	\label{def:trans-0}
	We define the terms transformation inductively, 
	and we substitute balances and \Addr{} terms (constants or variables) 
	with placeholders (marked with *), which are further substituted:
	\[
	\tau_0(t) 
	\define \left\{
	\begin{array}{ll}
		a_i^* 
		& \mbox{if } t = a_i \\
		x^* 
		& \mbox{if } t = x \text{ for some free variable} \\
		b_{1,j} + \dots+ b_{\tilde \kappa,j}
		& \mbox{if } t = s_j \\
		b^*_j(\tau_0(t_1)) 
		& \mbox{if } t = b_j(t_1) 
		\text{ where }
		t_1 \in \braces{ a_i, x }
		\\
		\tau_0(t_1) + \tau_0(t_2) 
		& \mbox{if } t = t_1 + t_2
	\end{array}
	\right.
	\]
\end{definition}

\begin{definition}
	\label{def:trans-1}
	Next we define the transformation for formulas, 
	replacing only variable placeholders:
	\[
	\tau_1(\xi)
	\define \left\{
	\begin{array}{ll}
		\tau_0(t_1) \approx \tau_0(t_2) 
		& \mbox{if } \xi = t_1 \approx t_2 \\
		\tau_0(t_1) \leq \tau_0(t_2)
		& \mbox{if } \xi = t_1 \leq t_2 \\
		\neg \tau_1(\zeta)
		& \mbox{if } \xi = \neg \zeta \\
		\tau_1(\zeta_1) \land \tau_1(\zeta_2)
		& \mbox{if } \xi = \zeta_1 \land \zeta_2 \\
		\bigland_{i = 1}^{\tilde \kappa} 
		\parens{
			a_i \approx 0
			\lor
			\tau_1(\zeta) \sub[x^*][a^*_i]
		}
		& \mbox{if } \xi = \forall x. \zeta
	\end{array}
	\right.
	\]
\end{definition}

\noindent
We can see that for any formula $\xi$ containing arbitrary terms, 
$\tau_1(\xi)$ only has $a^*_i$ and $b^*_j$ placeholders (but no $x^*$ ones).

\begin{definition}
	\label{def:sub-address-comparisons}
	Now we define a substitution $\sigma_1$ that removes 
	\Addr{} comparisons by evaluating them:
	\[
	\sigma_1 \define 
	\sub[\parens{ a^*_i \approx a^*_i }][\top]
	\sub[ \parens{ a^*_i \approx a^*_{i'} }][\bot]
	\]
	where $i,i' \in [1,\tilde \kappa]$.
\end{definition}

\begin{note}
	We first replace comparisons where $a^*_i \approx a^*_i$, 
	which is equivalent to \textbf{true} ($\top$). 
	Then any remaining comparison must be where $i \neq i'$, 
	and therefore equivalent to \textbf{false} ($\bot$).
\end{note}

\begin{definition}
	\label{def:sub-balance-placeholders}
	Finally, we're left with placeholders inside balance functions, 
	which we substitute by the corresponding balance constant:
	\[
	\sigma_2 \define
	\sub[b^*_j(a^*_i)][b_{i,j}]
	\]
	where $i \in [1,\tilde \kappa], j \in [1,m]$.
\end{definition}

\begin{definition}
	\label{def:trans-3}
	The complete transformation is then:
	\[
	\tau(\varphi) \define \tau_1(\varphi) \sigma_1 \sigma_2
	\]
\end{definition}

\noindent
Given the above definitions, 
let us recall the Presburger Reduction Theorem:

\presburgerReductionTheorem*

\begin{longproof}{\Cref{thm:presburger-reduction}}
We first define congruence between \SL{} structures 
and structures over the corresponding Presburger vocabulary, 
and we prove a general theorem about them. 
We use that congruence theorem to prove that a formula $\varphi$ has 
a \emph{distinct} \SL{} model with \emph{small \Addr{} space}
iff 
$\varphi'$ has a Standard Model of Arithmetic.
~\\

\proofpart{Congruence Lemmas}

\begin{definition}
	\label{def:SL-PA-congruence}
	
	Given an \SL{} vocabulary $\Sigma^{l,m,d}$, 
	a bound $\tilde \kappa \geq l$ and a formula $\varphi$ over $\Sigma^{l,m,d}$. 
	Then 
	$\A \in \structure{\Sigma^{l,m,d}}$ and 
	$\A' \in \structure{\text{Pres}(\Sigma^{l,m,d}, \tilde \kappa)}$ 
	are said to be congruent if the following conditions hold:
	
	\begin{enumerate}
		\item 
			\label[cong]{item:congruence-sum}
			$\A$ holds the sum property.
			
		\item 
			\label[cong]{item:congruence-sat-eta}
			$\A'$ satisfies $\eta(\varphi)$.
			
		\item 
			\label[cong]{item:congruence-bounded-size}
			$z \triangleq \abs{A} \leq \tilde \kappa$, and we write out $A = \braces{ \alpha_1, \dots, \alpha_z }$.
			
		\item 
			\label[cong]{item:congruence-address-constants}
			For any $i \in [1,l]$, $a^\A_i = \alpha_i$.
			
		\item 
			\label[cong]{item:congruence-balances}
			For any $j \in [1,m]$, for any $i \in [1,z]$, $b^{\A'}_{i,j} = b^\A_j(\alpha_i)$, and for any $i > z$, $b^{\A'}_{i,j} = 0$.
			
		\item 
			\label[cong]{item:congruence-indicators}
			For any $i \in [1,z]$, $a^{\A'}_i > 0$ and for any $i > z$, $a^{\A'}_i = 0$.
			
		\item 
			\label[cong]{item:congruence-distinct}
			$\A$ is distinct, and in particular $l \leq z$.
	\end{enumerate}
\end{definition}

\begin{lemma}
	\label{lem:congruent-nat-closed-terms}
	Let $\A,\A'$ be two congruent structures for \SL{} vocabulary $\Sigma$, bound $\tilde \kappa$ and formula $\varphi$.
	For any ground term $t$ of sort \Nat{} over $\Sigma$,
	\[
	\interp'(\tau_0(t) \sigma_2) = \interp(t)
	\]
\end{lemma}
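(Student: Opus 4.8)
The plan is to prove the identity by structural induction on the ground $\Nat$-term $t$, following the recursive shape of $\tau_0$ from \Cref{def:trans-0}. Since $t$ is ground, no variable occurs in it, so its $\Addr$-subterms are exactly the constants $a_i$; consequently the placeholder $x^*$ never appears in $\tau_0(t)$, and $\sigma_1$ is irrelevant here because it only removes $\Addr$-comparisons, which occur in formulas and not in $\Nat$-terms. After applying $\sigma_2$ the only symbols surviving in $\tau_0(t)\sigma_2$ are $0$, $1$, numerals, the $c_k$, the fresh balance constants $b_{i,j}$, and $+$. Throughout I use that in both $\A$ and $\A'$ the arithmetic symbols $0,1,+$ are interpreted in the standard way over $\N$, so numerals denote the same element on both sides.

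For the base cases: if $t$ is $0$, $1$, or a numeral, none of the clauses of $\tau_0$ or $\sigma_2$ applies, so $\tau_0(t)\sigma_2 = t$ and the claim is immediate. If $t = c_k$, again $\tau_0$ and $\sigma_2$ act as the identity, and $\interp'(c_k) = \interp(c_k)$ because the integral constants are carried over unchanged into the corresponding Presburger vocabulary. If $t = b_j(a_i)$ with $1 \le i \le l$, then $\tau_0(t) = b^*_j(a^*_i)$ and $b^*_j(a^*_i)\sigma_2 = b_{i,j}$ (legitimate since $i \le l \le \tilde\kappa$); by \Cref{item:congruence-distinct} we have $i \le z$, so \Cref{item:congruence-balances} gives $b^{\A'}_{i,j} = b^\A_j(\alpha_i)$, and \Cref{item:congruence-address-constants} gives $\alpha_i = a^\A_i$, whence $\interp'(b_{i,j}) = b^\A_j(a^\A_i) = \interp(b_j(a_i))$.

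The genuinely nontrivial case, which I expect to be the main obstacle (though still a short one), is $t = s_j$: this is where the sum constant, defined only by the Sum Property, must be matched against an explicit finite addition, and the bookkeeping of the "inactive tail" $i > z$ has to be handled correctly via the $\eta$-driven congruence conditions. Here $\tau_0(s_j) = b_{1,j} + \dots + b_{\tilde\kappa,j}$, untouched by $\sigma_2$, so $\interp'(\tau_0(s_j)\sigma_2) = \sum_{i=1}^{\tilde\kappa} b^{\A'}_{i,j}$. By \Cref{item:congruence-balances}, $b^{\A'}_{i,j} = b^\A_j(\alpha_i)$ for $i \le z$ and $b^{\A'}_{i,j} = 0$ for $i > z$, so the sum collapses to $\sum_{i=1}^{z} b^\A_j(\alpha_i) = \sum_{\alpha \in A} b^\A_j(\alpha)$, which equals $s^\A_j = \interp(s_j)$ by \Cref{item:congruence-sum} (the sum property for $\A$).

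Finally, for the inductive step $t = t_1 + t_2$: $\tau_0$ commutes with $+$ and $\sigma_2$ distributes over $+$, so $\tau_0(t)\sigma_2 = (\tau_0(t_1)\sigma_2) + (\tau_0(t_2)\sigma_2)$; applying the induction hypothesis to $t_1$ and $t_2$ and using the standard interpretation of $+$ in $\A'$ yields $\interp'(\tau_0(t)\sigma_2) = \interp(t_1) + \interp(t_2) = \interp(t)$. These cases exhaust all possibilities for a ground $\Nat$-term, which completes the induction.
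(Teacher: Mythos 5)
Your proof is correct and follows essentially the same route as the paper's: a structural induction on ground $\Nat$-terms whose key cases are $t = s_j$ (collapsing the explicit sum $\sum_{i=1}^{\tilde\kappa} b^{\A'}_{i,j}$ to $\sum_{\alpha\in A} b^\A_j(\alpha)$ via the congruence condition on balances and the Sum Property) and $t = b_j(a_i)$ (via distinctness and the congruence conditions on address constants and balances). Your explicit treatment of the base cases $0$, $1$, numerals, and $c_k$ is a minor completeness bonus the paper leaves implicit.
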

\begin{proof}
	We prove the lemma using structural induction over all possible ground terms:
	
	\proofstep{$t = s_j$ for any $j \in [1,m]$}
	From \Cref{item:congruence-sum} $\A$ holds the sum property, and therefore:
	\[
	\interp(s_j) 
	= s^\A_j
	= \sum_{\alpha \in A} b^\A_j(\alpha)
	= \sum_{i = 1}^z b^\A_j(\alpha_i)
	\]
	From \Cref{item:congruence-balances}, for any $i \in [1,z]$, 
	$
	b^{\A'}_{i,j} = b^\A_j(\alpha_i)
	$, and for any $i \in [z + 1, \tilde \kappa]$,
	$
	b^{\A'}_{i,j} = 0
	$, therefore we can write the sum above as
	\[
	\interp(s_j) 
	= \dots
	= \sum_{i = 1}^{\tilde \kappa} b^{\A'}_{i,j}
	\]
	
	From the definition of $\tau_0$ we get:
	\[
	\interp'(\tau_0(s_j) \sigma_2)
	= 
	\interp' \parens{
		\brackets{b_{1,j} + \dots + b_{\tilde \kappa, j}} \sigma_2
	}
	= \sum_{i = 1}^{\tilde \kappa} b^{\A'}_{i,j}
	\]
	Since we have no placeholders, $\sigma_2$ has no effect, and we get the same expression as for $\interp(t)$.
	
	\proofstep{$t = b_j(a_i)$ where $i \in [1,l], j \in [1,m]$}
	From \Cref{item:congruence-address-constants}, $a^\A_i = \alpha_i$, and we get:
	\[
	\interp(t) = b^\A_j(\alpha_i)
	\]
	
	From the definition of $\tau_0$ and $\sigma_2$ we get:
	\[
	\tau_0(t) \sigma_2
	= \brackets{b^*_j(a^*_i)} \sigma_2
	= b_{i,j}
	\]
	
	And therefore, since $\A$ is distinct, $i \leq l \leq z$, and from \Cref{item:congruence-balances},
	\[
	\interp'(\tau_0(t) \sigma_2) = \interp'(b_{i,j}) = b^{\A'}_{i,j} = b^\A_j(\alpha_i)
	\]
	
	\proofstep{$t = t_1 + t_2$}
	Follows from the induction hypothesis for $t_1$ and $t_2$ (since $+$ is interpreted in the same way in $\A$ and $\A'$).
\end{proof}

\begin{lemma}
	\label{lem:congruent-nat-open-terms}
	Let $\A,\A'$ be two congruent structures for \SL{} vocabulary $\Sigma$, bound $\tilde \kappa$ and formula $\varphi$.
	For any term $t$ with at most $r$ free variables $x_1, \dots, x_r$, for any indices $i_1, \dots, i_r \in [1,z]$, for any assignment $\Delta$ we define
	\[
	\Delta' = \Delta 
	\subsi
	\]
	
	and the following holds:
	\[
	\interp_{\Delta'}(t)
	=
	\interp' \parens{	
		\tau_0(t) 
		\subsi[][r][x^*][a^*]
		\sigma_2
	}
	\]
\end{lemma}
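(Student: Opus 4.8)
The plan is to prove \Cref{lem:congruent-nat-open-terms} by structural induction on the \Nat{}-sorted term $t$, closely following the proof of \Cref{lem:congruent-nat-closed-terms} and adding the single new base case in which $t$ is a balance function applied to a free variable. A useful preliminary observation is that $\sigma_1$ plays no role here: it rewrites only \Addr{}-sorted equalities $a^*_i \approx a^*_{i'}$, and $\tau_0$ applied to a \Nat{}-sorted term introduces no such equality, so once the variable placeholders have been replaced by address placeholders, the only placeholders remaining in $\tau_0(t)$ have the shape $b^*_j(a^*_i)$, which is exactly what $\sigma_2$ eliminates.

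The routine cases are handled as in the closed-term lemma. For $t$ equal to $0$, $1$, a numeral, or a constant $c_k$, the term is untouched by $\tau_0$, by the placeholder substitution, and by $\sigma_2$, and the symbol is interpreted identically in $\A$ and $\A'$, so both sides agree. For $t = s_j$ the computation is verbatim that of \Cref{lem:congruent-nat-closed-terms}: $\tau_0(s_j) = b_{1,j} + \dots + b_{\tilde\kappa,j}$, no substitution affects these constants, and \Cref{item:congruence-sum}, \Cref{item:congruence-balances} and \Cref{item:congruence-bounded-size} together give $\interp'(\tau_0(s_j)\sigma_2) = \sum_{i=1}^{\tilde\kappa} b^{\A'}_{i,j} = \sum_{i=1}^{z} b^\A_j(\alpha_i) = s^\A_j = \interp_{\Delta'}(s_j)$. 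For $t = b_j(a_i)$ with $a_i$ an \Addr{} constant the argument is again identical to the closed case, using $i \le l \le z$ from \Cref{item:congruence-distinct}. Finally, for $t = t_1 + t_2$ both $\tau_0$ and the substitutions distribute over $+$, which is interpreted the same way in $\A$ and $\A'$, so the claim follows from the induction hypotheses for $t_1$ and $t_2$.

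The one genuinely new case is $t = b_j(x_k)$ for a free variable $x_k$. Here $\tau_0(t) = b^*_j(x^*_k)$; the placeholder substitution $[a^*_{i_1}/x^*_1,\dots,a^*_{i_r}/x^*_r]$ turns this into $b^*_j(a^*_{i_k})$, and $\sigma_2$ then rewrites it to the constant $b_{i_k,j}$. Since $i_k \in [1,z]$, the ``$i \le z$'' branch of \Cref{item:congruence-balances} applies and yields $\interp'(b_{i_k,j}) = b^\A_j(\alpha_{i_k})$, while on the other side $\Delta'(x_k) = \alpha_{i_k}$ by construction of $\Delta'$, so $\interp_{\Delta'}(b_j(x_k)) = b^\A_j(\alpha_{i_k})$ as well. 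The proof is pure bookkeeping; the only point needing care — and the closest thing to an obstacle — is keeping straight the order ``first the variable-placeholder substitution, then $\sigma_2$'', which is precisely what sends $b^*_j(x^*_k)$ to the correct balance constant, together with the use of $i_k \le z$ so that \Cref{item:congruence-balances} is invoked in its non-trivial branch. Given \Cref{lem:congruent-nat-closed-terms}, no new ideas are required.
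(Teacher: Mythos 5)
Your proposal is correct and follows essentially the same route as the paper: structural induction on the term, with the new base case $t = b_j(x)$ handled by tracing $b^*_j(x^*)$ through the placeholder substitution and $\sigma_2$ to the constant $b_{i,j}$ and invoking the balance congruence condition, and the case $t = t_1 + t_2$ discharged via the induction hypothesis together with the closed-term lemma for any ground subterm. The only cosmetic difference is that you re-derive the ground cases explicitly, whereas the paper delegates them wholesale to \Cref{lem:congruent-nat-closed-terms}.
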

\begin{proof}
	We prove the lemma using structural induction over all possible terms with free variables:
	
	\proofstep{$t = b_j(x)$}
	For any $i \in [1,z]$,
	\[
	\interp_{\Delta'}(t) = \interp(b_j) \parens{ \Delta'(x) } = b^\A_j(\alpha_i)
	\]
	
	By definition, $\tau_0(t) = b^*_j(x^*)$, and therefore
	\begin{align*}
	\interp' \parens{ \tau_0(t) \sub[x^*][a^*_i] \sigma_2 }
		&= \interp' \parens{ b^*_j(a^*_i) \sigma_2 }
	\\
		&= \interp' \parens{ b_{i,j} }
	\\
		&= b^{\A'}_{i,j}
	\\
		&= b^\A_j(\alpha_i)
	\end{align*}
	since $i \in [1,z]$.
	
	\proofstep{$t = t_1 + t_2$}
	Either $t_1$ or $t_2$ has free variables, and let us assume w.l.o.g. that $t_1$ does. Therefore, $t_2$ is either a ground term, or also has free variables. If $t_2$ has no free variables, the substitution of free variables wouldn't affect it.
	
	In both cases, from the induction hypothesis and from \Cref{lem:congruent-nat-closed-terms}, for any $i_1,\dots,i_r \in [1,z]$,
	\[
	\interp_{\Delta'}(t_v)
	=
	\interp'(\tau_0(t_v)
	\subsi[][r][x^*][a^*]
	)
	\]
	where $v \in \braces{1,2}$, and we get desired equality for $t$ as well.
\end{proof}

\begin{lemma}
	\label{lem:congruent-sat}
	Let $\A,\A'$ be two congruent structures for \SL{} vocabulary $\Sigma$, bound $\tilde \kappa$ and formula $\varphi$.
	Let $\xi$ be a sub-formula of $\varphi$, therefore:
	\begin{enumerate}
		\item If $\xi$ is a closed formula, then $\A' \sat \tau(\xi) \iff \A \sat \xi$.
		
		\item If $\xi$ is a formula with free variables $x_1, \dots, x_r$ then for any $i_1, \dots, i_r \in [1,z]$:
		\begin{align*}
		\A' 
		&\sat \tau_1(\xi) 
		\subsi[][r][x^*][a^*]
		\sigma_1 \sigma_2  
		\\
		&\iff \A \sat \xi 
		\subsi[][r][x][\alpha]
		\end{align*}
	\end{enumerate}
\end{lemma}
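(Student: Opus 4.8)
The plan is to prove parts~1 and~2 together by structural induction on the sub-formula $\xi$, following the recursive clauses of $\tau_1$ in \Cref{def:trans-1}; part~1 is simply the $r=0$ instance of part~2, once we recall $\tau(\varphi) = \tau_1(\varphi)\,\sigma_1\,\sigma_2$ from \Cref{def:trans-3}. Throughout I write $A = \braces{\alpha_1,\dots,\alpha_z}$ as in \Cref{item:congruence-bounded-size}, so that by \Cref{item:congruence-address-constants} each $\Addr$ constant $a_i$ is interpreted in $\A$ as $\alpha_i$ for $i \in [1,l]$, and $l \leq z \leq \tilde\kappa$ by \Cref{item:congruence-distinct} and \Cref{item:congruence-bounded-size}. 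The term-level lemmas \Cref{lem:congruent-nat-closed-terms} and \Cref{lem:congruent-nat-open-terms} are what the base cases will rely on.

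First I would dispatch the atomic cases $\xi = (t_1 \approx t_2)$ and $\xi = (t_1 \leq t_2)$, splitting on the sorts of the terms. If $t_1,t_2$ are of sort $\Nat$, then $\tau_0$ produces placeholders only inside balance functions, so $\sigma_1$ has no effect, and I invoke \Cref{lem:congruent-nat-closed-terms} or \Cref{lem:congruent-nat-open-terms}, which give (after the free-variable renaming) $\interp_{\Delta'}(t_v) = \interp'(\tau_0(t_v)[a^*_{i_1}/x^*_1,\dots]\,\sigma_2)$ for $v \in \braces{1,2}$; since $\approx$ and $\leq$ are interpreted identically on $\N$ in $\A$ and $\A'$, satisfaction of the atom transfers. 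If $t_1,t_2$ are of sort $\Addr$, then after the renaming they become constant placeholders $a^*_{i_1}, a^*_{i_2}$, and $\sigma_1$ rewrites $a^*_{i_1}\approx a^*_{i_2}$ to $\top$ when $i_1=i_2$ and to $\bot$ otherwise; on the other side $\A \sat a_{i_1}\approx a_{i_2}$ iff $\alpha_{i_1}=\alpha_{i_2}$, which — since $\A$ is distinct (\Cref{item:congruence-distinct}) — happens iff $i_1=i_2$, matching $\sigma_1$.

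The Boolean cases $\xi = \neg\zeta$ and $\xi = \zeta_1 \land \zeta_2$ follow immediately from the induction hypothesis, since $\tau_1$ commutes with $\neg$ and $\land$ and $\sigma_1,\sigma_2$ distribute over them. The crux is $\xi = \forall x.\zeta$, where $\tau_1(\xi) = \bigland_{i=1}^{\tilde\kappa}\parens{a_i \approx 0 \lor \tau_1(\zeta)[a^*_i/x^*]}$. On one side, $\A \sat \xi[\alpha_{i_1}/x_1,\dots]$ iff $\A \sat \zeta[\alpha_i/x,\alpha_{i_1}/x_1,\dots]$ for every $i \in [1,z]$, because $A=\braces{\alpha_1,\dots,\alpha_z}$. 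On the other side, by \Cref{item:congruence-indicators} the indicator $a^{\A'}_i$ is positive exactly for $i \in [1,z]$ and is $0$ for $i \in [z+1,\tilde\kappa]$; hence in $\A' \sat \tau(\xi)[\dots]$ the conjuncts with $i > z$ are discharged through the disjunct $a_i \approx 0$ and impose nothing, while for $i \in [1,z]$ that disjunct fails and the $i$-th conjunct reduces to $\A' \sat (\tau_1(\zeta)[a^*_i/x^*])[\dots]\,\sigma_1\,\sigma_2$, which the induction hypothesis for $\zeta$ (which carries one extra free variable, now bound to $\alpha_i$) equates with $\A \sat \zeta[\alpha_i/x,\dots]$. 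Matching the two characterizations closes the quantifier case, and instantiating $\xi := \varphi$ yields the lemma.

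I expect the only real obstacle to be administrative rather than conceptual: carefully tracking the order and commutation of the three substitutions — the free-variable renaming $[a^*_i/x^*]$, then $\sigma_1$ (which must still evaluate precisely the $\Addr$ comparisons created by unrolling $\forall$), then $\sigma_2$ — so that unfolding $\tau_1$ on $\forall x.\zeta$ genuinely produces $\tau_1(\zeta)[a^*_i/x^*]$ under the remaining substitutions, together with keeping the index bookkeeping honest (free-variable indices ranging over $[1,z]$, the quantifier expansion over $[1,\tilde\kappa]$, and $l \leq z$ guaranteeing that every $\Addr$-constant case points to an $\alpha_i$ actually present in $A$). All of this is underwritten by the congruence conditions, and no idea beyond \Cref{lem:congruent-nat-closed-terms} and \Cref{lem:congruent-nat-open-terms} is needed.
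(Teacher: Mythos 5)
Your proposal is correct and follows essentially the same route as the paper's proof: structural induction on $\xi$, with the atomic cases discharged by \Cref{lem:congruent-nat-closed-terms} and \Cref{lem:congruent-nat-open-terms} for $\Nat$-sorted terms and by distinctness plus the $\sigma_1$ evaluation for $\Addr$ comparisons, the Boolean cases by the induction hypothesis, and the universal quantifier handled exactly as in the paper via \Cref{item:congruence-indicators} (conjuncts with $i>z$ discharged by the $a_i \approx 0$ disjunct, the rest reduced to instances over $[1,z]$). The only difference is cosmetic: you also note the $\leq$ atom, which the paper's case analysis leaves implicit.
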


\begin{proof}
	Let us separate into the following steps:
	\proofstep{$\xi = t_1 \approx t_2$ without free variables, where $t_1,t_2$ are of Address sort}
	
	Since $t_1,t_2$ are Addresses, there are two indices $i_1,i_2 \in [1,l]$ such that $t_1 = a_{i_1}, t_2 = a_{i_2}$. From \Cref{item:congruence-distinct}, $\A$ is distinct and therefore
	\[
	\A \sat \xi \iff i_1 = i_2.
	\]
	
	As for $\tau(\xi)$,
	\begin{align*}
	\tau(\xi) 
		&=
		\parens{ 
			\tau_0(t_1) \approx \tau_0(t_2) 
		} \sigma_1 \sigma_2
	\\
		&= 
		\parens{ 
			a^*_{i_1} \approx a^*_{i_2} 
		} \sigma_1
	\\
		&= \ifelsedef{\top}{i_1 = i_2}{\bot}
	\end{align*}

	Which means that $\A' \sat \tau(\xi) \iff i_1 = i_2 \iff \A \sat \xi$.
	
	\proofstep{$\xi = t_1 \approx t_2$ without free variables, where $t_1,t_2$ are of sort \Nat{}}
	In this case, $\sigma_1$ would not change the formula and we can apply $\sigma_2$ to each term:
	\[
	\tau(\xi) = \tau_0(t_1) \sigma_2 \approx \tau_0(t_2) \sigma_2
	\]
	
	Since $t_1,t_2$ are of sort \Nat{}, from \Cref{lem:congruent-nat-closed-terms} we get that
	\begin{align*}
	\A' \sat \tau(\xi) 
	&\iff 
	\interp'(\tau_0(t_1) \sigma_2) 
	= 
	\interp'(\tau_0(t_2) \sigma_2)
	\\
	\tag{\Cref{lem:congruent-nat-closed-terms}}
	&\iff
	\interp(t_1)
	=
	\interp(t_2)
	\\
	&\iff
	\A \sat t_1 \approx t_2 = \xi
	\end{align*}
	
	\proofstep{$\xi = t_1 \approx t_2$ with free variables $x_1, \dots, x_r$, where $t_1,t_2$ are of Address sort}	
	Let us first define $\sigma = \subsi$, $\sigma' = \subsi[][r][x^*][a^*]$.
	
	If $t_1,t_2$ both have free variables then we can write them as $t_1 = x_1, t_2 = x_2$ and after substituting $\sigma$ we get that
	
	\[
	\xi \sigma = \alpha_{i_1} \approx \alpha_{i_2}.
	\]
	
	Therefore, $\A \sat \xi \sigma \iff i_1 = i_2$.
	
	As for $\A'$, we get
	\begin{align*}
	\tau_1(\xi) \sigma \sigma_1 \sigma_2 
	&= 
		\parens{ 
			x^*_1 \approx x^*_2 
		} \sigma' \sigma_1 \sigma_2
	\\
	&=
		\parens{ 
			a^*_{i_1} \approx a^*_{i_2}
		} \sigma_1 \sigma_2
	\\
	&=
	\parens{ 
		a^*_{i_1} \approx a^*_{i_2}
	} \sigma_1
	\\
	&=
	\ifelsedef{\top}{i_1 = i_2}{\bot}	
	\end{align*}
	
	And we get that $\A' \sat \tau_1(\xi) \sigma' \sigma_1 \sigma_2 \iff i_1 = i_2 \iff \A \sat \xi \sigma$.

	\proofstep{$\xi = t_1 \approx t_2$ with free variables $x_1, \dots, x_r$, where $t_1,t_2$ are of sort \Nat{}}
	Similarly to the case above, we define $\sigma = \subsi$ and also \linebreak $\sigma' =  \subsi[][r][x^*][a^*]$. For any assignment $\Delta$, we define
	\[
	\Delta' = \Delta \sigma
	\]
		
	Since $t_1, t_2$ are of sort \Nat{}, $\sigma_1$ has no effect, and from \Cref{lem:congruent-nat-open-terms},
	
	\begin{align*}
	\A' &\sat \tau_1(\xi)
	\sigma' \sigma_1 \sigma_2
	\\
	&\iff \A' \sat 
	\brackets{\tau_0(t_1) \approx \tau_0(t_2)}
	\sigma' \sigma_2
	\\
	&\iff 
	\interp'(
	\tau_0(t_1)
	\sigma' \sigma_2
	)
	=		
	\interp'(
	\tau_0(t_2)
	\sigma' \sigma_2
	)
	\\
	\tag{\Cref{lem:congruent-nat-open-terms}}
	&\iff \interp_{\Delta'}(t_1 \sigma) = \interp_{\Delta'}(t_2 \sigma)
	\\
	&\iff \A \sat \xi \sigma
	\end{align*}
	
	\proofstep{$\xi = \neg \zeta$ without free variables}
	Follows from the induction hypothesis for $\zeta$:
	\begin{align*}
	\A' \sat \tau(\xi)
	&\iff \A' \sat \neg \tau(\zeta) \\
	&\iff \A' \nsat \tau(\zeta) \\
	&\iff \A \nsat \zeta \\
	&\iff \A \sat \neg \zeta \\
	&\iff \A \sat \xi
	\end{align*}
	
	\proofstep{$\xi = \neg \zeta$ with free variables $x_1, \dots, x_r$}
	Follows from the induction hypothesis for $\zeta$, for any $i_1, \dots, i_r \in [1,z]$:
	\begin{align*}
	\A' &\sat \tau_1(\neg \zeta) 
	\subsi[][r][x^*][a^*]
	\sigma_1 \sigma_2
	\\
	&\iff \A' \sat \neg \tau_1(\zeta)
	\subsi[][r][x^*][a^*]
	\sigma_1 \sigma_2
	\\
	&\iff \A' \nsat \tau_1(\zeta)
	\subsi[][r][x^*][a^*]
	\sigma_1 \sigma_2
	\\
	\tag{Induction hypothesis}
	&\iff \A \nsat \zeta
	\subsi
	\\
	&\iff \A \sat \neg \zeta
	\subsi
	\\
	&\iff \A \sat \xi 
	\subsi
	\end{align*}
	
	\proofstep{$\xi = \zeta_1 \lor \zeta_2$ without free variables}
	Follows from the induction hypothesis for $\zeta_1$ and $\zeta_2$:
	\begin{align*}
	\A' \sat \tau(\xi) 
	&\iff \A' \sat \tau(\zeta_1) \lor \tau(\zeta_2) \\
	&\iff \A' \sat \tau(\zeta_1) 
	\text{ or } 
	\A' \sat \tau(\zeta_2) \\
	&\iff \A \sat \zeta_1 \text{ or } \A \sat \zeta_2 \\
	&\iff \A \sat \zeta_1 \lor \zeta_2 \\
	&\iff \A \sat \xi	
	\end{align*}
	
	\proofstep{$\xi = \zeta_1 \lor \zeta_2$ with free variables $x_1, \dots, x_r$}
	Follows from the induction hypothesis for $\zeta_1$ and $\zeta_2$, similar to the no free variables case above, since $\tau_1 (\zeta_1 \lor \zeta_2) = \tau_1(\zeta_1) \lor \tau_1(\zeta_2)$.
	
	\proofstep{$\xi = \forall x.\zeta$ without free variables}
	Since $a^{\A'}_i = 0 \iff i > z$ we get the following:
	\begin{align*}
	\A' \sat \tau(\xi) 
	&\iff \A' \sat \bigland_{i = 1}^{\tilde \kappa} 
	\parens{
		a_i \approx 0
		\lor
		\tau_1(\zeta) \sub[x^*][a^*_i]
	} \sigma_1 \sigma_2 
	\\
	&\iff \A' \sat \bigland_{i = 1}^z
	\tau_1(\zeta) \sub[x^*][a^*_i] \sigma_1 \sigma_2
	\\
	&\iff \A' \sat \tau_1(\zeta) \sub[x^*][a^*_i] \sigma_1 \sigma_2 \text{ for all } i \in [1,z]
	\\
	\tag{Induction hypothesis for $\zeta$}
	&\iff \A \sat \zeta \sub[x][\alpha_i] \text{ for all } i \in [1,z]
	\\
	\tag{The set $A$ is covered by $\alpha_1, \dots, \alpha_z$}
	&\iff \A \sat \forall x.\zeta = \xi
	\end{align*}

	\proofstep{$\xi = \forall x.\zeta$ with free variables $x_1,\dots,x_r$}
	Similar to the case above, using the induction hypothesis for $\zeta$ as a formula with free variables $x,x_1,\dots,x_r$.
	
\end{proof}

\proofpart{
	Proof of \Cref{thm:presburger-reduction} ($\To$): 
	If $\varphi$ has a \emph{distinct} \SL{} model with \emph{small \Addr{} space}, 
	then $\varphi'$ has a Standard Model of Arithmetic
}
Let there be a \emph{distinct} \SL{} model for $\varphi$ with \emph{small \Addr{} space}: 
\[
\A = \parens{
	A,
	a^\A_1, \dots, a^\A_l,
	b^\A_1, \dots, b^\A_m,
	c^\A_1, \dots, c^\A_n,
	s^\A_1, \dots, s^\A_m
}
.
\]

We can represent its Addresses set as
$
A = \braces{ \alpha_1, \dots, \alpha_z }
$ where $z = \abs{A}$, and for every $i \in [1,l]$, $a^\A_i = \alpha_i$ since $\A$ is distinct. 
Combined with the fact that $\A$ has small \Addr{} space 
we know that $z \leq \tilde \kappa$.

We define $\A'$, 
the Standard Model of Arithmetic for $\varphi'$, 
as follows:
\[
\A' = \parens{
	a^{\A'}_1, \dots, a^{\A'}_{\tilde \kappa},
	b^{\A'}_{1,1}, \dots, b^{\A'}_{\tilde \kappa,m},
	c^{\A'}_1, \dots, c^{\A'}_n
}
\]
where the indicators are
\[
a^{\A'}_i = \ifelsedef{1}{i \leq z}{0};
\]
the balances are \[
b^{\A'}_{i,j} = \ifelsedef{b^\A_j(\alpha_i)}{i \leq z}{0};
\]
and the natural constants are $c^{\A'}_k = c^\A_k$.

\begin{claim}
	\label{claim:A'-sats-aux}
	The structure $\A'$ satisfies $\eta(\varphi)$.
\end{claim}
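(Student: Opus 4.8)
The plan is to verify directly that $\A'$ satisfies each of the three conjuncts $\eta_1(\varphi)$, $\eta_2(\varphi)$, $\eta_3(\varphi)$ making up $\eta(\varphi)$. The only structural fact I will need beyond the definition of $\A'$ is that $l \leq z \leq \tilde\kappa$: the inequality $z \leq \tilde\kappa$ holds because $\A$ has small \Addr{} space, and $l \leq z$ holds because $\A$ is distinct, so the $l$ \Addr{} constants $a^\A_1, \dots, a^\A_l$ are pairwise distinct elements of $A$, and these are exactly $\alpha_1, \dots, \alpha_l$ in our enumeration of $A = \braces{\alpha_1, \dots, \alpha_z}$.

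First I would check $\eta_1(\varphi)$: fix $i \in [1,\tilde\kappa]$ and suppose $\A' \sat a_i \approx 0$, i.e. $a^{\A'}_i = 0$. By the definition of the indicators in $\A'$ this forces $i > z$, and then by the definition of the balance constants in $\A'$ we get $b^{\A'}_{i,j} = 0$ for every $j \in [1,m]$, so $\A' \sat \bigland_{j=1}^m b_{i,j} \approx 0$; hence the implication holds for this $i$, and $\A' \sat \eta_1(\varphi)$.

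Next I would check $\eta_2(\varphi)$: for $i \in [1,l]$ we have $i \leq l \leq z$, so by definition $a^{\A'}_i = 1 \neq 0$, giving $\A' \sat a_i \not\approx 0$ for all such $i$. Finally, for $\eta_3(\varphi)$: fix $i \in [1,\tilde\kappa]$ and suppose $a^{\A'}_i = 0$, so $i > z$; then for every $i' \in [i, \tilde\kappa]$ we also have $i' > z$, hence $a^{\A'}_{i'} = 0$, establishing $\A' \sat a_i \approx 0 \to \bigland_{i'=i}^{\tilde\kappa} a_{i'} \approx 0$. Combining the three parts, $\A' \sat \eta(\varphi)$.

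I do not expect any real obstacle here — the claim is a routine unwinding of the definitions of the indicator and balance constants of $\A'$, and the one non-syntactic ingredient ($l \leq z$) is immediate from distinctness of $\A$. The point of isolating this claim is that it is one of the congruence conditions of \Cref{def:SL-PA-congruence} (namely, it is what is needed so that $\A$ and $\A'$ qualify as congruent structures), and it will be used together with \Cref{lem:congruent-sat} to conclude $\A' \sat \tau(\varphi)$ in the remainder of the $(\To)$ direction.
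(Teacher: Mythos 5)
Your proof is correct and follows essentially the same route as the paper's: a direct case-by-case verification of $\eta_1$, $\eta_2$, $\eta_3$ from the definitions of the indicator and balance constants of $\A'$, using $l \leq z$ (from distinctness of $\A$) for $\eta_2$ and the equivalence $a^{\A'}_i = 0 \iff i > z$ for $\eta_1$ and $\eta_3$. No gaps; nothing further is needed.
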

\begin{proof}
	We show that $\A'$ satisfies $\eta_1(\varphi)$, $\eta_2(\varphi)$ and $\eta_3(\varphi)$:
	
	\proofstep{$\A'$ satisfies $\eta_1(\varphi)$}
	We need to show that for each $i \in [1, \tilde \kappa]$, 
	\[
	\A' \sat 
	\brackets{
		\parens{ a_i \approx 0 }
		\to 
		\parens{ \bigland_{j=1}^m b_{i,j} \approx 0 }
	}
	.\] I.e. for each $i \in [1, \tilde \kappa]$ and $j \in [1,m]$, if $a^{\A'}_i = 0$, then $b^{\A'}_{i,j} = 0$.
	
	By definition, $a^{\A'}_i = 0 \iff i > z$, in which case, for any $j \in [1,m]$, $b^{\A'}_{i,j} = 0$, as required.
	
	\proofstep{$\A'$ satisfies $\eta_2(\varphi)$}
	We need to show that for each $i \in [1,l]$,
	\[
	\A' \sat a_i \not \approx 0,
	\]
	i.e. $a^{\A'}_i \neq 0$. 
	
	Since $\A$ is a distinct \SL{} model, it has at least $l$ addresses: $l \leq z$. 
	By definition, for any 
	$i \in [1,z]$, $a^{\A'}_i = 1 > 0$, 
	in particular for any 
	$i \in [1,l] \subseteq [1,z]$.
	
	\proofstep{$\A'$ satisfies $\eta_3(\varphi)$}
	We need to show that for each $i \in [1,\tilde \kappa]$, if $a^{\A'}_i = 0$, then for any $i' > i$, $a^{\A'}_{i'} = 0$.
	
	Let there be some index $i$ such that $a^{\A'}_i = 0$, therefore, by definition, $i > z$. For any $i' > i$ it also holds that $i' > z$ and therefore $a^{\A'}_{i'} = 0$.	
\end{proof}

\begin{claim}
	\label{claim:A'-sats-trans}
	The structure $\A'$ satisfies $\tau(\varphi)$.
\end{claim}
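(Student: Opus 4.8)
The plan is to reuse the congruence machinery from Part~1 of this proof. First I would verify that the given distinct \SL{} model $\A$ (with small \Addr{} space) and the Standard Model of Arithmetic $\A'$ constructed above are \emph{congruent} with respect to the \SL{} vocabulary $\Sigma$, the bound $\tilde \kappa$, and the formula $\varphi$, in the sense of \Cref{def:SL-PA-congruence}. Once congruence is established, \Cref{lem:congruent-sat}, applied to the closed subformula $\varphi$ of itself, gives $\A' \sat \tau(\varphi) \iff \A \sat \varphi$. Since $\A$ was chosen as an \SL{} model of $\varphi$, i.e.\ $\A \sat_{\SL{}} \varphi$, we conclude $\A' \sat \tau(\varphi)$, which is exactly the claim.

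Establishing congruence is a matter of checking the seven conditions of \Cref{def:SL-PA-congruence} one by one. \Cref{item:congruence-sum} holds because $\A$, being an \SL{} model, satisfies the Sum Property. \Cref{item:congruence-sat-eta} is precisely \Cref{claim:A'-sats-aux}, proven immediately above. \Cref{item:congruence-bounded-size} is the small \Addr{} space hypothesis: $z = \abs{A} \leq \kappa(\abs{\varphi}) \leq \tilde \kappa$, and we fix the enumeration $A = \braces{\alpha_1,\dots,\alpha_z}$. \Cref{item:congruence-address-constants} holds by that enumeration: since $\A$ is distinct, the elements $a^\A_1,\dots,a^\A_l$ are pairwise distinct, so we may (and do) take $\alpha_i = a^\A_i$ for $i \in [1,l]$. \Cref{item:congruence-balances,item:congruence-indicators} hold directly from the definitions of $b^{\A'}_{i,j}$ and $a^{\A'}_i$, both of which are defined by cases on whether $i \leq z$. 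Finally, \Cref{item:congruence-distinct} is the distinctness of $\A$, which in particular forces $l \leq z$ since $\A$ must contain the $l$ distinct elements $a^\A_1,\dots,a^\A_l$.

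I do not expect any genuine obstacle here: all the substantive content --- constructing $\A'$, verifying $\eta(\varphi)$, and above all proving the congruence lemma --- has already been carried out, so this step is essentially bookkeeping. The one place warranting a line of justification is \Cref{item:congruence-address-constants}: it is exactly the distinctness of $\A$ that makes the enumeration $\alpha_1,\dots,\alpha_z$ with $\alpha_i = a^\A_i$ for $i \le l$ a legitimate choice, and this alignment is what lets the definitions of the balances and indicators of $\A'$ match up with the \Addr{} constants of $\A$. Together with \Cref{claim:A'-sats-aux}, this establishes the $(\To)$ direction of \Cref{thm:presburger-reduction}.
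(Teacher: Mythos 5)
Your proposal is correct and follows the paper's own argument exactly: both establish that $\A$ and $\A'$ are congruent by checking the seven conditions of \Cref{def:SL-PA-congruence} (using \Cref{claim:A'-sats-aux} for the $\eta(\varphi)$ condition and distinctness for the alignment of the \Addr{} constants), and then invoke \Cref{lem:congruent-sat} together with $\A \sat \varphi$ to conclude $\A' \sat \tau(\varphi)$. No gaps; the extra remark on why distinctness legitimizes the enumeration $\alpha_i = a^\A_i$ is a welcome clarification but not a departure from the paper.
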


\begin{proof}
	We show that $\A$ and $\A'$ are congruent, and since $\A \sat \varphi$, from \Cref{lem:congruent-sat}, $\A' \sat \tau(\varphi)$:
	
	\begin{enumerate}
		\item $\A$ is an \SL{} model of $\varphi$, therefore it holds the sum property.
		
		\item $\A'$ satisfies $\eta(\varphi)$ from \Cref{claim:A'-sats-aux}.
		
		\item $z = \abs{A} \leq \tilde \kappa$ as explained above.
		
		\item For any $i \in [1,l]$, $a^\A_i = \alpha_i$ by definition.
		
		\item By construction of $\A'$, for any $j \in [1,m], i \in [1,z]$, $b^{\A'}_{i,j} = b^{\A}_j(\alpha_i)$ and for any $i > z$, $b^{\A'}_{i,j} = 0$.
		
		\item By construction of $\A'$, for any $i \in [1,z]$, $a^{\A'}_i = 1 > 0$, and for any $i > z$, $a^{\A'}_i = 0$.
		
		\item $\A$ is given to be distinct.
	\end{enumerate}
\end{proof}

\begin{corollary}
	The structure $\A'$ is a Standard Model of Arithmetic for 
	$\varphi' = \tau(\varphi) \land \eta(\varphi)$.
\end{corollary}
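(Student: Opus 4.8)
The plan is to obtain this Corollary immediately by combining the two claims just established in this part of the proof, so essentially no new work is required. First I would observe that $\A'$, as it was constructed above, is by definition a structure over the corresponding Presburger vocabulary $\text{Pres}(\Sigma^{l,m,d},\tilde\kappa)$ in which $\domain(\Nat)=\N$ and the symbols $0$, $1$, $+^2$, $\leq^2$ receive their standard interpretations over the naturals; hence $\A'$ already \emph{is} a Standard Model of Arithmetic in the sense fixed in the preliminaries. What remains is only to check that this particular structure satisfies the Presburger sentence $\varphi'$.

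Next, recall that $\varphi'=\tau(\varphi)\land\eta(\varphi)$. \Cref{claim:A'-sats-aux} gives $\A'\sat\eta(\varphi)$ and \Cref{claim:A'-sats-trans} gives $\A'\sat\tau(\varphi)$; since satisfaction distributes over conjunction, $\A'\sat\tau(\varphi)\land\eta(\varphi)$, that is, $\A'\sat\varphi'$. Together with the previous paragraph this says precisely that $\A'$ is a Standard Model of Arithmetic for $\varphi'$, which is the Corollary and also closes the $(\To)$ direction of \Cref{thm:presburger-reduction}.

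I do not expect a genuine obstacle here: all of the real content has already been spent, namely in constructing $\A'$ from the assumed \emph{distinct} \SL{} model of $\varphi$ with \emph{small \Addr{} space}, and in verifying in \Cref{claim:A'-sats-aux} and \Cref{claim:A'-sats-trans} both the auxiliary constraints $\eta_1,\eta_2,\eta_3$ and the seven congruence conditions that let \Cref{lem:congruent-sat} apply. If anything needs care it is the bookkeeping around the global bound $\tilde\kappa=\max\braces{\kappa(\abs{\varphi}),\,l}$: it is chosen precisely so that \emph{small \Addr{} space} ($\abs{A}\leq\kappa(\abs{\varphi})$) together with distinctness ($l\leq\abs{A}$) forces $z=\abs{A}\leq\tilde\kappa$, which is one of the congruence conditions (\Cref{item:congruence-bounded-size}) and is what makes the congruence lemmas usable in the first place.
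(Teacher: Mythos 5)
Your proposal is correct and matches the paper exactly: the paper states this corollary without further argument, as an immediate consequence of \Cref{claim:A'-sats-aux} ($\A' \sat \eta(\varphi)$) and \Cref{claim:A'-sats-trans} ($\A' \sat \tau(\varphi)$), combined with the fact that $\A'$ was constructed with the standard interpretation over $\N$. Your additional remarks about the role of $\tilde\kappa$ are accurate but belong to the preceding claims rather than to this corollary itself.
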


\proofpart{
	Proof of \Cref{thm:presburger-reduction} ($\Leftarrow$): 
	If $\varphi'$ has a Standard Model of Arithmetic, 
	then $\varphi$ has a \emph{distinct} \SL{} model with \emph{small \Addr{} space}
}
Let 
$
\A' = \parens{
	a^{\A'}_1, \dots, a^{\A'}_{\tilde \kappa},
	b^{\A'}_{1,1}, \dots, b^{\A'}_{\tilde \kappa, m},
	c^{\A'}_1, \dots, c^{\A'}_n
}
$ 
be a Standard Model of Arithmetic for $\varphi'$.
 Since $\A' \sat \eta_3(\varphi)$, 
 we know that there exists some maximal index 
 $z \leq \tilde \kappa$ such that 
 $a^{\A'}_z \neq 0$ 
 and for any $i > z$, $a^{\A'}_i = 0$. 
 Since $\A' \sat \eta_1(\varphi)$ we know that $z \geq l$.

We construct an \SL{} model $\A$ for $\varphi$ as follows:
\[
\A = \parens{
	A,
	a^\A_1, \dots, a^\A_l,
	b^\A_1, \dots, b^\A_m,
	c^\A_1, \dots, c^\A_n,
	s^\A_1, \dots, s^\A_m
}
\]
where the Addresses set is
\[
A = [1,z];
\]
the Address constants are
\[
a^\A_i = i
\] for any $i \in [1,l]$; the balances are
\[
b^\A_j(i) = b^{\A'}_{i,j}
\] for any $i \in A, j \in [1,m]$; the natural constants are $c^\A_k = c^{\A'}_k$ and the sums are defined as
\[
s^\A_j = \sum_{\alpha \in A} b^\A_j(\alpha).
\]

~\\ \noindent
We show that $\A,\A'$ are congruent:
\begin{enumerate}
	\item By construction, $\A$ holds the sum property.
	
	\item It is given that $\A'$ satisfies $\eta(\varphi)$.
	
	\item We define $A$ to be the set $[1,z]$, and therefore $\abs{A} \leq \tilde \kappa$.
	
	\item $a^\A_i = \alpha_i$ as defined above.
	
	\item By construction, for any $j \in [1.m], i \in [1,z]$, $b^\A_j(\alpha_i) = b^{\A'}_{i,j}$ and $z$ was chosen such that for any $i > z$, $b^{\A'}_{i,j} = 0$.
	
	\item $z$ was chosen such that for any $i \in [1,z]$, $a^{\A'}_i > 0$ and for any $i > z$, $a^{\A'}_i = 0$.
	
	\item $\A$ is distinct by construction.
\end{enumerate}

Given that $\A' \sat \varphi'$, 
we know in particular that 
$\A' \sat \tau(\varphi)$, 
and from \Cref{lem:congruent-sat}, 
$\A \sat \varphi$ as a many-sorted, first-order formula. 
Since $\A$ holds the sum property, 
$\A \sat_\text{SL} \varphi$. 
In addition, by construction 
$\abs{A} = z \leq \tilde \kappa$. 
Therefore, $\A$ is a \emph{distinct} \SL{} model for $\varphi$ with \emph{small \Addr{} space}.
\end{longproof}

\newpage
\section{Proofs for Encodings}
\subsection{Soundness of $\mint_1$}
 For readability reasons, the theorem is stated again.
 \soundMint*

Now the proof is as follows.
\begin{proof}
 To show that $\newsum=\oldsum+1$, we consider the subformula (M3) of $\mint_1(a,c)$. It follows that $\oldact\backslash\{c\}=\newact\backslash\{c\}$. Now using (M1), we get $\oldact=\oldact\backslash\{c\}$ and $(\newact\backslash\{c\}) \cup \{c\}=\newact$. Thus, $\oldact\;\dot{\cup}\;\{c\}=\newact$ and hence \linebreak $|\oldact|+1=|\oldact \;\dot{\cup} \; \{c\}|=|\newact|$ which implies \linebreak $\newsum=\oldsum+1$.
 
 Similar reasoning works to show  $\newbal(a)= \oldbal(a)+1$. From (M4) it follows \[\{d\mid (a,d)\in \oldhc\}\backslash\{c\}=\{d\mid (a,d)\in \newhc\}\backslash\{c\}\;.\] Now using (M2) we get
  \begin{align*} 
 & \{d\mid (a,d)\in \oldhc\}=\{d\mid (a,d)\in \oldhc\}\backslash\{c\} \quad \text{and} \\
 & \{d\mid (a,d)\in \newhc\}\backslash\{c\} \cup \{c\}=\{d\mid (a,d)\in \newhc\} \;.
 \end{align*}
 As before, we have \[\{d\mid (a,d)\in \oldhc\} \,\dot{\cup}\, \{c\}=\{d\mid (a,d)\in \newhc\}\;,\] which implies $\newbal(a)= \oldbal(a)+1$.
 
 Finally, $\newbal(b)=\oldbal(b)$ for $b \neq a $ follows from (M4), since it implies $\{d\mid (b,d)\in \oldhc\}=\{d\mid (b,d)\in \newhc\}$.\qed
\end{proof}

\subsection{Soundness of $\mint_n$} \label{apd:mint_n}  To define the smart transition $\mint_n$ we need one pair of predicates for every time step. Thus we have an additional "parameter" $i$, the $i$-th time step, in \ac{} and \hc{} instead of using the prefixes \old{-} and \new{-}. Other than that the definition and the soundness result is analog to the setting of $\mint_1$.
	
	\begin{definition}[Transition $\mint_n(a)$]\label{mintN}
		Let  $a \in \Addr$. 
		Then, the transition $\mint_n(a)$ activates $n$ coins 
		and deposits them into address $a$, one coin $c$ in each time step.
		\begin{enumerate}
			\item 	
			The coin $c$ was inactive before and is active now: 
			\[ 
			\tag{N1} 
			\neg \ac(c,i) \land \ac(c,i+1) \;.
			\]
			
			\item 
			The address $a$ owns the new coin $c$:
			\[ 
			\tag{N2}  
			\hc(a,c,i+1) \land \forall a' : \Addr{}.\; \neg \hc(a',c,i)\;.
			\]
			
			\item 
			Everything else stays the same: 
			\begin{align*}
				\tag{N3}
				& \forall  c' : \Coin.\;\forall a' : \Addr. \;
				\; 
				(c' \not\approx c \lor a' \not\approx a) \to \\
				& \begin{array}{ll}
					\quad  \big( 
					&\parens{ \ac(c',i+1) \liff \ac(c',i) } \land \\
					&   \parens{ \hc(a',c',i+1) \liff \hc(a',c',i+1) } 
					\; \big)\;.
				\end{array}
			\end{align*}
		\end{enumerate}
		The transition $\mint_n(a)$ is defined as 
		$\forall i: \Nat.\; \exists c :\Coin.\; \text{(N1)} \land \text{(N2)} \land \text{(N3)}$.
	\end{definition}
	The soundness result we get is similar to \Cref{mint1_sound} but extended by the new parameter.
	
	\begin{theorem}[Soundness of $\mint_n(a)$]\label{mintn_sound}
		Let $a\in \Addr$ such that 
		$\mint_n(a)$. 
		Consider a balance function
		$\bal: \Addr\times\N \to \N$,
		a summation function $\msum: \N \to \N^+$,
		a binary predicate $\ac \subseteq \Coin\times\N$
		and a ternary predicate $\hc \subseteq \Addr \times \Coin\times \N$
		such that for every $i \in \N$
		\[
		\abs{\ac(.,i)}=\msum(i)\, 
		\] 
		and for every address $a'$ and $i \in \N$, we have 
		\begin{align*}
			&\bal(a',i)=\abs{ \braces{ c' \in \Coin \mid (a', c',i) \in \hc  }}\;.
		\end{align*}
		
		Then for an arbitrary $n \in \Nat$,
		$\msum(n) = \msum(0) + n$,
		$\bal(a,n)= \bal(a,0)+n$. Moreover, for all other addresses $a' \neq a$, 
		it holds $\bal(a',n)=\bal(a',0)$.
	\end{theorem}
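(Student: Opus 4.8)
The plan is to reduce the statement to $n$ successive applications of the single-step soundness result \Cref{mint1_sound}, chained together by induction on $n$. The key observation is that the definition of $\mint_n(a)$ in \Cref{mintN} is, time-step by time-step, an instance of the transition $\mint_1$: fixing any $i \in \N$, the existentially quantified coin $c$ whose existence is asserted by $\mint_n(a)$ at index $i$ satisfies (N1)--(N3), which are exactly (M1)--(M4) of $\mint_1(a,c)$ once we read $\ac(\cdot,i)$ as $\oldact$, $\ac(\cdot,i+1)$ as $\newact$, $\hc(\cdot,\cdot,i)$ as $\oldhc$ and $\hc(\cdot,\cdot,i+1)$ as $\newhc$. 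So the per-step behaviour is already known; only the accumulation over steps remains.

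First I would record the base case $n = 0$, which is immediate since all three claimed equalities become trivial identities. For the inductive step, assuming the claim for $n$, I would instantiate $\mint_n(a)$ at the time step $i = n$ to obtain a coin $c_n$ witnessing (N1)--(N3). The hypotheses of \Cref{mint1_sound} --- namely $\abs{\ac(\cdot,n)} = \msum(n)$, $\abs{\ac(\cdot,n+1)} = \msum(n+1)$, and the analogous equalities relating $\bal(a',n)$ and $\bal(a',n+1)$ to the coin-ownership sets --- hold for every index by assumption, so the lemma applies with $\oldsum := \msum(n)$, $\newsum := \msum(n+1)$, $\oldbal := \bal(\cdot,n)$, $\newbal := \bal(\cdot,n+1)$, and the witness $c_n$. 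It yields $\msum(n+1) = \msum(n) + 1$, $\bal(a,n+1) = \bal(a,n) + 1$, and $\bal(a',n+1) = \bal(a',n)$ for all $a' \neq a$. Combining with the induction hypothesis $\msum(n) = \msum(0) + n$, $\bal(a,n) = \bal(a,0) + n$, and $\bal(a',n) = \bal(a',0)$ gives $\msum(n+1) = \msum(0) + (n+1)$, $\bal(a,n+1) = \bal(a,0) + (n+1)$, and $\bal(a',n+1) = \bal(a',0)$, closing the induction.

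The telescoping arithmetic is routine; the only genuine care needed is in the reduction to $\mint_1$. One must check that the per-step formula $\exists c.\;(\mathrm{N1}) \land (\mathrm{N2}) \land (\mathrm{N3})$ really does imply $\mint_1(a,c)$ for the witness $c$, treating the ``$i$'' and ``$i+1$'' slots as the $\old$- and $\new$-states; in particular (N3) packages together the frame conditions corresponding to (M3) and (M4) of \Cref{mint1}, so both the ``only coin $c$ changes activity'' and the ``only the pair $(a,c)$ changes ownership'' parts transfer. I expect the main (modest) obstacle to be bookkeeping: quoting the hypotheses on $\msum$ and $\bal$ at both indices $n$ and $n+1$ when \Cref{mint1_sound} is invoked, and handling the existential witness uniformly across the induction rather than re-choosing it in a way that would break the chain of equalities.
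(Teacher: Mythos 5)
Your proposal is correct and follows essentially the same route as the paper: induction on $n$ with a trivial base case, and an inductive step that identifies the time-indexed predicates at indices $n$ and $n+1$ with the $\old$/$\new$ data of \Cref{mint1_sound} and applies that lemma before telescoping with the induction hypothesis. Your additional care about extracting the existential witness at step $i=n$ and checking that (N1)--(N3) really instantiate (M1)--(M4) is a detail the paper glosses over, but it does not change the argument.
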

\begin{proof}
We prove \Cref{mintn_sound} by induction over $n \in  \N$. The base case $n=0$ is trivially satisfied. For the induction step, we get the induction hypothesis $\msum(n) = \msum(0) + n$,
$\bal(a,n)= \bal(a,0)+n$, $\forall a' \neq a.\;\bal(a',n)=\bal(a',0)$. By defining $\oldsum\define \msum(n)$, $\newsum\define\msum(n+1)$ and analogously for \ac{}, \bal{} and \hc{}, all the preconditions of \Cref{mint1_sound} hold. Therefore, we get $\msum(n+1)=\msum(n)+1$, $\bal(a,n+1)= \bal(a,n)+1$, $\forall a'\neq a.\; \bal(a',n+1)=\bal(a,n)$, by applying \Cref{mint1_sound}. Together with the induction hypothesis this yields  $\msum(n+1) = \msum(0) + n+1$,
$\bal(a,n+1)= \bal(a,0)+n+1$, $\forall a' \neq a.\;\bal(a',n+1)=\bal(a',0)$ and thus concludes the induction proof.
\qed	
\end{proof}

\subsection{Soundness and Completeness relative to $f$} \label{app:f}
In order to establish a proof of \Cref{sound_comp}, some formal definitions of the in the paper informally explained concepts have to be stated first.
The exclusion of certain elements of $\mF$ is based on an equivalence relation $\sim$.

We first formally define $\sim$:
\begin{definition}[Relation $\sim\,$]
	Let the pairs 
	$p_1=(\hco,\aco)\in\mF$ and 
	$p_2=(\hct,\act) \in \mF$. Then 
	$p_1 \sim p_2$ iff
	\begin{enumerate}
		\item 
		$|\aco|=|\act| \;$,
		\item
		$
		\abs{\braces{c \in \Coin{}\mid \hco(a,c) }}
		=
		\abs{\braces{c \in \Coin{}\mid \hct(a,c) }}
		$, for all $a\in \Addr$,
		\item $p_1$ violates (I2) in $V_{\leq}$ cases and
		$p_2$ violates (I2) also $V_{\leq}$ times; 
		% 			the number of violations of (I2) 
		%			$V_{\leq}$
		%	 		is equal for $p_1$ and $p_2$,
		\item $p_1$ does not satisfy (I1) and (I3) in all
		together 	$V_{\geq}$ cases, which is also the
		number of times $p_2$ violates (I1) and (I3). 
		% 			the number of violations of (I1) together with (I3) 
		%			$V_{\geq}$ 
		%	 		is equal for $p_1$ and $p_2$. 
	\end{enumerate}
\end{definition}

To properly prove that $\sim$ is an equivalence relation, we have to define $V_{\leq}$ and $V_{\geq}$ first.

		\begin{definition}
		Given a pair $(\ac,\hc) \in \mF$. For an address $a$, we define
		$C_a \define \{ c \in \atexttt{Coin}\mid \hc(a,c)\}$. Further, we define three types of error coins:
		\begin{enumerate}
			\item $M_{\text{Inact}} \define \{ c \in \atexttt{Coin}\mid \neg\ac(c) \land \exists a.\; c \in C_a \} $,
			\item $M_{\text{Least}} \define \{c \in \atexttt{Coin}\mid \ac(c) \land \forall a.\; c \notin C_a  \}$ and
			\item $M_{\text{Most}} \define \{ c \in \atexttt{Coin}\mid \,\exists a,b.\; a\not\approx b \land c \in C_a \land c \in C_b \} $
		\end{enumerate}
		and one type of error pairs $ M_{\text{Pairs}}\define\{(a,c)\mid  c \in C_a \land \exists b.\; a \not\approx b \land c \in C_b \}$
		to refine the number of mistakes caused by the violation of (I3).\\
		The number of violations of (I2) is now $V_{\leq}\define |M_\text{Least}|$.
		and the  number of violations of (I1) and (I3) is defined as $V_{\geq}\define|M_{\text{Inact}}|+|M_{\text{Pairs}}|-|M_{\text{Most}}|$.
	\end{definition}

	\begin{lemma}
	The relation $\sim$ is an equivalence relation on $\mF$.
\end{lemma}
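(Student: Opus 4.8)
The plan is to recognise $\sim$ as the \emph{kernel} of a map that assigns to each pair a tuple of numerical invariants, since a relation of the form ``$p_1$ and $p_2$ agree on the values of some fixed function'' is automatically an equivalence relation. Concretely, for a pair $p = (\hc,\ac) \in \mF$ write $C_a^p \define \braces{ c \in \Coin \mid \hc(a,c) }$ and let $V_{\leq}(p)$, $V_{\geq}(p)$ be the quantities defined above; these depend only on the sets $M_{\text{Inact}}, M_{\text{Least}}, M_{\text{Most}}, M_{\text{Pairs}}$, each of which is built solely from $\hc$ and $\ac$. I would then define the invariant map
\[
\Phi(p) \define \parens{ \abs{\ac}, \; \big(\abs{C_a^p}\big)_{a\in\Addr}, \; V_{\leq}(p), \; V_{\geq}(p) }.
\]
Unwinding the four clauses of the definition of $\sim$, one sees at once that $p_1 \sim p_2$ holds exactly when $\Phi(p_1) = \Phi(p_2)$: clauses~(1)--(2) assert agreement on the first two components, while clauses~(3)--(4) are precisely $V_{\leq}(p_1) = V_{\leq}(p_2)$ and $V_{\geq}(p_1) = V_{\geq}(p_2)$.

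Before invoking the kernel argument I would first check that each component of $\Phi$ is a genuine, well-defined function of the \emph{single} pair $p$. The cardinalities $\abs{\ac}$ and $\abs{C_a^p}$, as well as $V_{\leq}(p) = \abs{M_{\text{Least}}}$, are manifestly determined by $p$. The only point needing care is that the subtraction in $V_{\geq}(p) = \abs{M_{\text{Inact}}} + \abs{M_{\text{Pairs}}} - \abs{M_{\text{Most}}}$ yields a well-defined (in particular non-negative) integer: every coin $c \in M_{\text{Most}}$ is owned by at least two distinct addresses, and hence contributes at least two pairs $(a,c)$ to $M_{\text{Pairs}}$, so that $\abs{M_{\text{Pairs}}} \geq 2\abs{M_{\text{Most}}} \geq \abs{M_{\text{Most}}}$. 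Thus the difference is a non-negative integer depending only on $p$, and $\Phi$ is well defined on all of $\mF$.

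With $\Phi$ in hand the lemma is immediate. Reflexivity follows from $\Phi(p) = \Phi(p)$; symmetry follows from the symmetry of equality, since $\Phi(p_1) = \Phi(p_2) \Rightarrow \Phi(p_2) = \Phi(p_1)$; and transitivity follows from the transitivity of equality on tuples, since $\Phi(p_1) = \Phi(p_2)$ together with $\Phi(p_2) = \Phi(p_3)$ gives $\Phi(p_1) = \Phi(p_3)$. There is no genuine obstacle in this statement: the only step requiring attention is confirming the well-definedness of $V_{\geq}$ through the inequality $\abs{M_{\text{Pairs}}} \geq \abs{M_{\text{Most}}}$, after which the heart of the claim reduces to the observation that equality of invariant tuples is an equivalence relation.
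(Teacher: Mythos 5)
Your proof is correct and takes essentially the same route as the paper's: both reduce $\sim$ to equality of the numerical invariants $\abs{\ac}$, $\abs{C_a}$, $V_{\leq}$, $V_{\geq}$, and inherit reflexivity, symmetry, and transitivity from the corresponding properties of $=$; your kernel-of-$\Phi$ packaging is merely a tidier phrasing of the paper's clause-by-clause verification. Your extra check that $\abs{M_{\text{Pairs}}} \geq \abs{M_{\text{Most}}}$, making $V_{\geq}$ a well-defined non-negative integer, is a sound observation the paper leaves implicit, though equality of possibly negative integers would already suffice for the equivalence-relation claim.
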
 

\begin{proof}
	\begin{itemize}
		\item Reflexivity of $\sim$.\\
		Let $(\hc,\ac) \in \mF$, then clearly $|\ac|=|\ac|$, for all $a$ we have $|C_a|=|C_a|$ and also $V_{\leq}=V_{\leq}$, $V_{\geq}=V_{\geq}$. Hence $(\hc,\ac)\sim(\hc,\ac)$.
		
		\item Symmetry of $\sim$.\\
		Let $p_1, p_2 \in \mF$ such that $p_1 \sim p_2$, then due to symmetry of $=$ also $p_2 \sim p_1$ holds.
		
		\item Transitivity of $\sim$.\\
		Let $p_1,p_2,p_3\in \mF$, such that $p_1 \sim p_2$ and $p_2 \sim p_3$ then due to the transitivity of $=$ also $p_1 \sim p_3$ holds. \qed
	\end{itemize}
\end{proof}

The translation function $f$ can now be defined as a function that
assigns every pair $(\bal,\msum)$ a class from $\mF/_{\sim}$.

\begin{definition}[Translation Function $f$]\label{def:f_welldef}  
	The function 
	$f: \N^{\Addr} \times \N \to \mF/_{\sim}$, 
	$(\bal,\msum) \mapsto [(\hc,\ac)]_{\sim}$, 
	is defined to satisfy the following conditions for an arbitrary
	$(\hc,\ac) \in [(\hc,\ac)]_{\sim}$.
	\begin{enumerate}
		\item 
			$\msum = \abs{ \ac }\;.$
		\item 
			For every $a \in \Addr$
			it holds 
			$\bal(a)= \abs{ \braces{ c \in \Coin \mid \hc(a,c) } }$. 
		\item 
			At least one of 
			$V_{\leq} = 0$ 
			and 
			$V_{\geq} = 0$ 
			holds.
	\end{enumerate}
\end{definition}

The function $f$ is well-defined and injective,
ensuring soundness and completeness of our \SL{} encodings relative to $f$.

\completeImplicitSL*

\begin{proof}
	The proof is organized in 4 steps. The first step provides a technicality that is need for the steps 2 and 3 and finally in the last step the claim is proven.
	\begin{enumerate}
		\item Consider any pair $(\hc,\ac) \in \mF$ with $V_{\leq}= V_{\geq}=0$. Then, since there are no coins nor addresses violating the invariants here, we thus have  $\bigcup_{a \in \texttt{Address}} C_a = \ac$ and all the $C_a$ are disjoint. Thus, $\sum_{a \in \texttt{Address}} |C_a|= |\bigcup_{a \in \texttt{Address}} C_a |=|\ac|$.
		\item Now  we only assume $V_{\geq}=0$. Consider \[M_{\text{Least}} = \{c \in \texttt{Coin}\mid \ac(c) \land \forall a.\; c \notin C_a  \} \subseteq \ac\;. \] Then the pair $p'=(\hc, \ac\backslash M_{\text{Least}})$ satisfies $V'_{\leq}=V'_{\geq}=0$, because all the coins were not active originally are active now and we did not change the any of the other mistake sets. From the first step we now get \[\sum_{a \in \texttt{Address}} |C'_a| =|\ac\backslash M_{\text{Least}}|\] and therefore \[ \sum_{a \in \texttt{Address}} |C_a| =|\ac|+V_{\geq}-V_{\leq}\;.\]
		\item Similarly to the second step we now only assume $V_{\leq}=0$. By definition it holds that $M_{\text{Inact}}\cap \ac = \emptyset$, $M_{\text{Pairs}}\subseteq \hc$ and $ M_{\text{Most}} \subseteq \ac \cup M_{\text{Inact}} $. We now consider the pair \[p''=(\hc\backslash M_{\text{Pairs}}, (\ac \cup M_{\text{Inact}})\backslash M_{\text{Most}}  )\;. \] Clearly, there is not any coin assigned to two different addresses in $p''$. However all the coins that were in two different addresses before are now not assigned to any address, this is why these coins have to be removed from $\ac \cup M_{\text{Inact}}$. Also there are no coins that are active without belonging to any address. Further, all active coins still are assigned to an address as the problematic ones have been removed.
		Hence, $V''_{\leq}=V''_{\geq}=0$. Now, we can again apply the result of the first step to get \[\sum_{a \in \texttt{Address}} |C_a|-|M_{\text{Pairs}}| =|\ac| + |M_{\text{Inact}}| - |M_{\text{Most}}|\] and thus \[ \sum_{a \in \texttt{Address}} |C_a| =|\ac|+V_{\geq}-V_{\leq}\;.\]
		
		\item Using the results of the previous two steps we can now prove the theorem. Let $(\bal, \msum) \in \N^{\texttt{Address}}\times\N$ and $(\hc,\ac) \in f(\bal,\texttt{sum})$, then $V_{\leq}=0$ or $V_{\geq}=0$. In both cases it follows $\sum_{a \in \texttt{Address}} |C_a| =|\ac|+V_{\geq}-V_{\leq}$ and therefore by definition of $f$ it holds $\sum_{a \in \texttt{Address}} \bal(a) =\texttt{sum} +V_{\geq}-V_{\leq}$.
		Assume now $\sum_{a \in \texttt{Address}} \bal(a) =\texttt{sum} $. It follows $V_{\geq}-V_{\leq}=0$ and since we know that one of these values has to be zero by definition of $f$ it holds $V_{\leq}= V_{\geq}=0$. But this statement is equivalent to $\texttt{inv}\,(\hc{},\;\ac{})$. For the other direction assume $\texttt{inv}\,(\hc{},\;\ac{})$, this implies $V_{\leq}= V_{\geq}=0$ and hence $\sum_{a \in \texttt{Address}} \bal(A) =\texttt{sum} $. This conlcudes the  proof.
	\end{enumerate}
	 \qed
\end{proof}

From \Cref{sound_comp}, also \Cref{thm:enc:soundness} follows immediately by stating the properties of the function $f$.

\subsection{Soundness of Explicit \SL{} Encodings} 
\soundExplicitSL*

\begin{proof}
 Consider 	$(\bal,\texttt{sum})$, $(\hc,\ac)$ as in the theorem. 
 Then by property (Ax1) and the codomain of $\mcount$ we have $\ac = \{c \in \texttt{Coin}\mid \mcount(c) \in [1,\texttt{sum}] \}$.
  Since $\mcount$ is bijective, it holds \[|\ac|=| \{c \in \texttt{Coin}\mid \mcount(c) \in [1,\texttt{sum}] \}|=\texttt{sum}\;.\] Similarly, by (Ax2) and the codomain of $\ind(a,.)$ we know $C_a = \{c \in \texttt{Coin}\mid \ind(a,c) \in [1,\bal(a)] \}$. As $\ind(a,.)$ is bijective as well it follows \[|C_a|=| \{c \in \texttt{Coin}\mid \ind(a,c) \in [1,\bal(a)] \}|=\bal(a)\;.\]
 Hence  $(\hc,\ac) \in f(\bal,\texttt{sum})$ and by \Cref{sound_comp}, we get \linebreak $\sum_{a \in \Addr} \bal(a)=\texttt{sum}$, since  $\texttt{inv}\,(\hc{},\;\ac{})$. \qed
\end{proof}

\subsection{No Loss of Generality: Restricting \ind{} and \mcount}\label{app:onecount}
We want to prove that we do not lose any generality when considering mutual \mcount{} and \ind{} functions for the old- and the new-world. In order to do so we need the following preliminary lemmas. 

\begin{lemma} \label{element_exists}
	Given two pairs $h_x\define(\xbal,\xsum)$, $h_y\define(\ybal,\ysum)$ with $\sum_{a \in \Addr} \zbal(A)=\zsum$, for $\atexttt{z} \in \{\atexttt{old},\atexttt{new}\}$  and $\xsum \leq \ysum$. Further, let $p_x=(\xhascoin,\xactive)\in f(h_x)$.\\
	Then there exists $p_y=(\yhascoin,\yactive) \in f(h_y)$ satisfying the following properties:
	\begin{enumerate}
		\item $\xactive \subseteq \yactive$.
		\item $\xbal(a) \leq \ybal(a) \; \Rightarrow\; C_{x,a} \subseteq C_{y,a}$.
		\item $\ybal(a) \leq \xbal(a) \; \Rightarrow\; C_{y,a} \subseteq C_{x,a}$.
	\end{enumerate}
\end{lemma}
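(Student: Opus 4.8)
\textbf{Proof proposal for \Cref{element_exists}.}

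The plan is to construct the desired pair $p_y = (\yhascoin, \yactive)$ explicitly from $p_x$ by ``growing'' or ``shrinking'' the sets of owned coins address by address, while keeping the set of active coins a superset of $\xactive$. Since $p_x \in f(h_x)$, we know from the definition of $f$ that at least one of $V_{\leq} = 0$ or $V_{\geq} = 0$ holds for $p_x$; I would first reduce to the ``clean'' case where \emph{both} vanish — i.e.\ where $p_x$ actually satisfies $\inv(\xhascoin, \xactive)$ — by noting that $f$ only depends on the pair $(\xbal, \xsum)$, so I may pick the representative of the $\sim$-class that is invariant-respecting. (This is the step where I lean on \Cref{sound_comp} / the well-definedness of $f$: every class contains a member with $V_{\leq} = V_{\geq} = 0$.) With that reduction, $\xactive = \bigcup_a C_{x,a}$ as a disjoint union, $|\xactive| = \xsum$, and $|C_{x,a}| = \xbal(a)$ for every $a$.

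Next I would build $p_y$. Because $\Coin$ is infinite (the encoding never bounds it) there are infinitely many coins outside $\xactive$ to draw fresh coins from. For each address $a$: if $\ybal(a) \geq \xbal(a)$, set $C_{y,a} \supseteq C_{x,a}$ by adjoining $\ybal(a) - \xbal(a)$ fresh, pairwise-distinct coins (chosen disjointly across addresses and disjoint from $\xactive$); if $\ybal(a) < \xbal(a)$, set $C_{y,a} \subseteq C_{x,a}$ by deleting $\xbal(a) - \ybal(a)$ coins from $C_{x,a}$. In either case $|C_{y,a}| = \ybal(a)$. Define $\yhascoin$ to be the ownership relation of these $C_{y,a}$, and define $\yactive \define \xactive \cup \bigcup_a C_{y,a}$. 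By construction $\yhascoin$ is functional (each coin owned by at most one $a$), every active-under-$\yhascoin$ coin is owned, and every owned coin lies in $\yactive$, so $\inv(\yhascoin, \yactive)$ holds; in particular $p_y \in \mF$. Properties (2) and (3) of the lemma are immediate from the construction, and (1) holds because $\yactive$ was defined to contain $\xactive$. It remains to check $p_y \in f(h_y)$: we need $|\yactive| = \ysum$ and $|C_{y,a}| = \ybal(a)$. The latter is by construction. For the former, $\yactive = \xactive \,\dot\cup\, \bigl(\bigcup_a C_{y,a} \setminus \xactive\bigr)$, but actually the cleanest route is: since we are free to choose which coins we adjoin, choose the fresh coins so that $\bigcup_a C_{y,a} \supseteq \xactive$ whenever $\ysum \geq \xsum$ — concretely, first use up (some of) the coins of $\xactive$ that got ``freed'' by shrinking some addresses to re-allocate to the growing ones, and only then pull in genuinely new coins. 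Then $\yactive = \bigcup_a C_{y,a}$, a disjoint union of sets of sizes $\ybal(a)$, hence $|\yactive| = \sum_a \ybal(a) = \ysum$, using the hypothesis $\sum_a \ybal(a) = \ysum$.

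The main obstacle I anticipate is exactly this bookkeeping at the end: making the two constraints ``$\xactive \subseteq \yactive$'' and ``$|\yactive| = \ysum$'' simultaneously satisfiable. Shrinking some addresses frees coins from $\xactive$ that would then sit in $\yactive$ without belonging to any address (a violation of (I2), pushing $V_{\leq} > 0$ and hence $p_y \notin f(h_y)$), so those freed coins must be re-used by the addresses that grow. A clean way to organize this: process all ``shrinking'' addresses first, collecting the pool $R$ of released coins; then process the ``growing'' addresses, drawing first from $R$ and then, once $R$ is exhausted, from $\Coin \setminus \xactive$. The arithmetic identity $\sum_a \ybal(a) = \ysum \geq \xsum = \sum_a \xbal(a)$ guarantees that the total deficit of the growing addresses is at least the total surplus released by the shrinking ones, so every released coin does get re-homed, and no coin of $\xactive$ is left stranded — giving $\xactive \subseteq \bigcup_a C_{y,a} = \yactive$ and $|\yactive| = \ysum$ as needed. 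I would also remark that the $\xsum \leq \ysum$ hypothesis is used \emph{only} here (to know the pool of released coins can be absorbed); the per-address inclusions (2),(3) hold regardless.
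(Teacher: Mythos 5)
Your construction is essentially the paper's: the paper likewise sets $\yactive \define \xactive \,\dot{\cup}\, S$ with $\abs{S}=\ysum-\xsum$ fresh coins and then repartitions $\yactive$ subject to the per-address inclusions (2)--(3), closing the argument with exactly your counting identity --- the surplus released by the shrinking addresses plus the $\ysum-\xsum$ new coins equals the deficit of the growing ones, because $\sum_{a}\zbal(a)=\zsum$ for both worlds. Your explicit recycling of the freed coins into the growing addresses is, if anything, spelled out more carefully than the paper's one-line ``therefore such a partition exists.''

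The one step to push back on is your opening reduction. You justify ``may assume $V_{\leq}=V_{\geq}=0$ for $p_x$'' by claiming you can pick an invariant-respecting representative of the class $f(h_x)$. That justification fails: the violation counts $V_{\leq}$ and $V_{\geq}$ are part of the definition of $\sim$, so every member of a given $\sim$-class has the same pair of counts --- a class either consists entirely of invariant-respecting pairs or contains none --- and in any case the lemma's conclusion (properties (1)--(3), which mention $\xactive$ and the $C_{x,a}$) is about the particular $p_x$ you were handed, so you are not free to swap it for another representative. The fact you need is nonetheless true, for a different reason: the lemma hypothesizes $\sum_{a\in\Addr}\xbal(a)=\xsum$, so \Cref{sound_comp} applied to $h_x$ and the given $p_x\in f(h_x)$ yields $\inv(\xhascoin,\xactive)$ directly, hence $V_{\leq}=V_{\geq}=0$ and the clean disjoint-union picture you rely on. With that substitution the rest of your argument goes through.
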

\begin{proof} We proceed by constructing $p_y=(\yhascoin,\yactive) \in f(h_y)$ such that it satisfies properties (1)-(3).
	To fulfill property (1), let $\yactive\define$ \linebreak $\xactive \cup S$, where $S \in \texttt{Coin}\backslash \xactive$ and $|S|=\ysum - \xsum$. Then also $|\yactive|=\ysum$ holds. To construct the $C_{y,a}$ properly, the set $\yactive$ has to be partitioned, since $p_y \in f(h_y)$ and thus $\texttt{inv}(\yhascoin, \yactive)$. For every $a$ with $\xbal(a) \leq \ybal(a)$ we require $C_{x,a} \subseteq C_{y,a}$. Therefore there are \[\sum_{a:\; \xbal(a)>\ybal(a)} \xbal(a)-\ybal(a) \] additional spare coins. For $a$ with  $\xbal(a) \geq \ybal(a)$ we want $C_{y,a} \subseteq C_{x,a}$, which leaves us with \[\sum_{a:\; \xbal(a)<\ybal(a)} \ybal(a)-\xbal(a) \] missing coins. Hence, the difference is 
	\begin{align*}
		&\;& \ysum-\xsum& &+\quad& \sum\limits_{\substack{a \in \Addr, \\ \xbal(a)>\ybal(a) }} \xbal(a)-\ybal(a)\\
		&\;& \quad& &-\quad& \sum\limits_{\substack{a \in \Addr, \\ \ybal(a)>\xbal(a) }} \ybal(a)-\xbal(a) \quad .
	\end{align*}
By replacing $\zsum$ by $\sum_{a \in \Addr} \zbal(a)$ all the summands with either \linebreak $\ybal(a)>\xbal(a)$ or $\xbal(a)>\ybal(a)$ disappear and the remaining value is 0. Therefore, such a partition of $\yactive$ exists and thus, there exists $p_y=(\yactive,\yhascoin) \in f(h_y)$ satisfying (1),  (2) and (3). \qed
\end{proof}

\begin{lemma}\label{step1}
	Given two pairs $h_x, h_y$ with  $\sum_{a \in \Addr} \zbal(a)=\zsum$, $p_z \in f(h_z)$, for $\atexttt{z} \in \{\atexttt{x},\atexttt{y}\}$  and $\xsum \leq \ysum$ as in Lemma \ref{element_exists}. Then, there exist a bijective function $\mcount: \atexttt{Coin} \to \N^+$ with $\mcount(\zact)=[1,\zsum]$ and bijective functions $\ind(a,.): \atexttt{Coin} \rightarrow \N^+$, with $\ind(C_{z,a})=[1,\zbal(a)]$, for $\atexttt{z} \in \{\atexttt{x},\atexttt{y}\}$, $a \in \Addr$.
\end{lemma}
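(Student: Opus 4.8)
The plan is to build each of the two functions by gluing together three consecutive ``layers'' of coins, exploiting the nesting structure that \Cref{element_exists} already provides. Let $h_x,h_y,p_x,p_y$ be as in \Cref{element_exists}; in particular $\xactive \subseteq \yactive$, and for every $a \in \Addr$ the sets $C_{x,a}$ and $C_{y,a}$ are nested, with $C_{x,a} \subseteq C_{y,a}$ when $\xbal(a) \leq \ybal(a)$ and $C_{y,a} \subseteq C_{x,a}$ otherwise. From \Cref{def:f_welldef} we also have $\abs{\zact} = \zsum$ and $\abs{C_{z,a}} = \zbal(a)$ for $\atexttt{z} \in \braces{\atexttt{x},\atexttt{y}}$, and throughout we use that $\Coin$ is countably infinite, so that $\Coin \setminus \yactive$ and each $\Coin \setminus C_{z,a}$ remain countably infinite.

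To define $\mcount$, observe that $\Coin$ is the disjoint union of $\xactive$, $\yactive \setminus \xactive$, and $\Coin \setminus \yactive$, of sizes $\xsum$, $\ysum - \xsum$, and countably infinite respectively. Pick any bijection from $\xactive$ onto $[1,\xsum]$, any bijection from $\yactive \setminus \xactive$ onto the next block $\braces{\xsum+1,\dots,\ysum}$, and any bijection from $\Coin \setminus \yactive$ onto the tail $\braces{\ysum+1,\ysum+2,\dots}$; their union is a single bijection $\mcount \colon \Coin \to \N^+$ with $\mcount(\xactive) = [1,\xsum]$ and $\mcount(\yactive) = \mcount(\xactive) \cup \mcount(\yactive \setminus \xactive) = [1,\ysum]$, i.e.\ $\mcount(\zact) = [1,\zsum]$ for both $\atexttt{z}$.

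To define $\ind$, fix $a \in \Addr$ and repeat the same construction with $C_{x,a},C_{y,a}$ in place of $\xactive,\yactive$: letting $D_a \subseteq E_a$ be the smaller and larger of the two (so $\abs{D_a} = \min(\xbal(a),\ybal(a))$ and $\abs{E_a} = \max(\xbal(a),\ybal(a))$), glue a bijection from $D_a$ onto $[1,\abs{D_a}]$, a bijection from $E_a \setminus D_a$ onto $\braces{\abs{D_a}+1,\dots,\abs{E_a}}$, and a bijection from $\Coin \setminus E_a$ onto $\braces{\abs{E_a}+1,\abs{E_a}+2,\dots}$. This gives a bijection $\ind(a,\cdot) \colon \Coin \to \N^+$ mapping $D_a$ onto $[1,\abs{D_a}]$ and $E_a$ onto $[1,\abs{E_a}]$, hence $\ind(C_{x,a}) = [1,\xbal(a)]$ and $\ind(C_{y,a}) = [1,\ybal(a)]$ in both cases; collecting these over all $a$ yields the required $\ind \colon \Addr \times \Coin \to \N^+$.

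I do not expect a real obstacle: the substantive content is \Cref{element_exists}, which already places $p_x$ and $p_y$ into the nested configuration, after which the rest is routine cardinality bookkeeping. The single point worth spelling out is that $\Coin$ must be (countably) infinite, so that the tail layers $\Coin \setminus \yactive$ and $\Coin \setminus E_a$ are genuinely in bijection with the cofinite tails of $\N^+$ --- without this, a bijection onto $\N^+$ would not exist in the first place.
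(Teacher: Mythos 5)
Your proof is correct and follows essentially the same route as the paper's: use the nesting $\xactive\subseteq\yactive$ and $C$-set containments guaranteed by \Cref{element_exists} to map the inner set onto an initial segment, the outer set onto the next block, and then extend bijectively onto $\N^+$. Your explicit remark that $\Coin$ must be countably infinite for the tail extension to exist is a detail the paper's proof leaves implicit, but otherwise the two arguments coincide.
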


\begin{proof}
	At first, we construct \mcount{}. We know $\yactive = \xactive\, \dot{\cup} \,S$, where $|\yactive|=\ysum$, $|\xactive|=\xsum$ and $|S|=\ysum-\xsum$. Thus, we can easily find an injective function with $\mcount(\xactive)=[1,\xsum]$ and $\mcount(S)=[\xsum+1,\ysum]$. Further, this function can be bijectively extended onto $\N^+$. Similarly, for the addresses $a$, we construct $\ind(a,.)$ in the following way. Since we know $|C_{z,a}|=\zbal(a)$, we can find an injective function with $\ind(a,C_{x,a})=[1,\xbal(a)]$. For all $a$, where $\ybal(a) \leq \xbal(a)$, we can assume that $\ind(a, C_{y,a})=[1, \ybal(a)]$, as $C_{y,a} \subseteq C_{x,a}$. For these addresses $a$, $\ind(a,.)$ can now also be extended bijectively onto $\N^+$. Finally, for $a$ with $\xbal(a) \leq \ybal(a)$ we know $C_{x,a} \subseteq C_{y,a}$ and can thus assume $\ind(a,C_{y,a}\backslash C_{x,a})=$\linebreak$[\xbal(a)+1, \ybal(a)]$. Now also these $\ind(a,.)$ can be extended bijectively onto $\N^+$. \qed
\end{proof}

Having these two lemmas at hand we can now state and prove the following result.

\begin{theorem}
	Given any two pairs $h_o\define(\oldbal,\oldsum)$,  \linebreak $h_n\define(\newbal,\newsum )$ with $\sum_{a \in \atexttt{Address}} \zbal(a)=\zsum$, \linebreak for $\atexttt{z} \in \{\atexttt{old},\atexttt{new}\}$. There exist bijective functions $\mcount: \atexttt{Coin} \rightarrow \N^+$ and $\ind(a,.): \atexttt{Coin} \rightarrow \N^+$, for every $a \in \atexttt{Address}$  such that there are
	$p_o=$\linebreak $(\oldact, \oldhc) \in f(h_o)$, $p_n=(\newact, \newhc)$\linebreak $\in f(h_n)$ with
	\begin{align} \label{fml1}
		\forall &c.\; \big(\zact(c) \leftrightarrow \mcount(c) \leq \zsum\big) \quad \text{and} \\
		 \label{fml2} \forall &a, c.\; \big(\zhc(a,c) \leftrightarrow \ind(a,c) \leq \zbal(a)\big)  \; .
	\end{align}
\end{theorem}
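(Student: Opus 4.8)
The plan is to obtain this theorem as an assembly of the two preceding lemmas, \ref{element_exists} and \ref{step1}, after a symmetry reduction. First I would observe that the conclusion is symmetric under exchanging the \texttt{old-} and \texttt{new-} worlds: swapping them only relabels the witnesses $\mcount$ and $\ind(a,\cdot)$, and formulas \eqref{fml1}--\eqref{fml2} are stated uniformly over $\texttt{z}\in\{\texttt{old},\texttt{new}\}$. Hence, without loss of generality, one may assume $\oldsum\le\newsum$ and set $h_x:=h_o$, $h_y:=h_n$ (otherwise exchange the two pairs).

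Next, since $f$ is a total function into $\mF/_{\sim}$ (\Cref{def:f_welldef}), the class $f(h_o)$ is non-empty, so I would fix an arbitrary $p_o=(\oldhc,\oldact)\in f(h_o)$. Applying \ref{element_exists} with $h_x=h_o$, $h_y=h_n$ and $p_x=p_o$ produces a pair $p_n=(\newhc,\newact)\in f(h_n)$ with $\oldact\subseteq\newact$ and, for every address $a$, the containment $C_{o,a}\subseteq C_{n,a}$ when $\oldbal(a)\le\newbal(a)$ and $C_{n,a}\subseteq C_{o,a}$ otherwise, where $C_{z,a}=\braces{c\in\Coin\mid \zhc(a,c)}$. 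Feeding $p_o$, $p_n$ and these containments into \ref{step1} then yields a bijection $\mcount:\Coin\to\N^+$ with $\mcount(\zact)=[1,\zsum]$ and, for each $a\in\Addr$, a bijection $\ind(a,\cdot):\Coin\to\N^+$ with $\ind(a,C_{z,a})=[1,\zbal(a)]$, for $\texttt{z}\in\{\texttt{old},\texttt{new}\}$.

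Finally I would read off the axioms. Fixing $\texttt{z}$ and $c\in\Coin$: since $\mcount$ is injective with $\mcount(\zact)=[1,\zsum]$, membership $c\in\zact$ is equivalent to $\mcount(c)\in[1,\zsum]$, and because $\mcount(c)\in\N^+$ the lower bound $1\le\mcount(c)$ is automatic, so this is equivalent to $\mcount(c)\le\zsum$ --- which is exactly \eqref{fml1}. The identical argument with $\ind(a,\cdot)$, $C_{z,a}$ and $\zbal(a)$ replacing $\mcount$, $\zact$ and $\zsum$ gives \eqref{fml2}, completing the proof.

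The genuine work is already discharged inside the two lemmas, so the main obstacle is not the assembly above but the construction in \ref{element_exists}: one must verify that, when enlarging each $C_{x,a}$ to $C_{y,a}$ where the balance grows and shrinking it where the balance drops, the total surplus of coins required matches the available surplus exactly, a counting identity that hinges on $\sum_{a}\zbal(a)=\zsum$ holding for \emph{both} $\texttt{z}$ simultaneously. The secondary bookkeeping obstacle, handled in \ref{step1}, is to show that the partially prescribed injections on $\xactive$, $S$ and the $C_{z,a}$ can be glued and then extended to genuine bijections of $\Coin$ onto $\N^+$. Given those two facts, the present theorem is routine, the only delicate point in its own proof being the symmetry reduction of the first step.
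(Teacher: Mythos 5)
Your proposal is correct and follows essentially the same route as the paper's proof: choose the world with the smaller sum as $h_x$ (your WLOG symmetry reduction), fix an arbitrary element of $f(h_x)$, invoke Lemma \ref{element_exists} to obtain a compatible element of $f(h_y)$ and Lemma \ref{step1} to obtain the bijections, and then derive \eqref{fml1}--\eqref{fml2} from injectivity, the image conditions $\mcount(\zact)=[1,\zsum]$, $\ind(a,C_{z,a})=[1,\zbal(a)]$, and the codomain $\N^+$. Your closing remarks on where the real work lies (the counting identity in Lemma \ref{element_exists} and the gluing in Lemma \ref{step1}) accurately reflect the structure of the paper's argument.
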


\begin{proof}
	Let $h_x \in \{h_o, h_n\}$ such that $ \xsum = \min \left\{\oldsum, \newsum \right\}$. The other pair gets the prefix '\texttt{y-}' from now on. Also elements in $f(h_x)$ and $f(h_y)$ will be named accordingly. 
	
	\noindent Let $(\xactive,\xhascoin) \in f(h_x)$ arbitrary, $(\yhascoin,\yactive) \in f(h_y)$ as in Lemma \ref{element_exists} and $\mcount$, $\ind$ as in Lemma \ref{step1}. 
	
	\noindent Then it holds $\mcount(\zact)=[1,\zsum]$. Thus, $\forall c.\; c \in \zact \rightarrow \mcount(c) \in [1,\zsum]$. As \mcount{} is bijective and therefore injective, it follows $\forall c.\; c \notin \zact \rightarrow \mcount(c) \notin [1,\zsum]$. Together with the fact that the codomain of \mcount{} is $\N^+$ we get Formula \ref{fml1}.
	The analog argumentation works for Formula \ref{fml2}. We know $\ind(a,C_{z,a})=[1, \zbal(a)]$. Thus  $\forall c.\; c \in C_{z,a} \rightarrow \ind(a,c) \in [1,\zbal(a)]$. Also $\ind(a,.)$ is bijective and therefore injective which implies $\forall c.\; c \notin C_{z,a} \rightarrow \ind(a,c) \notin [1,\zbal(a)]$. By definition of $C_{z,a}$ and the codomain of $\ind(a,.)$ Formula \ref{fml2} holds. This concludes the proof. \qed
\end{proof}

\subsection{No Loss of Generality: Ordering of Coins} \label{app:ordering}
The property to prove is that whenever a block of coins has the same order in two of our counting functions and they are not crossing its crucial value (\texttt{sum}, \bal($a_i$) ), then we can assume that they are ordered in the same way.
In order to do so, we have to formalize the notion of the former invariants $\texttt{inv'}(\ind,\mcount)$. They are the formulas one gets by replacing \hc{} and \ac{} by \mcount{} and \ind{} according to (Ax1) and (Ax2) in the invariants (I1)-(I3).

\begin{definition} Let $\mcount{}: \atexttt{Coin} \rightarrow \N^+$ and $\ind: \atexttt{Address}\times\atexttt{Coin} \rightarrow \N^+$, then with the formulas
	\begin{align*}
	\tag{I1'} &\forall c.\; \; (\exists a.\; \ind(a,c) \leq \bal(a)) \leftrightarrow \mcount(c) \leq \atexttt{sum}\;, \\ 
	\tag{I2'} & \forall a, b, c.\; \; (\ind(a,c) \leq \bal(a) \land \ind(b,c) \leq \bal(b)) \rightarrow a \approx b 
	\end{align*}
we define $\texttt{inv'}(\ind,\mcount)\define \text{I1'}\land \text{I2'}$.
\end{definition}

\begin{theorem}
	Let
	\begin{enumerate}
		\item[(i)] $(\oldbal, \oldsum)$, $(\newbal, \newsum) \in \N^{\Addr} \times \N$,
		\item[(ii)] $\mcount: \atexttt{Coins} \rightarrow \N^+$ bijective,
		\item[(iii)]  $\ind: \atexttt{Address}\times\atexttt{Coin} \rightarrow \N^+$, such that $\ind(A,.)$ bijective for every $a$ and
		\item[(iv)]  $\texttt{inv'}(\ind,\mcount)$. 
	\end{enumerate}
	If now 
	\begin{itemize} 
		\item[(v)] $\forall c : \atexttt{Coin}.\; f_0(c) \in [l_0,u_0] \leftrightarrow f_1(c) \in [l_1, u_1]$, where $f_0,f_1 \in \{\mcount\} \cup \{\ind(a,.): a \in  \atexttt{Address}\}$ and
		\item[(vi)]  either  $u_i \leq x_i$ or $x_i < l_i$ for $i \in \{0,1\}$, where
		 \subitem  $x_i \define \bal(a_i)$, if $f_i=\ind(a_i,.)$, or 
		 \subitem$x_i\define \atexttt{sum}$, if $f_i=\mcount$,
	\end{itemize}
	 then there exist $\ind{'}$, $\mcount{'}$ with the properties $(i)$-$(vi)$ and $\forall c :\atexttt{Coin}.\; f_0'(c) \in [l_0,u_0] \rightarrow f_0'(c) +l_1 \approx f_1'(c) + l_0$.
\end{theorem}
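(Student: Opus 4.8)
The plan is to keep $f_0$ and every counting function other than $f_1$ fixed, and to rewire $f_1$ only inside the block of coins that the two intervals single out. Concretely, put $B \define \{c \in \Coin \mid f_0(c) \in [l_0,u_0]\}$; by hypothesis (v) this equals $\{c \in \Coin \mid f_1(c) \in [l_1,u_1]\}$. Since $f_0$ and $f_1$ are bijections onto $\N^+$, their restrictions to $B$ are bijections $B \to [l_0,u_0]$ and $B \to [l_1,u_1]$, so $\abs{B} = u_0-l_0+1 = u_1-l_1+1$; in particular $u_0-l_0 = u_1-l_1$ (if the intervals are empty the whole claim is vacuous). I set $f_0' \define f_0$, I modify the single counting function $f_1$ (which is either $\mcount$ or $\ind(a_1,\cdot)$ for one fixed $a_1$) into
\[
f_1'(c) \;\define\;
\begin{cases}
f_0(c) - l_0 + l_1 & \text{if } c \in B,\\
f_1(c) & \text{if } c \notin B,
\end{cases}
\]
and I leave all remaining counting functions unchanged; this determines $\mcount'$ and $\ind'$. (If $f_0$ and $f_1$ happen to be literally the same function, then (v) forces $l_0=l_1$, and the claimed property holds with no change at all; so I may assume $f_0 \neq f_1$, whence they act on independent arguments and can be altered separately.)

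First I would check that $f_1'$ is again a bijection $\Coin \to \N^+$. On $B$ it is the restriction of $f_0$ to $B$ followed by the shift $t \mapsto t-l_0+l_1$, which is a bijection $[l_0,u_0] \to [l_1,u_1]$ (note $f_0(c)-l_0+l_1 \ge l_1 \ge 1$, so the value stays in $\N^+$); on $\Coin \setminus B$ it coincides with $f_1$, which---by injectivity of $f_1$ together with $f_1(B) = [l_1,u_1]$---maps $\Coin \setminus B$ bijectively onto $\N^+ \setminus [l_1,u_1]$. Since $[l_1,u_1]$ and $\N^+ \setminus [l_1,u_1]$ partition $\N^+$, $f_1'$ is a bijection; hence (ii) and (iii) hold, and (i) is untouched. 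Property (v) is clear because $(f_1')^{-1}([l_1,u_1]) = B = (f_0')^{-1}([l_0,u_0])$, and (vi) survives because the intervals $[l_i,u_i]$ and the thresholds $x_i$ (built from the unchanged balances and sums) are the same as before.

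The step I expect to be the main obstacle is (iv), namely that $\inv'(\ind',\mcount')$ still holds. The key point is that (I1') and (I2')---and likewise the interval tests underlying (vi)---consult $f_1$ only through comparisons of the form ``$f_1(c) \le x$'', where $x$ ranges over the threshold(s) attached to $f_1$: namely $\bal(a_1)$ with $\bal \in \{\oldbal,\newbal\}$ if $f_1 = \ind(a_1,\cdot)$, or $\msum \in \{\oldsum,\newsum\}$ if $f_1 = \mcount$. I would show that $f_1$ and $f_1'$ induce the \emph{same} comparison on every coin: for $c \notin B$ this is immediate since $f_1'(c) = f_1(c)$; for $c \in B$ both $f_1(c)$ and $f_1'(c)$ lie in $[l_1,u_1]$, and by hypothesis (vi) this interval is contained either in $\{1,\dots,x\}$ (when $u_1 \le x$) or in $\{x+1,x+2,\dots\}$ (when $x < l_1$), so the test ``$\cdot \le x$'' is constant on $[l_1,u_1]$ and therefore agrees for $f_1(c)$ and $f_1'(c)$. (If both the old and the new threshold are relevant, this purity is needed for each of them, which is how I read (vi).) Consequently every set of ``active'' coins and every set of coins ``owned by $a$'' obtained via (Ax1)--(Ax2) is identical for $(\ind,\mcount)$ and $(\ind',\mcount')$; since moreover $f_0$ and every other counting function are untouched, (iv) transfers verbatim from the hypothesis. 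Finally, for any $c \in B$ (equivalently, any $c$ with $f_0'(c) \in [l_0,u_0]$) we get $f_1'(c) = f_0(c)-l_0+l_1 = f_0'(c)-l_0+l_1$, that is $f_0'(c)+l_1 \approx f_1'(c)+l_0$, which is precisely the additional property demanded.
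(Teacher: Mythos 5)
Your proposal is correct and follows essentially the same route as the paper's proof: you keep $f_0$ and all other counting functions fixed, redefine $f_1$ on the block $B=f_0^{-1}([l_0,u_0])=f_1^{-1}([l_1,u_1])$ by $f_1'(c)=f_0(c)-l_0+l_1$, and verify bijectivity, (v), and (iv) via the observation that (vi) makes the comparison against the threshold constant on $[l_1,u_1]$ --- exactly the paper's construction and key argument. Your packaging of the bijectivity check (shift on $B$, identity off $B$, partition of $\N^+$) is a little slicker than the paper's separate injectivity/surjectivity case analysis, but the substance is identical.
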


\begin{proof}
For simplicity assume $f_0=\ind(a_0,.)$ and $f_1=\ind(a_1,.)$. However, the proof works in an analog way for \mcount{}. We proceed by constructing \ind{'} and \mcount{'} and then showing properties  $(i)$-$(vi)$ and $\forall c : \texttt{Coin}.\; f_0'(c) \in [l_0,u_0] \rightarrow f_0'(c) +l_1 = f_1'(c) + l_0$ hold.\\
The function $\mcount' \define \mcount$. We construct \ind{'} the following way.
\begin{align*}
&\forall a,c.\; \;(a\not\approx a_1 \lor \ind(a_1,c) \notin [l_1,u_1]) \text{ let } \ind'(a,c)\define \ind(a,c) \quad\text{and}\\
& \forall c.\; \ind(a_1,c) \in [l_1,u_1] \text{ we define } \ind'(a_1,c) \define \ind(a_0,c) -l_0 + l_1 \;. 
\end{align*}
With these definitions the properties $(i)$, $(ii)$, $(vi)$ and $\forall c : \texttt{Coin}.\; f_0'(c) \in [l_0,u_0] \rightarrow f_0'(c) +l_1 = f_1'(c) + l_0$ obviously hold.

To show property $(v)$ we first fix a coin $c$ and assume $\ind'(a_0,c) \in [l_0,u_0]$. By definition of \ind{'} we know that also $\ind(a_0,c) \in [l_0,u_0]$ and then using $(v)$ we get $\ind(a_1,c) \in [l_1,u_1]$. Again by definition of \ind{'}, we have $ \ind'(a_1,c) = \ind(a_0,c) -l_0 + l_1$. Since $\ind(a_0,c) \in [l_0,u_0]$, it follows $ \ind'(a_1,c) = \ind(a_0,c) -l_0 + l_1 \in [l_1, u_0-l_0+l_1]$. From property $(v)$ and the bijectivity of the $\ind(a,.)$ functions, it follows that the intervalls have the same size and thus $u_0-l_0=u_1-l_1$. Therefore we end up having $\ind'(a_1,c) \in [l_1,u_1]$.\\
For the other direction of the equivalence, we fix $c$ and assume $\ind'(a_0,c) \notin [l_0,u_0]$. Then $\ind(a_0,c) \notin [l_0, u_0]$ and using $(v)$ we get $\ind(a_1,c) \notin [l_1,u_1]$ and thus \linebreak $\ind'(a_1,c)=  \ind(a_1,c) \notin [l_1,u_1]$. This concludes the proof of $(v)$.\\
Note that $(v)$ implies $\ind'(a_1,c) \in [l_1,u_1]$ if and only if $\ind(a_1,c) \in [l_1,u_1]$.

The proof of property $(iii)$ only requires showing $\ind'(a_1,.)$ is bijective. For showing injectivity, assume 
$\ind'(a_1,c)=\ind'(a_1,d)$. where $c \neq d$. Then clearly \linebreak$\ind(a_1,c) \neq \ind(a_1,d)$. Thus, one of $\ind(a_1,c), \ind(a_1,d) \in [l_1,u_1]$, since otherwise  $\ind'(a_1,c)=\ind(a_1,c)$ and $\ind'(a_1,d)=\ind(a_1,d)$. From the implication of $(v)$, we know that $\ind'(a_1,c)=\ind'(a_1,d) \in [l_1,u_1]$ and thus $\ind(a_1,c)$, $\ind(a_1,d) \in [l_1,u_1]$. Therefore $\ind(a_0,c) -l_0 + l_1 =\ind'(a_1,c)=\ind'(a_1,d) = \ind(a_0,d) -l_0 + l_1$ and thus $\ind(a_0,c)=\ind(a_0,d)$ which is a contradiction to the bijectivity of $\ind$.\\
For surjectivity of $\ind'(a_1,.)$ let $n \in \N^+$ be arbitrary. Assume $n \notin [l_1,u_1]$ first. Then by surjectivity of $\ind(a_1,.)$ it follows that there is a $c$ such that $\ind(a_1,c)=n$ and as $n \notin [l_1,u_1]$, we have $\ind'(a_1,c) =n$. Assume now $n \in [l_1,u_1]$, then by surjectivity of $\ind(a_0,.)$, there exists $c$ such that $\ind(a_0,c) = n-l_1+l_0$. With the same reasoning as above it follows $n-l_1+l_0 \in [l_0,u_0]$ and therefore we conclude $\ind'(a_1,c) = \ind(a_0,c)-l_0+l_1 = n$. This completes the surjectivity proof.

Finally, we have to prove $(iv)$. Once we have shown $\ind(a,c) \leq \bal(a)$ iff $\ind'(a,c) \leq \bal(a)$ for all $a$ and for all $c$ the property follows immediately, since $\mcount'=\mcount$. For all $a$, $c$ with one of $a\neq a_1$ or $\ind(a_1,c) \notin [l_1,u_1]$ the equivalence follows from the definition of $\ind'$. Consider now $a_1$ with a $c$ such that $\ind(a_1,c) \in [l_1,u_1]$. Using the implication of $(v)$, we get $\ind'(a_1,c) \in [l_1,u_1]$. Now using property $(vi)$,  we know that either $u_1 \leq \bal(a_1)$, in which case $\ind(a_1,c)$, $\ind'(a_1,c) \leq \bal(a_1)$, or $\bal(a_1) < l_1$ which implies $\ind(a_1,c)$, $\ind'(a_1,c) > \bal(a_1)$. This concludes the proof of property $(iv)$ and thus of the theorem.\qed
 \end{proof}

\newpage
\section{SMTLIB Encodings and Function Calls} \label[appendix]{app:codes}
In this section the concrete function calls for each table are listed together with one example encoding. 
 \subsection{Codes for Table \ref{tbl:total_surj_table}} 
 The precise function calls used are as follows.
 \begin{itemize}
 	\item For Z3 default:\\  \texttt{z3 -smt2 <file-name>}\\
 	\item For CVC4 default with full-saturate-quant:\\  \texttt{cvc4 --lang=smtlib2.6 --full-saturate-quant <file-name>}\\
 	\item  For Vampire default:\\  \texttt{vampire -input\underline{ }syntax smtlib2 <file-name>}\\
 \end{itemize}

\noindent {\bf Encoding of \texttt{mint1}, full surjectivity, with \texttt{total}, \texttt{int} version:}
 \begin{lstlisting}
(set-logic UFLIA)
(declare-sort Coin 0 )
(declare-sort Address 0)
(declare-fun old-sum () Int)
(declare-fun new-sum () Int)
(declare-fun old-total () Int)
(declare-fun new-total () Int)
(declare-fun a0 () Address)
(declare-fun old-bal (Address) Int)
(declare-fun new-bal (Address) Int)
(declare-fun count (Coin) Int)
(declare-fun ind (Coin Address) Int)

;###  axioms on sum and count  ###
;#sum non-negative
(assert (<= 0 old-sum))
(assert (<= 0 new-sum))
;#count positive
(assert (forall ((C Coin)) (< 0 (count C)) ) )
;#count injective
(assert (forall ((C Coin) (D Coin))
 (=> (= (count C) (count D)) (= C D) )))
;#count surjective
(assert (forall ((N Int))
 (=>
  (and (< 0 N) (or (<= N old-sum) (<= N new-sum)) )
  (exists ((C Coin)) (= (count C) N) ))))

;####  axioms on bal and ind ###
;#bal non-negative
(assert (forall ((A Address)) (<= 0 (old-bal A)) ))
(assert (forall ((A Address)) (<= 0 (new-bal A)) ))
;#ind positive
(assert (forall ((C Coin)(A Address)) (< 0 (ind C A)) ))
;#ind(A,.) injective
(assert (forall ((C Coin) (D Coin) (A Address))
 (=> (= (ind C A) (ind D A)) (= C D) )))
;#ind(A,.) surjective
(assert (forall ((N Int) (A Address))
 (=> (and (< 0 N) (or (<= N (new-bal A)) (<= N (old-bal A))))
	(exists ((C Coin)) (= (ind C A) N) ))))
	
;###  axioms between sum and bal  ###
; #ind leq bal iff count leq sum
(assert (forall ((C Coin)) (=
 (exists ((A Address)) (<= (ind C A) (old-bal A)) )
 (<= (count C) old-sum) )))
(assert (forall ((C Coin)) (=
 (exists ((A Address)) (<= (ind C A) (new-bal A)) )
 (<= (count C) new-sum) )))
;#only once ind leq bal
(assert (forall ((A Address)(B Address)(C Coin))
	(=> (and
         (<= (ind C A) (old-bal A) )
         (<= (ind C B) (old-bal B) ) )
        (= A B) )))
(assert (forall ((A Address)(B Address)(C Coin))
 (=> (and
      (<= (ind C A) (new-bal A) )
      (<= (ind C B) (new-bal B) ) )
     (= A B) )))

;###  transition and expected impact  ###
;#mint1
(assert (and 
   (= (new-bal a0) (+ (old-bal a0) 1))
   (forall ((A Address ))
      (=> (distinct A a0) (= (old-bal A) (new-bal A)) ))))
;#expected Impact
(assert (= (+ old-sum 1) new-sum) )

;### invariants ###
;#pre-invariant
(assert (= old-sum old-total) )
;#negated post-invariant
(assert (distinct new-sum new-total) )

(check-sat)
 \end{lstlisting}

 \subsection{Codes for Table \ref{tbl:overview_table}}
  The precise function calls used are as follows.
 \begin{itemize}
 	\item For Z3 default:\\  \texttt{z3 -smt2 <file-name>}\\
 	\item For CVC4 default with full-saturate-quant:\\ 
 	 \texttt{cvc4 --lang=smtlib2.6 --full-saturate-quant <file-name>}\\
 	\item  For Vampire default:\\ 
 	 \texttt{vampire -input\underline{ }syntax smtlib2 <file-name>} except the cases with superscripts. \begin{itemize}
 		\item  Superscript $*$:\\
 		 \texttt{vampire --input\underline{ }syntax smtlib2  <file-name>} \\
 		  \texttt{ --forced\underline{ }options "aac=none:add=large:afp=40000:afq=} \\ 
 		  \texttt{1.2:amm=off:anc=none: bd=off:fsr=off:gsp=input\underline{ }only} \\ \texttt{:inw=on:irw=on:lma=on:nm=64:nwc=1:sos=on:sp=}\\
 		   \texttt{occurrence:tha=off:} \texttt{updr=off:awr=5:s=1011:sa} \\ \texttt{=discount:ind=math"}\\
 		
 		\item  Superscript $\dagger$:\\ \texttt{vampire -input\underline{ }syntax smtlib2 -thsq on -thsqd 6}\linebreak \texttt{-thsqc 6 -thsqr 10,1  <file-name> } to prove the property and its lemmas, if any.\\
 		
 		\item  Superscript $\ddagger$:\\ options from $*$ to prove the inductive property in \texttt{ind\underline{ }property\underline{ }id.smt} and options from $\dagger$ to prove the actual property and its lemmas, if any.
 	\end{itemize}
 \end{itemize}
 
\noindent {\bf Encoding of \texttt{tranferN}, \texttt{uf} version:}
\begin{lstlisting}
(set-logic UFLIA)
(declare-sort Coin 0 )
(declare-sort Address 0)
(declare-fun act (Coin Int) Bool )
(declare-fun hc (Address Coin Int) Bool)
(declare-fun induct (Int) Bool)
(declare-const a1 Address)
(declare-const a2 Address)
(declare-const n Int)

;### inductive predicate definition ###
(assert (forall ((I Int)) 
 (= (induct I) 
    (and (forall ((C Coin))
          (= (exists ((A Address)) (hc A C I)) (act C I)) ))
         (forall ((A Address) (B Address) (C Coin))
          (=> (and (hc A C I) (hc B C I)) (= A B)) ))))) 
 
;### pre-invariants ###
;#inactive coins and at least one
(assert (forall ((C Coin))
(= (exists ((A Address)) (hc A C 0))
   (act C 0) )))
;#at most one
(assert (forall ((A Address)(B Address)(C Coin))
(=> (and (hc A C 0) (hc B C 0)) (= A B) )))

;### transition ###
(assert (forall ((I Int)) (=> 
 (<= 0 I)
 (and (forall ((D Coin)) (= (act D I) (act D (+ I 1)) ))
      (exists ((C Coin)) (and
       (hc a1 C I) (not (hc a2 C I))
       (not (hc a1 C (+ I 1))) (hc a2 C (+ I 1))
       (forall ((D Coin) (A Address))
        (=> (or (distinct C D)
                (and (distinct A a1) (distinct A a2)) )
            (= (hc A D (+ I 1)) (hc A D I)) ))))))))

;### negated post-invariant ###
(assert (and (<= 0 n) (not (induct n)) ))

(check-sat)
\end{lstlisting}

 \subsection{Codes for Table \ref{tbl:arith_table}}
 The precise function calls used are as follows.
\begin{itemize}
	\item For Z3 default:\\  \texttt{z3 -smt2 <file-name>}\\
	\item For CVC4 default with full-saturate-quant:\\  \texttt{cvc4 --lang=smtlib2.6 --full-saturate-quant <file-name>}\\
	\item  For Vampire default:\\  \texttt{vampire -input\underline{ }syntax smtlib2 <file-name>}\\
\end{itemize}

\noindent {\bf Encoding of  $\bal(a_0)+5$, $\bal(a_1)-3$, $\bal(a_2)-1$:}
\begin{lstlisting}
(set-logic UFLIA)
(declare-sort Coin 0 )
(declare-sort Address 0)
(declare-fun old-sum () Int)
(declare-fun new-sum () Int)
(declare-fun a0 () Address)
(declare-fun a1 () Address)
(declare-fun a2 () Address)
(declare-fun old-bal (Address) Int)
(declare-fun new-bal (Address) Int)
(declare-fun count (Coin) Int)
(declare-fun ind (Coin Address) Int)

;###  axioms on sum and count  ###
;#sum non-negative
(assert (<= 0 old-sum))
(assert (<= 0 new-sum))
;#count positive
(assert (forall ((C Coin)) (< 0 (count C)) ) )
;#count injective
(assert (forall ((C Coin) (D Coin))
 (=> (= (count C) (count D)) (= C D) )))
;#count instances of surjectivity
(assert (exists ((C Coin)) (= (count C) old-sum) ))
(assert (exists ((C Coin)) (= (count C) (+ old-sum 1)) ))
(assert (exists ((C Coin)) (= (count C) new-sum) ))

;####  axioms on bal and ind ###
;#bal non-negative
(assert (forall ((A Address)) (<= 0 (old-bal A)) ))
(assert (forall ((A Address)) (<= 0 (new-bal A)) ))
;#ind positive
(assert (forall ((C Coin)(A Address)) (< 0 (ind C A)) ))
;#ind(A,.) injective
(assert (forall ((C Coin) (D Coin) (A Address))
 (=> (= (ind C A) (ind D A)) (= C D) )))
;#ind(A,.) instances of surjectivity
(assert (exists ((C Coin)) (= (ind C a0) (old-bal a0)) ))
(assert (exists ((C Coin)) (= (ind C a0) (+(old-bal a0) 1))))
(assert (exists ((C Coin)) (= (ind C a0) (+(old-bal a0) 2))))
(assert (exists ((C Coin)) (= (ind C a0) (+(old-bal a0) 3))))
(assert (exists ((C Coin)) (= (ind C a0) (+(old-bal a0) 4))))
(assert (exists ((C Coin)) (= (ind C a0) (+(old-bal a0) 5))))
(assert (exists ((C Coin)) (= (ind C a1) (new-bal a1)) ))
(assert (exists ((C Coin)) (= (ind C a1) (+(new-bal a1) 1))))
(assert (exists ((C Coin)) (= (ind C a1) (+(new-bal a1) 2))))
(assert (exists ((C Coin)) (= (ind C a1) (+(new-bal a1) 3))))
(assert (exists ((C Coin)) (= (ind C a2) (new-bal a2)) ))
(assert (exists ((C Coin)) (= (ind C a2) (+(new-bal a2) 1))))

;###  axioms between sum and bal  ###
; #ind leq bal iff count leq sum
(assert (forall ((C Coin)) (=
 (exists ((A Address)) (<= (ind C A) (old-bal A)) )
 (<= (count C) old-sum) )))
(assert (forall ((C Coin)) (=
 (exists ((A Address)) (<= (ind C A) (new-bal A)) )
 (<= (count C) new-sum) )))
;#only once ind leq bal
(assert (forall ((A Address)(B Address)(C Coin))
 (=> (and
       (<= (ind C A) (old-bal A) )
       (<= (ind C B) (old-bal B) ) )
     (= A B) )))
(assert (forall ((A Address)(B Address)(C Coin))
 (=> (and
       (<= (ind C A) (new-bal A) )
       (<= (ind C B) (new-bal B) ) )
     (= A B) )))

;###  transition and negated impact  ###
;#plus5 minus3 minus1
(assert (and 
   (= (new-bal a0) (+ (old-bal a0) 5))
   (= (old-bal a1) (+ (new-bal a1) 3))
   (= (old-bal a2) (+ (new-bal a2) 1))
   (forall ((A Address)) (=>
     (and (distinct A a0) (distinct A a1) (distinct A a2))
     (= (old-bal A) (new-bal A)) ))))

;#negated Impact
(assert (distinct (+ old-sum 1) new-sum) )

(check-sat)
\end{lstlisting}

% that's all folks
\end{document}